\documentclass[11pt]{article}
\usepackage{amssymb,amsmath,amsthm,enumerate,subdepth,upgreek,float}
\usepackage{fullpage}
\usepackage{xcolor}
\usepackage{graphicx}
\usepackage[colorlinks]{hyperref}
\hypersetup{linkcolor=[rgb]{0,0.2,0.6}}
\hypersetup{citecolor=[rgb]{0,0.6,0.2}}
\hypersetup{urlcolor=[rgb]{.8,0,.8}}

\usepackage[noline, boxed, noend, linesnumbered]{algorithm2e}
\SetAlCapSkip{5pt}
\SetArgSty{textnormal}

\usepackage[style=alphabetic, sorting=nyt, minalphanames=3, maxbibnames=7]{biblatex}
\addbibresource{main.bib}
\usepackage{authblk}
\usepackage[titletoc,title]{appendix}
\usepackage[nameinlink]{cleveref}
\crefname{equation}{}{}

\newtheorem{theorem}{Theorem}[section]
\newtheorem{lemma}[theorem]{Lemma}

\newtheorem{definition}[theorem]{Definition}
\newtheorem{corollary}[theorem]{Corollary}
\newtheorem{proposition}[theorem]{Proposition}
\theoremstyle{definition}
\newtheorem*{remark}{Remark}

\let\originalleft\left
\let\originalright\right
\renewcommand{\left}{\mathopen{}\mathclose\bgroup\originalleft}
\renewcommand{\right}{\aftergroup\egroup\originalright}

\newcommand{\eps}{\varepsilon}
\newcommand{\mc}{\mathcal}

\newcommand{\Var}{\mathrm{\bf Var}}
\newcommand{\E}{\mathop{\mathbf{E}\mbox{}}\limits}
\newcommand{\R}{\mathbb{R}}
\newcommand{\C}{\mathbb{C}}

\newcommand{\dd}{\mathrm{d}}
\newcommand{\ct}{^\dagger}
\newcommand{\id}{\mathbb{I}}

\newcommand{\ugrp}{\mathbb{U}}
\newcommand{\haar}{\mathcal{U}}
\newcommand{\del}{\Updelta}
\newcommand{\phic}{\Upphi_{\mc C}}

\newcommand{\poly}{\mathrm{poly}}
\newcommand{\polylog}{\mathrm{polylog}}
\newcommand{\supp}{\mathop{\mathrm{supp}}}

\newcommand{\ket}[1]{|#1\rangle}
\newcommand{\bra}[1]{\langle#1|}

\newcommand{\Tr}{\mathrm{Tr}}

\newcommand{\spn}{\mathrm{span}}
\newcommand{\TrN}[1]{\left\|#1\right\|_1}

\newcommand{\DmN}[1]{\left\|#1\right\|_{\diamond}}
\newcommand{\FrN}[1]{\left\|#1\right\|_{\mathrm F}}

\title{Anti-Concentration for the Unitary Haar Measure and Applications to Random Quantum Circuits}

\author[1]{\normalsize{Bill~Fefferman}\thanks{wjf@uchicago.edu}}
\author[1]{Soumik~Ghosh\thanks{soumikghosh@uchicago.edu}} 
\author[2]{\normalsize{Wei~Zhan}\thanks{weizhan@purdue.edu}} 

\affil[1]{Department of Computer Science, The University of Chicago, Chicago, Illinois 60637, USA}
\affil[2]{Department of Computer Science, Purdue University, West Lafayette, Indiana 47907, USA}
\date{}

\begin{document}
	
	\maketitle
	
	\begin{abstract}
		We prove a Carbery-Wright style anti-concentration inequality for the unitary Haar measure, by showing that the probability of a polynomial in the entries of a random unitary falling into an $\eps$ range is at most a polynomial in $\eps$. Using it, we show that the scrambling speed of a random quantum circuit is lower bounded: Namely, every input qubit has an influence that is at least inverse exponential in depth, on any output qubit touched by its lightcone. Our result on scrambling speed works with high probability over the choice of a circuit from an ensemble, as opposed to just working in expectation. 
        
        As an application, we give the first polynomial-time algorithm for learning log-depth random quantum circuits with Haar random gates up to polynomially small diamond distance, given oracle access to the circuit. Other applications of this new scrambling speed lower bound include:     
		\begin{itemize}
			\item An optimal $\Omega(\log \eps^{-1})$ depth lower bound for $\eps$-approximate unitary designs on any circuit architecture;
			\item A polynomial-time quantum algorithm that computes the depth of a bounded-depth circuit, given oracle access to the circuit.
		\end{itemize}
        Our learning and depth-testing algorithms apply to architectures defined over any geometric dimension, and can be generalized to a wide class of architectures with good lightcone properties.
	\end{abstract}
	\newpage
	\section{Introduction}
	Random quantum circuits are one of the most popular paradigms of quantum computation in the near-term era. They are well-studied, both theoretically and experimentally, in the context of quantum advantage demonstrations: e.g., see \cite{Boixo_2018, Bouland18, Arute2019, Bouland22, Morvan23,   Movassagh23, Fefferman23}. They also have numerous applications in areas like benchmarking, like in \cite{Dankert09, Liu22}, in cryptography, as in e.g., \cite{Bassirian24, Aaronson23, Schuster24}, and in the modeling of physical objects like black holes, as in e.g., \cite{Hayden07,Piroli20, Yang23}.
	
One reason random quantum circuits are extensively studied is because they are rapid ``scramblers" of information. Intuitively, this means that the output state it generates has non-trivial correlations across spatially separated qubits. The rate of scrambling depends on the depth---the deeper the circuit is, the better it is at scrambling. However, even though previous works on scrambling have put \emph{upper bounds} on the scrambling speed of random circuits---see, e.g., \cite{Harrow09, Brown13, Nahum18, Harrow21, 
jian2022lineargrowthcircuitcomplexity,
chen2024efficientunitarydesignspseudorandom, chen2024efficientunitarytdesignsrandom, haah2024efficientapproximateunitarydesigns, metger2024simpleconstructionslineardepthtdesigns, Lieb72,Chen_2021,Wilming22,Chen_2023, PhysRevLett.124.180601}---\emph{lower bounds} on the scrambling speed are not well studied. 

    In this work, we give a \emph{lower bound} on the speed with which random quantum circuits can scramble information. Our main theorem is as follows:
    \begin{theorem}
    \label{thm:mixing}
		Let $\mathcal{C}$ be a random quantum circuit with a fixed architecture, where each gate is a $k$-qubit independent Haar random unitary. Let $\rho$ and $\pi$ be a pair of input and output qubits connected by $D$ layer of gates. Arbitrarily fix the inputs to $\mathcal{C}$, except the qubit $\rho$, and let $\Phi_{\mathcal{C}}$ be the channel that maps $\rho$ to $\pi$.
		
        Then for every $\gamma>0$, with probability at least $1-\gamma$ over $\mc C$ the following holds: For every two single-qubit states $\rho$ and $\rho'$, 
        \begin{equation}\label{eq:mixing}
            \FrN{\phic(\rho)-\phic(\rho')}\geq \FrN{\rho-\rho'}\cdot (2^{-D}\gamma)^{c_k}
        \end{equation}
		where $\FrN{\cdot}$ denotes the Frobenius norm, and $c_k > 0$ is a constant that depends only on $k$.
    \end{theorem}
    
    In particular, our main result gives a lower bound on the influence that changing an input qubit has on a designated output qubit inside the lightcone of that input qubit and shows that it decays at most exponentially fast with depth. We articulate some points that make \Cref{thm:mixing} useful for our applications:
    \begin{itemize}
        \item The bound in \Cref{thm:mixing} is uniform, in the sense that ratio $(2^{-D}\gamma)^{c_k}$ does not depend on how the input qubits are chosen, and is applicable to any circuit architecture. Moreover, the probability quantifier over the circuit $\mc C$ is stated before the choices of $\rho$ and $\rho'$, which is specifically important for our application on learning random quantum circuits (\Cref{thm:learncirc}).
        \item Although we stated the theorem using Frobenius norm, the norm is only over single-qubit states and thus is equivalent to any other matrix norm.
        \item To obtain the strongest bound, we can choose $D$ to be the shortest path from $\rho$ to $\pi$. When $\pi$ is outside the lightcone of $\rho$, we think of $D=\infty$, in which case the statement still holds.
        \item Our bound in \Cref{thm:mixing} is tight, in the sense that one could prove a matching upper bound 
        \begin{equation}\label{eq:mixingub}
            \FrN{\phic(\rho)-\phic(\rho')}\leq \FrN{\rho-\rho'}\cdot (2^{-D}\gamma)^{c_k'}
        \end{equation}
        for various architectures with a different constant $c_k'<c_k$, by calculating the moment $\E\left[\FrN{\phic(\rho)-\phic(\rho')}^2\right]$. Such calculation can be done by analyzing the Markov chain of Pauli operators, using the results in \cite{Harrow09,Brown13,Nahum17}. However, the moment method could not yield our lower bound, which we will explain in details below.
    \end{itemize}

    \paragraph{Implications on proving typicality statements.} Observe that \Cref{thm:mixing} is a statement that works \emph{with high probability} over the choice of circuit from the ensemble $\mathcal{C}$, as opposed to just \emph{in expectation} over the ensemble $\mathcal{C}$. To the best of our knowledge, this is the first typicality result regarding the scrambling behavior of random quantum circuits.

    Most of the previous work studying random quantum circuits examines their properties by computing some specific quantities of interest in expectation, such as entanglement entropy, OTOC and collision chance, by relating the quantities with the lower moments (usually the second moment) of the quantum circuit. For non-negative quantities the expectation already provides an upper bound that holds with high-probability (like in \eqref{eq:mixingub}) by Markov's inequality. But for proving a strong typicality result as in \eqref{eq:mixing}, one also needs a corresponding lower bound, which is commonly proved by calculating the variance of the desired quantity $X$. In particular, we would want that for some appropriately small $\epsilon$, and for some $t\geq 1$,
    \begin{equation}
    \label{typicality}
    \E[ X^{2t}] \leq (1+\epsilon) \E[ X^t]^2.
    \end{equation}
    However, for most architectures a relation like \eqref{typicality} is largely unknown. One reason is that when $X$ is already a second moment of the quantum circuit, and proving \eqref{typicality} requires computing the fourth or higher moments, which is extremely difficult even for standard brickwork circuits \cite{Braccia24}. More importantly, in our case, $X$ can be heuristically approximated by the product $\lambda_1\cdots\lambda_D$ of i.i.d random variables $\lambda_1,\ldots,\lambda_D$ that follow some distribution $\Lambda$ with constant variance; because of this, we cannot expect \eqref{typicality} to hold as $\E[ X^{2t}]$ will be exponentially larger than $\E[ X^t]^2$.
    

    In this work, we overcome this barrier by performing a more fine-grained analysis. We do not analyze the randomness present in the choice of gates all at once; Instead, we break up the randomness of the circuit into $D$ layers and lower bound the random variable $X$ with a product of $D$   random variables that are highly correlated. We then decouple these random variables using a new anti-concentration inequality on the unitary Haar measure, which will be stated in \Cref{thm:anticc}. This allows us to avoid calculating higher moments of the circuit, and still retrieve information about the distribution of $X$ that could be lost in the moments. Our result in \Cref{thm:mixing} directly implies typicality guarantees on other quantities, such as the purity of a single output qubit, and we believe the method we use can be generalized to prove typicality results for many other useful quantities.

    \subsection{Applications}
    \label{applications}
    
    Intuitively, \Cref{thm:mixing} means that the output state of random quantum circuits with logarithmic depth carries a signal pertaining to its input state that can be extracted from the output state using single qubit quantum state tomography. This allows us to show many  applications.

    \subsubsection{Optimal 2-design lower bounds} First, as an immediate corollary of \Cref{thm:mixing}, we show a depth lower bound for random quantum circuits, defined on any architecture, forming approximate unitary designs; see \Cref{sect:depthlb}.
	
	\begin{theorem}\label{thm:depthlb}
		Let $\mathcal{C}$ be a random quantum circuit, defined on an arbitrary fixed architecture of minimum depth $D$, where each gate is a $k$-qubit independent Haar random unitary. If $\mathcal{C}$ is an $\eps$-approximate $2$-design, then 
        \[
        D = \Omega_k(\log \eps^{-1}).
        \]
	\end{theorem}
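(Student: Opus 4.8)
The plan is to play the per-qubit lower bound of \Cref{thm:mixing} against the upper bound on scrambling that an approximate design must obey. First I would invoke \Cref{thm:mixing} for a worst-case pair: since $\mc C$ has minimum depth $D$, pick an input qubit and an output qubit $\pi$ that are depth $D$ apart, fix the remaining inputs to an arbitrary pure product state $\sigma$, and let $\phic$ be the induced single-qubit channel. Taking $\gamma=\tfrac12$ and the orthogonal pure states $\rho=\ket0\bra0$, $\rho'=\ket1\bra1$ (so that $\FrN{\rho-\rho'}^2=2$), \Cref{thm:mixing} gives $\FrN{\phic(\rho)-\phic(\rho')}^2\geq 2\cdot(2^{-D-1})^{2c_k}$ with probability at least $\tfrac12$ over $\mc C$, hence
\[
\E_{\mc C}\FrN{\phic(\rho)-\phic(\rho')}^2\;\geq\;2^{-2c_k(D+1)}.
\]

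Next I would use the design hypothesis to bound the same expectation from above. The point is that $\FrN{\phic(\rho)-\phic(\rho')}^2=\Tr[(\phic(\rho)-\phic(\rho'))^2]$ depends on the circuit only through $U^{\otimes2}$ and $(U\ct)^{\otimes2}$: with $\delta=\rho-\rho'$ and $U$ the circuit unitary one has $\phic(\rho)-\phic(\rho')=\Tr_{\mathrm{rest}}[U(\delta\otimes\sigma)U\ct]$, so the squared Frobenius norm equals $\Tr[(F_\pi\otimes I)\,U^{\otimes2}(\delta\otimes\sigma)^{\otimes2}(U\ct)^{\otimes2}]$, where $F_\pi$ swaps the two copies of wire $\pi$. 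Since $\InfN{F_\pi\otimes I}=1$ and $\TrN{(\delta\otimes\sigma)^{\otimes2}}=\TrN{\delta}^2=4$, comparing the two-fold twirls of $\mc C$ and of the Haar measure through $\lvert\Tr[O\,M(X)]\rvert\le\InfN{O}\,\DmN{M}\,\TrN{X}$ (with $M$ their difference) shows that $\E_{\mc C}\FrN{\phic(\rho)-\phic(\rho')}^2$ and the analogous Haar expectation $\E_{\haar}\FrN{\phic(\rho)-\phic(\rho')}^2$ differ by at most $4\eps$. The Haar value I would compute directly from the identity $\E_{\haar}[U^{\otimes2}X(U\ct)^{\otimes2}]=\alpha I+\beta F$ (reading $\alpha,\beta$ off from $\Tr[X]$ and $\Tr[FX]$, and using $\Tr[\delta]=0$): it comes out to $3\Tr[\sigma^2]\cdot2^n/(2^{2n}-1)$, which is at most $4\cdot2^{-n}$.

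Putting the pieces together yields $2^{-2c_k(D+1)}\le 4\cdot2^{-n}+4\eps$. In the regime in which the bound is meaningful, $n$ is large enough relative to $D$ that $4\cdot2^{-n}\le\tfrac12\,2^{-2c_k(D+1)}$ — this is where the standing assumptions on the architecture (lightcones covering all wires, no trivial layers) enter, the contrary regime $D=\Omega_k(n)$ being handled separately — so the estimate rearranges to $\eps\ge\tfrac18\,2^{-2c_k(D+1)}$, i.e. $D\ge\Omega_k(\log\eps^{-1})$. The step I expect to be the main obstacle is the second one: matching the paper's precise normalization of ``$\eps$-approximate $2$-design'' to a clean norm estimate for the specific observable $F_\pi\otimes I$, and keeping the order of the four tensor factors straight through the Weingarten evaluation; everything else is a substitution into \Cref{thm:mixing} together with an elementary second-moment computation.
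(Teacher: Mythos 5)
Your proposal is correct and follows essentially the same route as the paper: a lower bound on a degree-$(2,2)$ observable obtained from \Cref{thm:mixing}, played against an upper bound from the $2$-design property together with the Haar second-moment (swap/Weingarten) computation. The only cosmetic differences are that the paper compares $\phic(\ket{0}\bra{0})$ against $\phic(I/2)=I/2$ (i.e.\ tracks the output purity $\Tr[\pi^2]$ via a swap test) rather than two orthogonal pure inputs, and that the paper silently drops the additive $O(2^{-n})$ Haar correction which you explicitly flag as requiring the separate regime $D=\Omega_k(n)$.
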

	
	Here the minimum depth means the smallest number of gates that any path from input to output goes through, resulting in a stronger requirement than the previous depth lower bounds that uses the lightcone size property. \Cref{thm:depthlb} shows that previous works on approximate $t$-design constructions, e.g. \cite{Harrow09, hunterjones2019unitarydesignsstatisticalmechanics, Haferkamp22, Chen24Incomp, Schuster24}, are optimal in $\epsilon$, assuming the gates are Haar random unitaries. And combined with the $\Omega(\log n)$ depth lower bound in \cite{Dalzell22,Schuster24}, it means that the approximate $t$-design construction of depth $O(\log(n/\eps)\cdot t\mathop{\polylog}t)$ from \cite{Schuster24} is optimal in both $n$ and $\eps$.
 
    

    \subsubsection{Testing for depth}
    Our second application concerns testing the depth of a random circuit when the depth is promised to be at most logarithmic.
	
	\begin{theorem}\label{thm:depthtest}
		Let $\mathcal{C}$ be a brickwork random quantum circuit on $n$ qubits of an unknown depth $D$ promised to be bounded as $D=O(\log n)$, where each gate is independently Haar random. Given oracle access to $\mathcal{C}$, there is a polynomial time algorithm that outputs $D$ with probability $1-1/\poly(n)$.
	\end{theorem}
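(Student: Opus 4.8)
The plan is to reduce computing $D$ to detecting the forward lightcone of a single input qubit, and to use \Cref{thm:mixing} to certify that \emph{every} output qubit inside that lightcone leaves a detectable imprint while, deterministically, every output qubit outside it leaves none at all. In a $1$D brickwork circuit of depth $D$, the set of output qubits reachable from a fixed input qubit $i$ is a contiguous interval around $i$ whose width $w(D)$ is an explicit, strictly increasing function of $D$ — of order $\Theta(D)$, and equal to $2D$ for a central $i$ in the standard brickwork; conversely, by the lightcone/no-signalling structure, any output qubit outside this interval has \emph{exactly} zero dependence on the input at $i$. So it suffices to recover $w(D)$ — i.e., to decide for each nearby output qubit whether it is in the lightcone — and invert $w$.

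Concretely: probe the central input qubit $i=\lfloor n/2\rfloor$, which sits at distance $\gg D$ from the boundary (since $D=O(\log n)$) so its lightcone is not clipped, with all other inputs fixed to $\ket0$. For each output qubit $j$ with $|i-j|$ below a safe $O(\log n)$ overestimate of the lightcone radius: run the oracle $\poly(n)$ times with input qubit $i$ set to $\ket0$ and perform single-qubit tomography on output $j$ (tracing out the other outputs), obtaining an estimate of $\phic(\ket0\bra0)$ to Frobenius error $\delta$; repeat with input qubit $i$ set to $\ket1$; and declare $j$ to be inside the lightcone iff the resulting estimate of $\FrN{\phic(\ket0\bra0)-\phic(\ket1\bra1)}$ exceeds a threshold $\tau$. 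Finally, output $w^{-1}$ of the width of the detected interval.

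For correctness, let $D_{\max}=O(\log n)$ be the a priori upper bound on $D$, set $\gamma=1/\poly(n)$ in \Cref{thm:mixing}, and put $\tau_0:=\sqrt2\,(2^{-D_{\max}}\gamma)^{c_k}$, which is a known inverse polynomial (note $\sqrt2=\FrN{\ket0\bra0-\ket1\bra1}$). If $j$ is outside the lightcone of $i$, the true Frobenius distance is $0$, so the estimate is at most $2\delta$. If $j$ is inside the lightcone, the number of layers $d$ between $i$ and $j$ satisfies $d\le D\le D_{\max}$, so $(2^{-d}\gamma)^{c_k}\ge(2^{-D_{\max}}\gamma)^{c_k}$ and \Cref{thm:mixing} gives, with probability $1-\gamma$ over $\mc C$, a true distance of at least $\tau_0$ and hence an estimate of at least $\tau_0-2\delta$. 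Taking $\delta=\tau_0/5$ and $\tau=\tau_0/2$ separates the two cases; standard concentration bounds show that $\poly(n)$ oracle runs per Pauli basis reach tomography error $\delta$ with failure probability $1/\poly(n)$, and a union bound over the $O(\log n)$ output qubits tested and the $O(\log n)$ invocations of \Cref{thm:mixing} keeps the overall failure probability at $1/\poly(n)$. The algorithm makes $\poly(n)$ oracle calls and does $\poly(n)$ classical work.

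The content is not in any single hard step but in the meeting of two ``polynomial'' scales: \Cref{thm:mixing} only guarantees an influence exponentially small in the depth, but since the depth is logarithmic both $2^{-D}$ and $\gamma$ are inverse polynomials, so the guaranteed imprint $(2^{-D}\gamma)^{c_k}$ is still an inverse polynomial, detectable with polynomially many samples, and cleanly separated from the exact $0$ on the other side. The remaining work is architecture bookkeeping: fixing the explicit map $w$ between depth and lightcone width for the brickwork convention in use and for the probed qubit (straightforward once the boundary is avoided), and checking that the number of layers between any in-lightcone pair $(\rho,\pi)$ is at most the circuit depth, so that the \emph{uniform} bound of \Cref{thm:mixing} applies simultaneously to all pairs tested. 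The generalization behind \Cref{thm:depthtest2} replaces $w$ by the lightcone-growth function of the given architecture.
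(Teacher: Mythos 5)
Your proposal is correct and takes essentially the same approach as the paper: the paper's Algorithm~\ref{alg:depth} likewise detects lightcone membership by comparing the output qubit's state under inputs $\ket0$ versus $\ket1$, using \Cref{thm:mixing} for an inverse-polynomial separation inside the lightcone (since $D=O(\log n)$) against an exactly zero signal outside, resolved by single-qubit tomography. The only cosmetic difference is that the paper probes the \emph{first} input qubit and, for each candidate depth $D$, tests the single output qubit $D+2$ just outside the candidate lightcone, whereas you probe the central qubit and invert the lightcone-width function; both reduce to the same strict-expansion property noted in the paper's remark.
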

	
	Although \Cref{thm:depthtest} is stated with brickwork circuits for simplicity, it is applicable to many other architectures satisfying the following: For a random circuit of depth $D$, the lightcone size is strictly smaller than that of a random circuit of depth larger than $D$. Hence, there will be a particular output qubit which will be insensitive to a change in a particular input qubit in the former case and will be sensitive in the latter case. For $D=O(\log n)$ this difference is inverse polynomially large and can be tested by state tomography. The explicit algorithm is given in \Cref{sect:depthtest}.
 
    Note that our algorithm outputs the exact depth instead of obtaining an approximation, in contrast with the recently proposed depth test algorithm in \cite{Hangleiter24}. This is partially due to the fact that our algorithm works with high probability with respect to the choice of a circuit from the ensemble, as opposed to working only in expectation, unlike \cite{Hangleiter24}. 
    
    

	\subsubsection{Learning brickwork random circuits} 
    Third, we show that \Cref{thm:mixing} allows us to learn brickwork random circuits of logarithmic depth. We start by showing that the first layer of gates can be learned given oracle access to the circuit.
 
    \begin{theorem}\label{thm:learngate}
		Let $\mc C$ be a brickwork random quantum circuit on $n$ qubits of known depth $D=O(\log n)$, where each gate is independently Haar random. Given oracle access to $\mc C$, there is a polynomial time algorithm that with high probability outputs each gate in the first layer with polynomially small error.
	\end{theorem}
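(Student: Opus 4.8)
The plan is to use the lightcone structure to reduce the problem to process tomography of a polynomial-dimensional channel, and then to read the first-layer gate off the learned channel, with \Cref{thm:mixing} supplying the conditioning needed to turn polynomially small tomography error into polynomially small error in the recovered gate.

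Fix a pair $(a,b)$ of qubits carrying a first-layer gate $U$; let $L$ be the set of output qubits in the forward lightcone of $\{a,b\}$, and let $S$ be the set of input qubits in the backward lightcone of $L$. For a brickwork circuit of depth $D=O(\log n)$ we have $|L|,|S|=O(D)=O(\log n)$; the output reduced state on $L$ depends only on the $O(D^2)=\polylog n$ gates lying inside the backward lightcone of $L$, and fixing all inputs outside $S$ to $\ket{0}$ leaves that state unchanged. Let $\Phi$ be the resulting channel mapping the input on $\{a,b\}$ (with the other qubits of $S$ also set to $\ket{0}$) to the output reduced state on $L$; both its input and output systems have dimension $2^{O(\log n)}=\poly(n)$, and $\Phi=\Psi\circ\mathrm{Ad}_U$, where $\mathrm{Ad}_U(\rho)=U\rho U\ct$ and $\Psi$ is the channel implemented by the gates in layers $\ge 2$. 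The first step is to learn $\Phi$: feed qubits $a,b$ halves of maximally entangled pairs (keeping the references), query the oracle for $\mc C$, and perform state tomography on the references together with the $O(\log n)$ qubits of $L$. This is tomography of a $\poly(n)$-dimensional Choi state, so $\poly(n)$ queries and $\poly(n)$ postprocessing suffice to obtain $\widehat\Phi$ with $\DmN{\widehat\Phi-\Phi}\le 1/\poly(n)$.

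The second step extracts $U$ from $\widehat\Phi$. Because $L$ is exactly the forward lightcone of $\{a,b\}$, no information about the input on $\{a,b\}$ reaches $S\setminus L$; hence the complementary channel of $\Phi$ is constant, $\Psi$ is reversible, and $\Phi$ retains full information about $U$. Moreover, applying \Cref{thm:mixing} with $\gamma=1/\poly(n)$ to the depth-$(D-1)$ subcircuit of layers $\ge 2$ (once for each input qubit of $\{a,b\}$ and each output qubit of $L$ that it touches) shows that, with probability $1-1/\poly(n)$ over $\mc C$, the map $\Psi$ is injective on the relevant two-qubit operator subspace with condition number $\poly(n)$. It follows that $\Phi$ pins down $U$ up to the unavoidable local-unitary gauge $U\mapsto(R_a\otimes R_b)U$ (these single-qubit unitaries are absorbed into the second layer, and into a later layer for a qubit left idle in the second), and that the corresponding inverse problem is polynomially well-conditioned. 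One then recovers a gauge representative $\widehat U$ by a least-squares fit over the low-dimensional family ``$\mathrm{Ad}_U$ followed by the Haar-random layer-$2$ gate(s) on the boundary of $L$ and a residual CPTP map'', using the Haar-genericity of the deeper gates to exclude spurious optima; the $\poly(n)$ condition number then yields $\widehat U$ agreeing with $U$ up to gauge, global phase, and error $1/\poly(n)$. Repeating this for each of the $O(n)$ first-layer pairs, each in $\poly(n)$ time, and union-bounding over the $O(n)$ failure events gives the algorithm.

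The main obstacle is the extraction in the second step: showing that, with high probability over the Haar-random gates, a genuine two-qubit \emph{unitary}---not merely some generic channel front end---is robustly pinned down by $\Phi$ modulo the local-unitary gauge, and that this can be done by a polynomial-time, polynomially stable procedure. This is exactly where the quantitative, architecture-independent injectivity of \Cref{thm:mixing} is essential, together with the genericity supplied by the Haar measure on the deeper gates; by contrast the lightcone reduction and the tomography are routine once everything has been reduced to $O(\log n)$ qubits.
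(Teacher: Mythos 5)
Your first step (restrict to the $O(\log n)$-qubit lightcone and do $\poly(n)$-dimensional tomography) is fine, and you correctly identify the local-unitary gauge $U\mapsto (R_a\otimes R_b)U$ as the obstruction to exact recovery. But the extraction step --- the one you yourself flag as ``the main obstacle'' --- is a genuine gap, and it is not filled by \Cref{thm:mixing} in the way you claim. \Cref{thm:mixing} controls how a \emph{single-qubit} difference propagates to a \emph{single} output qubit, with all other inputs fixed; its proof works precisely because the reachable differences span only a $3$-dimensional subspace of $\del_k$ (this is the ``key observation'' in \Cref{sect:mixing}). It does not give injectivity, let alone a $\poly(n)$ condition number, for $\Psi$ on the full $15$-dimensional space of two-qubit operator differences, which is what your inversion of $\Phi=\Psi\circ\mathrm{Ad}_U$ would need. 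Moreover, even granting injectivity of $\Psi$ on states, identifiability of $U$ from the composite $\Phi$ is unproven: for any $V\in\ugrp(4)$ one can write $\Phi=(\Psi\circ\mathrm{Ad}_V)\circ\mathrm{Ad}_{V\ct U}$, so you must argue that $\Psi\circ\mathrm{Ad}_V$ is generically \emph{not} realizable by layers $\geq 2$ of a brickwork circuit unless $V$ is a product of single-qubit unitaries. That statement is essentially the whole content of the theorem, and the proposed non-convex least-squares fit is neither shown to have only the claimed optima nor to be solvable in polynomial time.

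The paper's route is different and supplies exactly the missing mechanism. Rather than inverting $\Psi$, \Cref{alg:learngate} enumerates $G$ over an $\eps$-net of $\ugrp(4)$ under the quotient pseudometric $d_\otimes$, prepends $G\ct$, and uses the output qubit at the \emph{boundary} of the lightcone (the $(D+1)$-th) as a witness: if $G_{1,1}G\ct$ is a product of single-qubit gates, that qubit is outside the lightcone of the first input and is unaffected when the first input flips; if not, the residual entanglement pushes a signal into the second qubit after layer one, and \Cref{thm:mixing} applied to the depth-$(D-1)$ subcircuit (now genuinely a single-qubit-difference statement) certifies a detectable $\FrN{\pi-\pi'}$. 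The quantitative link between ``small signal on all Pauli inputs'' and ``$G_{1,1}G\ct$ is close to a product'' is a standalone unitary-group statement, \Cref{lemma:sound}, whose proof is the technical heart of the section and has no analogue in your proposal. So the completeness direction of your plan is close in spirit, but the soundness/identifiability direction needs to be replaced by something like the paper's entanglement-witness argument.
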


	Furthermore, in real life scenarios we can assume the distribution over the gates is a discrete approximation of the Haar measure (see \Cref{def:eps}), and in this case we can actually learn the entire circuit:
    \begin{theorem}
    \label{thm:learncirc}
		Let $\mathcal{C}$ be a brickwork random quantum circuit on $n$ qubits of known depth $D=O(\log n)$, where each gate is independently drawn from a discretized version of the Haar measure. Given oracle access to $\mathcal{C}$, there is a polynomial time algorithm that with high probability outputs $\mathcal{C}$ with polynomially small error.
	\end{theorem}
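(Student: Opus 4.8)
The plan is to learn $\mc C$ layer by layer, from the first brickwork layer inward, using \Cref{thm:learngate} as the basic subroutine and exploiting the discreteness of the gate distribution to kill all error propagation between layers. Write $\mc C = U_D\cdots U_2 U_1$, where $U_i$ is the tensor product of the two-qubit gates in the $i$-th layer. Running \Cref{thm:learngate} (for the discretized measure, and with a target precision $\delta$ of our choosing, achievable in $\poly(n,\delta^{-1})$ time) produces an estimate of every gate of $U_1$ to within error $\delta$. Since the gate distribution is supported on a net whose points we may take to be $\eps_0$-separated for a known $\eps_0 = 1/\poly(n)$ (see \Cref{def:eps}), choosing $\delta < \eps_0/3$ and rounding each estimated gate to the nearest net point recovers $U_1$ \emph{exactly}. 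We then obtain oracle access to $\mc C' := \mc C\, U_1^\dagger = U_D\cdots U_2$ --- each query applies the now-known circuit $U_1^\dagger$ and then queries $\mc C$ --- which is again a brickwork random circuit on $n$ qubits, of depth $D-1 = O(\log n)$, with independent discretized-Haar gates (up to the harmless relabeling of which offset starts the brickwork). Recursing, after $D$ rounds all gates of $\mc C$ are known exactly, so $\mc C$ is reconstructed to within the precision with which net points are specified, which is polynomially small; and each round costs $\poly(n)$ time.

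The role of the discreteness --- and the reason \Cref{thm:learngate} does not already yield \Cref{thm:learncirc} for continuous Haar gates --- is exactly this rounding step, and making it work, i.e.\ keeping the overhead polynomial rather than quasipolynomial, is the main difficulty. Without exact recovery, learning $U_1$ only to error $\delta$ and peeling it off leaves an oracle that is merely $O(\delta)$-close in diamond distance to a depth-$(D-1)$ random circuit, not equal to one; and when \Cref{thm:learngate} is re-invoked on such a noisy oracle its sensitivity degrades, because the influences it must resolve are, by \Cref{thm:mixing}, only $(2^{-D}\gamma)^{c_k} = 1/\poly(n)$ in size, so one loses a $\poly(n)$ factor of precision at each of the $D$ levels, for a total overhead of $\poly(n)^{O(\log n)} = n^{O(\log n)}$ --- quasipolynomial rather than polynomial. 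Exact rounding removes this entirely, so the $O(\log n)$ rounds stay independent and individually polynomial.

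Two technical points remain. First, \Cref{thm:learngate} (and the anti-concentration \Cref{thm:mixing} underneath it) is stated for continuous Haar gates, so we need its analogue for the discretized measure: this follows by taking $\eps_0$ small enough that a depth-$O(\log n)$ circuit with net-valued gates is $1/\poly(n)$-close in diamond distance to a genuine Haar-random circuit, comfortably below the influence scale $(2^{-D}\gamma)^{c_k}$ that the algorithm actually resolves, so the guarantees transfer with negligible loss; one just needs $\eps_0$ simultaneously fine enough for this transfer and coarse enough --- relative to the learning precision $\delta$, which we control --- for rounding, and both requirements are inverse-polynomial with slack between them. Second, conditioned on the first $i-1$ rounds having succeeded and on the learned values $U_1,\dots,U_{i-1}$, the conditional law of $(U_i,\dots,U_D)$ is not exactly a product of discretized Haar measures, since conditioning on the success of earlier rounds involves events depending on the later gates; but each such event has probability $1 - 1/\poly(n)$, so the accumulated total-variation drift over all $O(\log n)$ rounds is still $1/\poly(n)$, and \Cref{thm:learngate} applied to the drifted distribution still succeeds with high probability. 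A union bound over the $O(\log n)$ rounds then gives overall success probability $1 - 1/\poly(n)$.
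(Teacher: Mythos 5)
Your overall strategy---learn layer one, peel it off, recurse, and use the discreteness to stop errors from compounding across the $O(\log n)$ levels---is the right high-level picture, and your diagnosis of \emph{why} the continuous case fails (a $\poly(n)$ precision loss per level, hence quasipolynomial overhead) matches the paper's motivation. But there is a genuine gap at the central step. \Cref{thm:learngate} (formally, \Cref{thm:learngate2}) does \emph{not} learn $G_{1,1}$ in diamond distance; it only learns it up to the pseudometric $d_\otimes$, i.e.\ up to left-multiplication by an unknown $V_1\otimes V_2\in\ugrp(2)\otimes\ugrp(2)$. This is not an artifact of the analysis but is information-theoretically forced: the test only checks whether the $(D+1)$-th output qubit leaves the lightcone of the first input, and $(V_1\otimes V_2)G_{1,1}$ is indistinguishable from $G_{1,1}$ under that test for \emph{every} choice of $V_1,V_2$. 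Consequently ``rounding to the nearest net point recovers $U_1$ exactly'' does not follow: the learned gate may be $(V_1\otimes V_2)G_{1,1}$ for arbitrary $V_i$, which need not be near $G_{1,1}$ in diamond or Frobenius distance, and \Cref{def:eps} explicitly permits the net to contain distinct points at $d_\otimes$-distance zero, so even rounding in $d_\otimes$ only identifies the equivalence class. Your peeling step therefore yields not $U_D\cdots U_2$ but $U_D\cdots U_2\,(V_1\otimes\cdots\otimes V_n)$ with unknown single-qubit residuals, and the recursion as written breaks.

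The paper's proof (\Cref{thm:learncirc2}) resolves exactly this point: the residual single-qubit gates are absorbed into the second layer, and since right-multiplication by a fixed unitary preserves both $\haar(4)$ and $d_\otimes$-distances, the combined second-layer gates are still distributed as an $\eps$-net of $\haar(4)$ under $d_\otimes$, so \Cref{thm:learngate3} applies again; the accumulated single-qubit residual $U_1'\otimes\cdots\otimes U_n'$ surviving at the very end is learned separately by process tomography (\Cref{prop:ptomo}), which is also why the output is $\delta$-close rather than exact. Two secondary issues: (i) your transfer of \Cref{thm:mixing} to the discretized measure by making $\eps_0$ ``small enough'' treats the net resolution as a parameter you control, whereas it is part of the problem instance; the paper instead proves a Lipschitz bound for $|\det M_i|$ in $d_\otimes$ gate-by-gate (\Cref{lemma:lip}) and handles the resulting additive shift by conditioning away the event $|\det M_i|\leq\eps$, which works for any given $\eps$. (ii) Your observation about the conditional law drifting after conditioning on earlier successes is legitimate and is handled in the paper by the same union bound, so that part is fine.
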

	
	We will prove \Cref{thm:learngate} and \Cref{thm:learncirc} in \Cref{sect:learn}. Note that the learning algorithm in \Cref{thm:learncirc} is proper, that the learnt circuit has the exact same depth and architecture as the actual circuit $\mathcal{C}$.
    
    Before this work, the state-of-art learning algorithm for brickwork quantum circuits is due to \cite{Huang24,Landau24}, which runs in polynomial time for circuits of $k$-dimensional geometry up to $O(\log^{1/(k+1)} n)$ depth, and is not proper (that the outputted circuit has depth larger than the input circuit by at least a constant factor). Our algorithm works for all geometric dimensions, works up to $O(\log n)$ depth, and still has polynomial runtime. There is evidence that the depth dependence of our learning algorithm is optimal, because random circuits form high enough designs beyond $\log n$ depth \cite{Schuster24}, and possibly also pseudorandom unitaries, which hints at learning hardness. 
 
    Here we provide an informal description of the learner:
    \begin{itemize}
    \item For each gate in the first layer, the learner iteratively searches through the gate set for the correct gate. There's only polynomially many gates in the gate set because we chose an appropriate $\eps$-net over $2$-qubit Haar random unitaries. During each round, the learner applies the inverse of a gate chosen from the gate set at the particular location it wishes to learn.
    \item Select an input qubit to the gate we are trying to learn. The key observation is that, depending on whether we applied the correct inverse, the output lightcone will consist of different qubits. In particular, one qubit will lie outside the lightcone and be unaffected by a change in the selected input qubit if we hit the correct inverse, as opposed to if we didn't. As a corollary of \Cref{thm:mixing}, in the former case, we can detect the influence of the input qubit by performing single qubit state tomography.
    \begin{figure}[h]
        \centering
        \begin{minipage}{.5\textwidth}
          \centering
          \includegraphics[page=1, width=\linewidth]{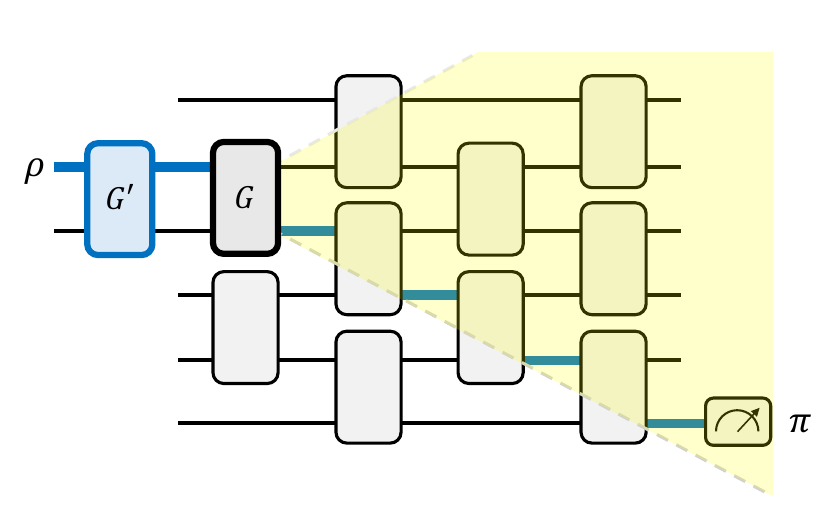}
        \end{minipage}%
        \begin{minipage}{.5\textwidth}
          \centering
          \includegraphics[page=2, width=\linewidth]{lc.pdf}
        \end{minipage}
        \caption{An illustration of the learning algorithm in \Cref{thm:learncirc}, where we try to uncompute the gate $G$ by applying some two-qubit unitary $G'$ before it. Left: When $G'$ failed to uncompute $G$, the lightcone of the input qubit $\rho$ touches an output qubit $\pi$ at distance $D$. Right: When $G'=G^\dagger$, the output qubit $\pi$ falls out of the lightcone and will not be affected by the change in $\rho$.}
        \label{fig:lightcone}
    \end{figure}
    \end{itemize}
    
    Note that the learner of \cite{Huang24,Landau24} uses a very different strategy: it looks at each output qubit, and then employs a brute force search over all possible lightcones of that circuit to find the correct one. Thereafter, it stitches together all the different local lightcones of every qubit, and then corrects for the errors in the overlapping regions of the different lightcones. The fact that we do not do a brute force search over all possible lightcones makes our algorithm efficient for any dimension, and not just $1$-dimensional brickwork circuits.
	
	\subsection{Main Technical Tool: Anti-Concentration for Haar Measure}
	
	The intuition behind the proof of \Cref{thm:mixing} is the following: We consider the path of $D+1$ qubits $\rho=\rho_0,\rho_1,\ldots,\rho_D=\phic(\rho)$ in the circuit $\mc C$, where gate $G_i$ has $\rho_i$ as an input and $\rho_{i+1}$ as an output. Let $\rho_i'$ be the corresponding qubits when the input is $\rho'$, and we want to bound the ratios $\lambda_i = \FrN{\rho_i-\rho'_i}/\FrN{\rho_{i-1}-\rho'_{i-1}}$ and hence their product.
	
	It turns out that we can prove the lower bound $\lambda_i\geq|F(G_i)|$, where $F$ is a polynomial function over the entries of $G_i$ in its matrix representation. Therefore, we only need to show that $|F(G_i)|$ is often not too small, when $G_i$ is a Haar random unitary. In other words, we need to show that the polynomial does not concentrate around zero. Our main technical contribution is the following theorem which proves this anti-concentration phenomenon:
	
	\begin{theorem}\label{thm:anticc}
		Let $U$ be a Haar random $n\times n$ unitary matrix, and let $F:\C^{2n^2}\to\C$ be a degree-$d$ polynomial on the entries of $U$ and $U\ct$. Then for every $\eps>0$, it holds that
		\[\Pr\left[\big|F(U,U\ct)\big|^2\leq\eps \E\!\big[|F(U,U\ct)|^2\big]\right]\leq C'(n,d)\cdot \eps^{C(n,d)}\]
		where $C(n,d)>0$ and $C'(n,d)>0$ are constants that depend only on $n$ and $d$.
	\end{theorem}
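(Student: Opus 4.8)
The plan is to reduce the anti-concentration statement for $|F(U,U\ct)|^2$ to a statement about a \emph{single} real polynomial on a nice parameter space, and then invoke the Carbery--Wright inequality in the form that works for log-concave measures. First I would use the fact that a Haar random $n\times n$ unitary can be written as $U = \exp(iA)$ for $A$ in a bounded region of the Lie algebra, or more conveniently, parametrize $U$ via an $n\times n$ complex Ginibre matrix $Z$ (i.i.d. standard complex Gaussians) together with the Gram--Schmidt/QR map $Z \mapsto Q(Z)$, so that $Q(Z)$ is exactly Haar distributed. The entries of $Q(Z)$ and $Q(Z)\ct$ are rational functions of the $2n^2$ real Gaussian coordinates of $Z$, with denominators that are positive polynomials (products of norms of partial Gram--Schmidt vectors). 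Substituting into $F$ and clearing denominators, $|F(U,U\ct)|^2 = P(x)/\|x\|$-type ratio where $P$ is a genuine real polynomial of degree $O(d\cdot n)$ (some explicit polynomial degree $d'=d'(n,d)$) in the real Gaussian variables $x\in\R^{2n^2}$, and the denominator $R(x)$ is a fixed positive polynomial independent of $F$.

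The next step is to handle the ratio. Since the denominator $R(x)$ is bounded below and above off an event of exponentially small Gaussian measure (the Gram--Schmidt norms are comparable to their typical values except with tiny probability), I would split the probability: on the good event, $|F|^2 \leq \eps\,\E|F|^2$ implies $P(x) \leq \eps' \cdot (\text{something})$; on the bad event, pay the exponentially small Gaussian probability, which is certainly $\le C'\eps^{C}$ for suitable constants. So it suffices to anti-concentrate the single polynomial $P(x)$ of degree $d'$ under the standard Gaussian measure on $\R^{2n^2}$. Here I would invoke the Carbery--Wright inequality: for a degree-$d'$ polynomial $P$ and the Gaussian measure (a log-concave measure), $\Pr[|P(x)| \le \eps\,\E[P(x)^2]^{1/2}] \le O(d')\cdot \eps^{1/d'}$. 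Converting between $\E|F|^2$ and $\E[P^2]$ uses that the ratio $|F|^2 R = P$ in expectation differs by only bounded factors on the good event, which folds into the constant $C'(n,d)$.

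The main obstacle I expect is the bookkeeping of the denominators, i.e.\ making the QR parametrization quantitatively clean: I need that the positive polynomial $R(x)$ (the product of squared Gram--Schmidt norms) is both (a) bounded below except on a set of Gaussian measure $\le \eps^{\Omega(1)}$ and (b) has controlled expectation, so that replacing $\E[|F|^2]$ by $\E[P^2]$ only loses constant factors depending on $n,d$. A cleaner route that avoids this entirely is to use a result of the form ``Carbery--Wright holds for any polynomial on a compact Lie group with respect to the Haar measure,'' which follows because the Haar measure on $\ugrp(n)$ pushes forward from a log-concave (Gaussian) measure via a real-analytic map of bounded complexity; but since the pushforward of a log-concave measure need not be log-concave, one genuinely needs the semi-algebraic/rational structure above rather than a black-box appeal. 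So the real content is Steps 1--2: exhibiting $|F(U,U\ct)|^2$ as (polynomial)/(positive polynomial with tame denominator) in Gaussian coordinates, after which Carbery--Wright finishes it. One should also double-check that $F$ not being identically zero on the unitary group implies $P\not\equiv 0$, so that $\E[P^2]>0$ and the statement is non-vacuous; this holds since the QR map is surjective onto $\ugrp(n)$.
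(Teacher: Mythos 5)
There is a genuine gap at Step 1, and it is exactly the obstruction the paper flags when it says the Carbery--Wright machinery does not apply here. The Gram--Schmidt/QR map is not rational in the Ginibre entries: each column of $Q(Z)$ is an unnormalized vector $w_j$ (whose entries are indeed rational in $Z,\overline{Z}$) divided by its norm $N_j=\|w_j\|_2$, and $N_j$ is the \emph{square root} of a polynomial (or rational function), not a polynomial. A single squared modulus $|U_{ij}|^2$ is rational, but $|F|^2=F\overline{F}$ contains cross terms $m_1\overline{m_2}$ between distinct monomials of $F$, and these carry odd powers of the $N_j$'s in their denominators. For example, with $F=U_{11}+U_{12}$ the cross term $U_{11}\overline{U_{12}}$ has denominator $N_1N_2=\sqrt{\|z_1\|^2\|w_2\|^2}$, which is not a rational function of the Gaussian coordinates. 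Consequently ``clearing denominators'' does not produce a polynomial $P(x)$, there is no fixed positive polynomial $R(x)$ with $|F|^2R=P$, and the planned appeal to Carbery--Wright for a degree-$d'$ Gaussian polynomial never gets off the ground. (Other algebraic parametrizations do not rescue this: the Cayley transform makes the entries rational but pushes the Haar measure to a heavy-tailed, non-log-concave matrix Cauchy law, so Carbery--Wright again does not apply.) A secondary inaccuracy: the Gram--Schmidt norms are \emph{not} bounded below off an exponentially small event; the last norm satisfies $\Pr[N_n^2\leq t]=\Theta(t)$, so the bad event is only polynomially small. That alone would still be compatible with a bound of the form $C'\eps^{C}$, but it signals that the quantitative control you were counting on is weaker than claimed.

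The paper's actual proof takes a completely different, self-contained inductive route that never leaves the unitary group. It proves the more general Theorem~\ref{thm:anticc2} for non-negative semi-polynomials of the first $m$ columns: the base case is a single variable on the unit circle, where one factors the polynomial, bounds $\E[F]$ by $\prod(1+|z_i|)$ over the roots $z_i$, and observes that $F$ can only be $\eps\E[F]$-small on small arcs near the roots. The induction then peels off one coordinate (or one column) at a time, and the key structural lemma --- proved via the Gaussian representation of the orthogonal complement and the complex Wick theorem --- is that the conditional expectation of a semi-polynomial over the remaining Haar-random coordinates is again a semi-polynomial of the same degree, so the union-bound/exponent-splitting trick (bounding $\Pr[P\leq\eps^p\E F]+\Pr[F\leq\eps^{1-p}P]$) can be iterated. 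If you want to salvage your approach, you would need either a genuinely polynomial parametrization of Haar measure by a log-concave law (which does not appear to exist) or to restrict to the sub-class of $F$ for which all denominators appear to even powers; neither covers the general statement.
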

	
	The anti-concentration inequality of polynomials over Gaussian random variables was famously proved by Carbery and Wright \cite{Carbery01}, and their result actually applies to any log-concave distribution over $\R^n$. However, as the Haar measure does not even have a convex support, the proof techniques in \cite{Carbery01} does not apply. The alternative inductive proof in \cite{Lovett10} also fails to apply, as we are dealing with correlated input variables. We present a very different inductive proof in \Cref{sect:anticc}.
	
	Note that if we consider $F(U,U\ct)$ as a complex random variable, and define the complex variance
	\[\Var[F]=\E[|F|^2]-\left|\E[F]\right|^2=\min_{z\in\C}\E[|F-z|^2],\]
	then we obtain the form closer to the classical anti-concentration inequalities:
	
	\begin{corollary}\label{col:anticc}
		Let $U$ be a Haar random $n\times n$ unitary matrix, and let $F$ be a degree-$d$ polynomial on the entries of $U$ and $U\ct$. Then for every $\eps>0$ and every $z\in\C$, it holds that
		\[\Pr\left[|F-z|^2\leq\eps\Var[F]\right]\leq C'(n,d)\cdot \eps^{C(n,d)}.\]
	\end{corollary}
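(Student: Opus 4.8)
The plan is to derive \Cref{col:anticc} directly from \Cref{thm:anticc} by a simple change of variables, using the variational characterization of the complex variance already noted in the excerpt. The key observation is that for any fixed $z\in\C$, the map $F\mapsto F-z$ preserves the property of being a degree-$d$ polynomial in the entries of $U$ and $U\ct$: the constant $z$ is itself a degree-$0$ polynomial, so $G := F-z$ is again a degree-$d$ polynomial on $\C^{2n^2}$. Hence \Cref{thm:anticc} applies verbatim to $G$.

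Concretely, I would proceed as follows. First, fix $z\in\C$ and set $G=F-z$, noting $\deg G\le d$. Applying \Cref{thm:anticc} to $G$ gives, for every $\eps>0$,
\[\Pr\left[|F-z|^2\leq\eps\,\E\!\big[|F-z|^2\big]\right]\leq C'(n,d)\cdot\eps^{C(n,d)}.\]
Second, I would handle the gap between $\E[|F-z|^2]$ and $\Var[F]$. By the identity $\Var[F]=\min_{w\in\C}\E[|F-w|^2]$, we always have $\E[|F-z|^2]\geq\Var[F]$. Therefore the event $\{|F-z|^2\leq\eps\Var[F]\}$ is contained in the event $\{|F-z|^2\leq\eps\,\E[|F-z|^2]\}$, and the probability bound transfers immediately:
\[\Pr\left[|F-z|^2\leq\eps\Var[F]\right]\leq\Pr\left[|F-z|^2\leq\eps\,\E\!\big[|F-z|^2\big]\right]\leq C'(n,d)\cdot\eps^{C(n,d)}.\]
Since $z$ was arbitrary, this is exactly the claimed statement.

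There is essentially no obstacle here: the corollary is a formal consequence of the theorem plus the elementary fact that shifting by a constant only decreases the second moment when the shift is suboptimal. The only point requiring a line of care is confirming that subtracting a complex constant keeps $F-z$ within the class of polynomials to which \Cref{thm:anticc} applies — which is immediate since constants are polynomials of degree $0$ and the degree of a sum is at most the maximum of the degrees — and noting that the constants $C(n,d),C'(n,d)$ are unchanged because the degree bound $d$ is unaffected. One could alternatively absorb $z$ into $F$ at the outset and invoke \Cref{thm:anticc} once, but the two-step presentation above makes the role of the variational formula for $\Var[F]$ transparent.
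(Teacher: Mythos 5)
Your proposal is correct and matches the argument the paper intends (the corollary is stated without an explicit proof, but the route is exactly yours): apply \Cref{thm:anticc} to the degree-$d$ polynomial $F-z$ and use $\E[|F-z|^2]\geq\Var[F]$ from the variational formula, so the event in the corollary is contained in the event of the theorem. Nothing further is needed.
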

	
	However in this work we will not use the form in \Cref{col:anticc}, as \Cref{thm:anticc} suffices for our applications.
	
	\begin{remark}
		Unlike the Carbery-Wright inequality which is dimension-free, meaning the right hand side is $C'(d)\cdot\eps^{C(d)}$ and does not depend on $n$, we have $C(n,d)=(4n^2d)^{-1}$ and $C'(n,d)=O(n^3d)$ in our proof. In fact, using the concentration bounds it is not hard to show that $C'(n,d)$ must depends polynomially on $n$. However, we conjecture that $C(n,d)$ could be independent of $n$, in which case the result would be applicable to random quantum circuits with gates of higher locality.
	\end{remark}
 
	\subsection{Related Works}
	
	Concentration phenomenon on unitary Haar measures has been extensively studied, and the readers can refer to \cite{Meckes19} for a comprehensive review of the results. In comparison, much less has been shown for the reverse direction, namely the anti-concentration inequalities.
	
	One common way to prove such inequalities is by calculating higher moments and apply the Paley-Zygmund inequality, which was indeed used for showing the anti-concentration property of \emph{output distributions} of Haar random unitaries and random quantum circuits \cite{Aaronson11,Hangleiter18,Dalzell22}. However, the inequality proved this way is not strong enough for our applications, while calculating moments of a random quantum circuit is also non-trivial and depends highly on the architecture \cite{Fisher23,Braccia24}. Instead, we resort to prove a general anti-concentration inequality for polynomials, whose theory has been well developed for Gaussian distributions \cite{Carbery01} and product distributions (namely the Littlewood-Offord theory) and has found numerous applications in computational complexity theory \cite{Meka13,Meka16,Kane17}. Our inductive proof of \Cref{thm:anticc} also shares a similar spirit with the elementary proof of Carbery-Wright inequality in \cite{Lovett10}.
	
	Multiple notions of scrambling property of random quantum circuits has been previously studied. In particular, \cite{Brown13} showed that in a random circuit consists of $O(n\log^2 n)$ sequential applications of Haar random gates on a complete graph of $n$ qubits, every subset of $cn$ qubits is polynomially close to maximally mixed with high probability for some constant $c>0$. Our \Cref{thm:mixing} can be viewed as a result in the reverse direction which bounds the scrambling speed of such random circuits. Specifically, at least $\Omega(n\log n\log\log n)$ sequential gates are required, as otherwise with high probability a pair of input and output qubits are $o(\log n)$ depth apart due to a generalization of the coupon-collector problem \cite{Erdos61}. It is also reasonable to believe that our method of proving \Cref{thm:mixing}, via the anti-concentration inequality, is applicable to obtain lower bounds for other measures of scrambling such as entanglement and out-of-time-ordered correlation (OTOC) \cite{Nahum17, Nahum18, Bertini20, Harrow21}. Upper bounds in the above-mentioned works are obtained by calculating moments and analyzing the averaged Markov chain on Pauli operators, which are not sufficient to prove lower bounds in the typical case.
	
	We also review some previous works related to our applications and clarify the connections. For approximate unitary designs, many previous constructions, for example \cite{Harrow09,Brandao12,Haferkamp22,Harrow23,Chen24Incomp}, employed Haar random unitary gates and achieved the optimal $O(\log \eps^{-1})$ dependence on $\eps$. The construction of depth $O(\log(n/\eps)\cdot t\mathop{\polylog}t)$ in the recent work of \cite{Schuster24} also falls into this category. Meanwhile, they also proposed a construction of approximate 3-design with only $O(\log\log(n/\eps))$ depth. This does not contradict our lower bound \Cref{thm:depthlb} as the construction uses random Clifford unitaries, which are not independent between layers and also fail the anti-concentration property in \Cref{thm:anticc}. It is intriguing, however, to see if our argument can be extended to show a matching $\Omega(\log\log \eps^{-1})$ depth lower bound.
	
	The depth test algorithm in \cite{Hangleiter24} is based on the entanglement dynamics of random circuits and implemented with their Bell sampling framework. For brickwork circuits, as there is a constant gap between the upper and lower bounds for the entanglement entropy in the typical case, their algorithm gives constant approximation of the depth. For general architectures, the algorithm in \cite{Hangleiter24} also requires knowledge of the entanglement velocity, while our algorithm for \Cref{thm:depthtest} only relies on a specific property of the architecture that the lightcone strictly expands with depth.
	
	The learning algorithm for shallow quantum circuits in \cite{Huang24} is based on the idea of brute-force enumerating all possibilities in a light cone, and stitching the parts together. Therefore, their algorithm works any quantum circuit in worse case, and has complexity exponential in the lightcone size. This means that in order to have polynomial efficiency, the depth has to be $O(\log^{1/k} n)$ for $k$-dimensional geometrically local circuits and $O(\log\log n)$ for general architectures. However, the learning algorithm is improper and the outputted circuit has a large polynomial blow-up in depth. The blow-up factor could be reduced to constant, at the cost of lowering the depth bound to $O(\log^{1/{(k+1)}} n)$, and thus could not perform in polynomial time for log-depth circuits even for 1-dimensional geometry. In comparison, our algorithm works for random quantum circuits, on the fixed brickwork or similar geometrically local architecture, where neighboring qubits can be distinguished by their lightcones. The strength of our algorithm \Cref{thm:learncirc} is that it works up to logarithmic depth in polynomial time regardless of the geometric dimension, and it learns properly in that it outputs a circuit with the exact same architecture. This makes our result incomparable with \cite{Huang24}.
	
	We also mention that, there is a different learning task where instead given oracle access to the circuit $\mc C$, the learner is only given copies of the state $\mc C\ket{0^n}$, and needs to learn a circuit that prepares the same state with error in trace distance. Our algorithm relies on having different inputs and hence does not work in this case, whereas \cite{Huang24} gave a quasi-polynomial efficiency algorithm for 2D and the algorithm was extended to higher dimensions in \cite{Landau24}.

    \section{Technical Overview}
    In this section we provide some intuitions for the technical parts in our proofs, as our approach significantly differs from previous techniques for proving Carbery-Wright type bounds and proving properties of random quantum circuits.

    \subsection{Induction for Anti-Concentration Bound}
    We start with a bare-bone inductive proof for \Cref{thm:anticc}, but instead of having the input to $F$ being the entries of a Haar random unitary, we think of them as independent random variables $x_1,\ldots,x_n$. In this case, we can write $F$ as
    \begin{equation}\label{eq:poly}
        F(x_1,\ldots,x_n)=x_1^d F_d(x_2,\ldots,x_n)+x_1^{d-1}F_{d-1}(x_2,\ldots,x_n)+\cdots+F_0(x_2,\ldots,x_n),
    \end{equation}
    where each $F_i$ is a polynomial of degree at most $d$. As a result, we can define a single-variable polynomial $P(x_1)$ as
    \begin{align*}
        P(x_1) &=\E_{x_2,\ldots,x_n}[F(x_1,\ldots,x_n)] \\
        &=\E[F_d]\cdot x_1^d+\E[F_{d-1}]\cdot x_1^{d-1}+\cdots+\E[F_0],
    \end{align*}
    and prove the anti-concentration property of $P(x_1)$ using analytic methods. Notice that for each fixed $x_1$ such that $P(x_1)$ is not concentrated, $F$ is also a polynomial $F'_{x_1}$ over $x_2,\ldots,x_n$, and therefore reduce the problem to proving anti-concentration for $F'_{x_1}$, as by the union bound we have
    \begin{align*}
        &\ \Pr\big[F(x_1,\ldots,x_n)\leq\eps\E[F]\big] \\
        \leq&\ \Pr\big[P(x_1)\leq\sqrt{\eps}\E[P]\big] +\Pr\big[F'_{x_1}(x_2,\ldots,x_n)\leq\sqrt{\eps}\E[F'_{x_1}]\big].
    \end{align*}
    
    The probability obtained via this induction will be dependent on $n$ in the exponent of $\eps$. In order to obtain a dimension-free bound like in Carbery-Wright, \cite{Lovett10} first converts $F$ into an equivalent polynomial that is multi-linear over a partition of input variables, and by taking $x_1$ to be one part of the partition instead of a single variable, $F_0$ in \eqref{eq:poly} could be made to have degree at most $d-1$. This way it is guaranteed that the degree of $F'_{x_1}$ is strictly smaller than that of $F$, and thus the induction could work on $d$ instead of $n$.

    Unfortunately, the conversion of polynomial in \cite{Lovett10} does not work for the case when input variables are not independent. In fact, with correlated input we are facing a more fundamental problem: it is not even guaranteed that $P(x_1)$ is a polynomial function, as $\E\nolimits_{x_2,\ldots,x_n}[F_i]$ is not a constant anymore, but a function in $x_1$ itself. A crucial step in our proof is to show that, with some proper grouping of the entries in the Haar random matrix, each $\E[F_i]$ is indeed a polynomial function in $x_1$ of degree at most $d-i$, and hence the induction proof goes through, albeit with dependence on $n$; See \Cref{sect:anticc}.

    \subsection{Tracking 3-Dimensional Pauli Subspaces}
    The main objects we care about in \Cref{thm:mixing} are the differences between density matrices of states, which are linear combinations of non-identity Pauli operators. In particular, the difference on a single qubit is spanned by $\{X,Y,Z\}\otimes I^{\otimes(n-1)}$, where the non-identity Pauli $X,Y$ or $Z$ acts on the qubit where the difference is taken. In this language, \Cref{thm:mixing} in essence claims the following: The subspace of Pauli operators $\{X,Y,Z\}_0\otimes I^{\otimes(n-1)}$ indicating the differences in the input qubit $\rho_0$, when evolved by the circuit $\mc C$ (still under the Schr\"{o}dinger picture), will have non-negligible projection onto the subspace $\{X,Y,Z\}_D\otimes I^{\otimes(n-1)}$ which indicates the differences in the output qubit $\rho_D$.

    \begin{figure}[h]
        \centering
        \includegraphics[width = .85\textwidth]{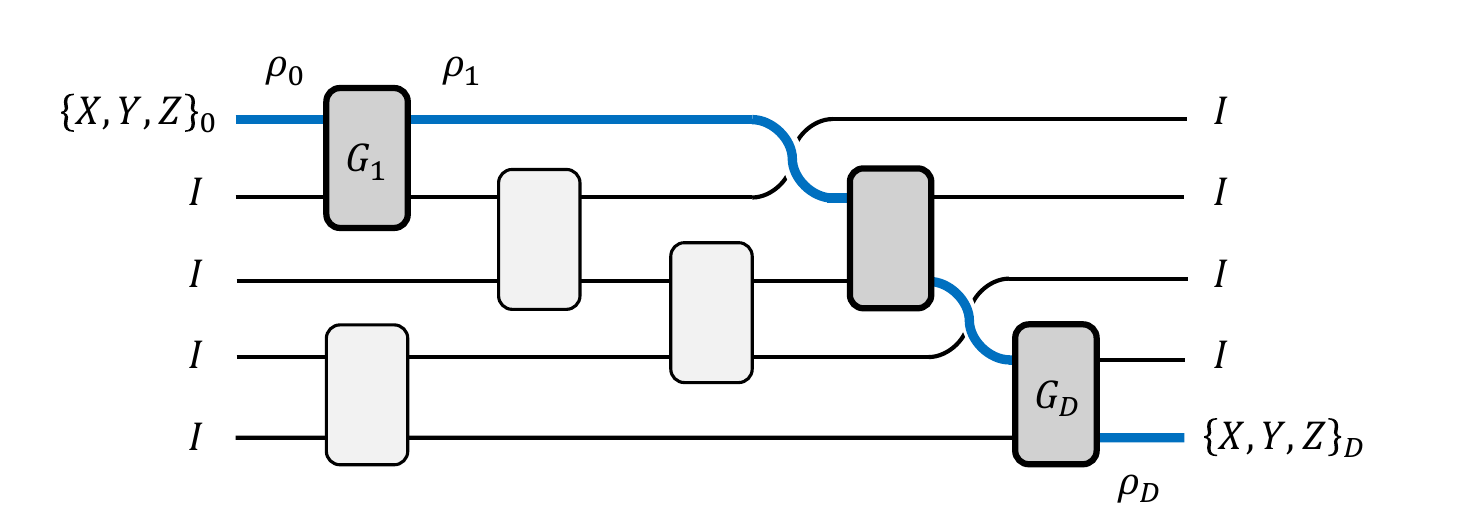}
        \caption{A Illustration of the claim and proof of \Cref{thm:mixing} with an example circuit, where the bold wires represents a path from the first input qubit to the last output qubit. The state of the qubit on the $i$-th segment of the wire, between gate $G_{i-1}$ and $G_i$, is denoted as $\rho_i$, whose difference is captured by the projection onto the Pauli subspace $\{X,Y,Z\}_i\otimes I^{\otimes(n-1)}$.}
    \end{figure}
    
    Our proof of \Cref{thm:mixing} keeps track of the evolution of the 3-dimensional Pauli subspace:
    \[\mc S_i=\mc C_i\cdot\left( \mathrm{span}\{X,Y,Z\}_0\otimes I^{\otimes(n-1)}\right)\cdot\mc C_i\ct,\]
    where $\mc C_i$ is the part of circuit $\mc C$ up until the $i$-th layer. Denoting the qubits on the length-$D$ path from the input to the output as $\rho_0,\rho_1,\ldots,\rho_D$, the projection of $\mc S_i$ onto $\{X,Y,Z\}_i\otimes I^{\otimes(n-1)}$ contains all possible differences $\rho_i-\rho_i'$, and the goal is to bound the singular values of this projection. Notice that this projection is not affected by any unitary gates in the $i$-th layer, other than the one $G_i$ on the path that connects $\rho_{i-1}$ and $\rho_i$. As a result, we can lower bound the ratio
    \[\frac{\FrN{\rho_i-\rho_i'}}{\FrN{\rho_{i-1}-\rho_{i-1}'}}\]
    with the smallest singular value of a linear transformation on $\mc S_{i-1}$ defined by $G_i$, and in turn with a polynomial function $F_i(G_i)$ on the entries of $G_i$, whose coefficients depend on $\mc S_{i-1}$.

    Now, to show that each $F_i(G_i)$ with high probability cannot be too small, we can use exactly our anti-concentration bound \Cref{thm:anticc} to argue that $F_i(G_i)$ cannot be concentrated around zero as $G_i$ is Haar random. However, to bound their product $F_1(G_1)\cdots F_D(G_D)$, we cannot treat them individually as the polynomial functions themselves are correlated. Instead, we use \Cref{thm:anticc} to find an anti-concentrated distribution $\Lambda_i$ that uniformly lower bounds the distribution of $F_i(G_i)$, regardless of the choice of $\mc S_{i-1}$. This way, we can decouple the distributions of the multipliers, and prove a lower bound on the independent product $\Lambda_1\cdots\Lambda_D$; See \Cref{sect:mixing} for details.
    
    \subsection{Learning Log-Depth Random Quantum Circuits}
    We already presented the general idea of the learning algorithm in \Cref{applications}. However, that is far from the whole story: one major problem we need to solve is that, while our soundness assurance \Cref{thm:mixing} is an average-case statement over the distribution of random circuits, learning each gate correctly is a worst-case statement. In other words, \Cref{thm:mixing} does not directly guarantee that for \emph{every} attempt that fails to uncompute the current gate, the input qubit will have an observable influence on the output qubit.

    In fact, such a worst-case statement is impossible with our approach: the lightcone property cannot tell the difference when the gate $G$ is perfectly uncomputed by $G'=G^\dagger$, or when $G'=(U_1\otimes U_2)\cdot G^\dagger$ for some non-identity single-qubit unitaries $U_1$ and $U_2$. Fortunately it is unnecessary to distinguish this two cases: In the latter case, $G'^\dagger\cdot G=U_1^\dagger\otimes U_2^\dagger$, as the residue of the learning process for the current layer, can be viewed as part of the next layer by composing $U_1^\dagger$ with one of the next gates and $U_2^\dagger$ with the other. This does not affect the distribution of the remaining circuit as each gate is still independently Haar random, which is where we actually apply \Cref{thm:mixing} on.
    
    Meanwhile, the actual worst-case statement for correctness is the following: when changing the first input qubit to the composite gate $G'^\dagger\cdot G$, the second output qubit will be affected if and only if $G'$ cannot be written as $(U_1\otimes U_2)\cdot G^\dagger$. This statement is not hard to prove by itself, but we need a robust version of it in both completeness and soundness to account for learning errors, as we could only go over an $\eps$-net of gates and state tomography at the end also introduces error. These two directions are handled in \Cref{lemma:complete} and \Cref{lemma:sound} respectively; See \Cref{sect:app} for the detailed proof, and other more straightforward applications.
    
	\section{Preliminaries}
	
	We start with some basic notations. We use $\ugrp(d)$ to denote the unitary group of dimension $d$, and use $\haar(d)$ to denote the Haar measure over $\ugrp(d)$. We use Greek letters such as $\rho,\pi,\tau$ to denote density matrices of quantum states. We use $\TrN{\cdot}$ and $\FrN{\cdot}$ for trace norm and Frobenius norm, and $d_\diamond(\cdot,\cdot)$ for diamond distance between unitary channels, defined as
    \[d_\diamond(\Phi_1,\Phi_2)=\max_\rho\TrN{(\Phi_1\otimes\id)\rho-(\Phi_2\otimes\id)\rho},\]
    with maximum taken over all density matrices $\rho$.
	
	A circuit architecture determines the positions of gates in the circuit. We define the depth of an architecture as follows.
	
	\begin{definition}\label{def:depth}
		In a circuit architecture, a path in space-time between two qubits $\rho$ and $\rho'$ is a sequence of qubits $\rho=\rho_0,\rho_1,\ldots,\rho_D=\rho'$ where for each $i$, there is a gate in the circuit that has $\rho_i$ as an input and $\rho_{i+1}$ as an output.
        
        We say the architecture has minimum depth $D$, if there exists a pair of input and output qubits connected by a path of length $D$, and every other path from input to output has depth at least $D$.
	\end{definition}
	
	A specific architecture of interest is the (1-dimensional) brickwork architecture:
	
	\begin{definition}
		A brickwork quantum circuit on $n$ qubits of depth $D$ consists of $D$ layers of gates, where on layer $j$, there is a two-qubit gate $G_{i,j}=G_{i+1,j}$ acting on the $i$-th and $(i+1)$-th qubit if and only if $i$ and $j$ have the same parity.
	\end{definition}
	
	The brickwork architecture could be generalized to higher dimensional geometry, and our results still hold for any constant dimension. However, for simplicity we stick with the 1-dimensional architecture in this paper.
	
	We will need the following statements about quantum state tomography and quantum process tomography on single qubits for our algorithms (see e.g. \cite{Nielsen_Chuang_2010})
	
	\begin{proposition}\label{prop:tomo}
		Given access to copies of a single-qubit state $\rho$, one can output an estimation $\tilde{\rho}$ with $\FrN{\rho-\tilde{\rho}}\leq\eps$ in $\poly(1/\eps)$ time.
	\end{proposition}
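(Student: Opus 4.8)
The plan is to use the Bloch-ball parametrization and reduce the problem to estimating three scalar expectations. Write $\rho=\frac12\bigl(\id+r_X X+r_Y Y+r_Z Z\bigr)$, where $r_P=\Tr(P\rho)\in[-1,1]$ for each Pauli $P\in\{X,Y,Z\}$, and let the output be $\tilde\rho=\frac12\bigl(\id+\tilde r_X X+\tilde r_Y Y+\tilde r_Z Z\bigr)$ for empirical estimates $\tilde r_P$. For each fixed $P$, I would measure the observable $P$ (i.e.\ measure in its eigenbasis, recording the outcome $\pm1$) on $N$ fresh copies of $\rho$ and set $\tilde r_P$ to be the sample mean; each single measurement is a $\{\pm1\}$-valued random variable with expectation $r_P$, so Hoeffding's inequality gives $\Pr\bigl[|\tilde r_P-r_P|>\delta\bigr]\le 2e^{-N\delta^2/2}$.

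Next I would choose $N=O\bigl(\delta^{-2}\log(1/\eta)\bigr)$ and take a union bound over the three Paulis, so that with probability at least $1-\eta$ we have $|\tilde r_P-r_P|\le\delta$ for all $P$ simultaneously. To pass from this to a Frobenius-norm guarantee, use that the Pauli matrices are orthogonal under the Hilbert--Schmidt inner product with $\FrN{P}^2=2$, so that
\[
\FrN{\rho-\tilde\rho}^2=\tfrac14\sum_{P}(r_P-\tilde r_P)^2\cdot\FrN{P}^2=\tfrac12\sum_{P}(r_P-\tilde r_P)^2\le\tfrac32\,\delta^2 .
\]
Taking $\delta=\eps/2$ (say) then yields $\FrN{\rho-\tilde\rho}\le\eps$, using $N=O(\eps^{-2}\log(1/\eta))$ copies per Pauli and hence $O(\eps^{-2})$ copies and $\poly(1/\eps)$ running time in total for constant failure probability $\eta$ (and still $\poly(1/\eps)$ if one wants $\eta=1/\poly$). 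If a bona fide density matrix is desired, one can additionally rescale the estimated Bloch vector to have Euclidean norm at most $1$; since $\rho$'s Bloch vector already lies in the unit ball, this projection only decreases the error, though the statement as phrased does not require it.

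There is no substantive obstacle here: the result is standard single-qubit tomography. The only points needing care are the constant-factor conversion between additive error in the Pauli expectations and Frobenius distance (harmless in one qubit, as shown above) and confirming that the sample complexity stays polynomial, which is immediate from the Hoeffding bound and the union bound over the three fixed observables.
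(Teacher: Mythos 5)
Your proposal is correct, and there is nothing in the paper to compare it against: \Cref{prop:tomo} is stated without proof, cited as a standard fact from Nielsen--Chuang, and your argument (measure each of $X,Y,Z$ on fresh copies, apply Hoeffding and a union bound, then convert to Frobenius norm via orthogonality of the Pauli basis, with correct constants throughout) is precisely the standard single-qubit tomography proof that citation refers to. The only implicit point, which you already flag, is that the guarantee is necessarily probabilistic; the paper's statement elides the success probability, and its applications account for it separately.
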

	
	\begin{proposition}\label{prop:ptomo}
		Given access to copies of a single-qubit unitary $U$, one can output an estimation $\tilde{U}$ with $d_\diamond(U,\tilde{U})\leq\eps$ in $\poly(1/\eps)$ time.
	\end{proposition}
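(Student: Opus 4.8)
The plan is to reduce single-qubit process tomography to (constant-dimensional) state tomography via the Choi isomorphism, and then project the resulting estimate onto the nearest unitary. First I would prepare the two-qubit maximally entangled state $\ket{\Omega}=\tfrac{1}{\sqrt2}(\ket{00}+\ket{11})$ and apply the oracle $U$ to the second qubit, producing copies of the pure state $\ket{\psi_U}=(\id\otimes U)\ket{\Omega}$; under the reshaping $\ket{j}\ket{i}\mapsto\ket{j}\bra{i}$ one has $\ket{\psi_U}\mapsto U/\sqrt2$, so $\ket{\psi_U}$ encodes $U$ up to a global phase. Next I would run state tomography on $\ket{\psi_U}\bra{\psi_U}$: although \Cref{prop:tomo} is stated for one qubit, the same $\poly(1/\delta)$-time guarantee holds for any constant number of qubits, so with $\poly(1/\delta)$ copies we obtain $\tilde\rho$ with $\FrN{\tilde\rho-\ket{\psi_U}\bra{\psi_U}}\le\delta$, where $\delta=\Theta(\eps)$ is fixed at the end. (One can avoid the ancilla entirely by applying $U$ to the four informationally complete inputs $\ket0,\ket1,\ket+,\ket{+i}$ and running single-qubit tomography on each output, at the cost of a slightly more fiddly recombination step; the Choi picture is cleaner to write.)

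The core step is extracting a unitary from $\tilde\rho$. I would take a leading eigenvector $\ket\phi$ of $\tilde\rho$: since $\ket{\psi_U}\bra{\psi_U}$ is rank one with top eigenvalue $1$, Weyl's inequality gives $\tilde\rho$ a spectral gap of $1-O(\delta)$, so the Davis--Kahan theorem yields $\FrN{\ket\phi\bra\phi-\ket{\psi_U}\bra{\psi_U}}\le O(\delta)$ and hence, after fixing the phase of $\ket\phi$, $\big\|\ket\phi-\ket{\psi_U}\big\|\le O(\delta)$. Reshaping $\ket\phi$ to a $2\times2$ matrix $M$ gives $\FrN{\sqrt2\,M-e^{i\theta}U}\le O(\delta)$ for some phase $\theta$. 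I would then project onto the unitary group by computing a singular value decomposition $\sqrt2\,M=W\Sigma V\ct$ and outputting $\tilde U=WV\ct$. Because $\sqrt2\,M$ is $O(\delta)$-close (in Frobenius, hence operator, norm) to the unitary $e^{i\theta}U$, all of whose singular values equal $1$, the standard perturbation bound for the unitary factor of the polar decomposition gives $\InfN{\tilde U-e^{i\theta}U}\le O(\delta)$.

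Finally, I would convert operator-norm closeness to diamond distance: for conjugation channels a standard estimate gives $d_\diamond(U,\tilde U)\le O\!\big(\min_\varphi\InfN{e^{i\varphi}U-\tilde U}\big)=O(\delta)$, using that the diamond distance is invariant under a global phase on either unitary. Choosing $\delta$ to be a small enough constant multiple of $\eps$ makes this at most $\eps$, and the whole procedure runs in $\poly(1/\delta)=\poly(1/\eps)$ time.

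The main thing to get right is the stability of the reconstruction in the middle step — the Davis--Kahan bound for the leading eigenvector and the Lipschitz continuity of the polar/SVD projection near the unitary group — but both are routine here since we are in fixed dimension $2$ and the relevant spectral gap (respectively smallest singular value) is bounded away from $0$ once $\delta$ is small. The global-phase ambiguity inherent in recovering $U$ from its Choi state causes no difficulty, precisely because $d_\diamond$ ignores it. Everything else is bookkeeping of the error parameters.
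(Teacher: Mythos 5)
Your proof is correct, and it fills in a fact the paper simply asserts as standard (citing Nielsen--Chuang) without giving any argument of its own. The Choi-state route you take --- constant-dimensional state tomography, leading-eigenvector extraction, projection onto the unitary group via the polar decomposition, and the bound $d_\diamond(U,\tilde U)\leq O(\min_\varphi\InfN{e^{i\varphi}U-\tilde U})$ --- is one of the standard instantiations of single-qubit process tomography the paper is implicitly invoking, and all the perturbation steps are sound in fixed dimension.
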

	
	The following simple lemma is particularly useful, which bounds the difference between states through a channel:
	
	\begin{lemma}\label{lemma:frob}
		Let $\Upphi$ be a quantum channel that takes $k$ qubits as the input. For every input states $\rho$ and $\rho'$, we have
		\[\FrN{\Upphi(\rho)-\Upphi(\rho')}\leq 2^{k/2}\FrN{\rho-\rho'}.\]
	\end{lemma}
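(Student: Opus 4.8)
The plan is to push everything down to the trace norm, where quantum channels are known to contract, and to pay a dimension factor only when passing between the Frobenius and trace norms. Set $\del=\rho-\rho'$, which is Hermitian, and note that by linearity $\Upphi(\rho)-\Upphi(\rho')=\Upphi(\del)$, so it suffices to prove $\FrN{\Upphi(\del)}\le 2^{k/2}\FrN{\del}$. I would obtain this by chaining
\[\FrN{\Upphi(\del)}\;\le\;\TrN{\Upphi(\del)}\;\le\;\TrN{\del}\;\le\;\sqrt{\mathrm{rank}(\del)}\cdot\FrN{\del}\;\le\;2^{k/2}\FrN{\del},\]
where the first inequality is applied on the output side, the middle one is the channel contraction, and the last two are a norm comparison on the $k$-qubit input space.

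For the two outer norm comparisons I would just recall that $\FrN{\cdot}$ and $\TrN{\cdot}$ are the $\ell_2$ and $\ell_1$ norms of the singular-value vector: since all singular values are nonnegative, $\FrN{M}=\bigl(\sum_i\sigma_i(M)^2\bigr)^{1/2}\le\sum_i\sigma_i(M)=\TrN{M}$ for every matrix $M$, which gives the first step; and by Cauchy--Schwarz $\TrN{M}=\sum_i\sigma_i(M)\le\sqrt{\mathrm{rank}(M)}\bigl(\sum_i\sigma_i(M)^2\bigr)^{1/2}=\sqrt{\mathrm{rank}(M)}\,\FrN{M}$, which gives the last step. Because $\del$ lives on $k$ qubits it is a $2^k\times 2^k$ matrix, so $\mathrm{rank}(\del)\le 2^k$ and the factor is $2^{k/2}$, independently of the output dimension of $\Upphi$.

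It then remains to justify the middle step $\TrN{\Upphi(\del)}\le\TrN{\del}$, i.e.\ that trace-preserving positive maps are contractive in trace norm. Since $\del$ need not be positive, I would spell this out via its decomposition $\del=\del_+-\del_-$ into positive and negative parts with orthogonal supports, so that $\TrN{\del}=\Tr\del_++\Tr\del_-$. By positivity of $\Upphi$ both $\Upphi(\del_+)$ and $\Upphi(\del_-)$ are positive semidefinite, hence have trace norm equal to their trace, and by the triangle inequality and trace preservation,
\[\TrN{\Upphi(\del)}\le\TrN{\Upphi(\del_+)}+\TrN{\Upphi(\del_-)}=\Tr\Upphi(\del_+)+\Tr\Upphi(\del_-)=\Tr\del_++\Tr\del_-=\TrN{\del}.\]

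There is no real obstacle here: the only place warranting a moment's care is that the trace-norm contraction must be invoked for a Hermitian (not necessarily positive) operator, which the positive/negative-part decomposition handles cleanly; everything else is a routine manipulation of singular values, and the constant $2^{k/2}$ comes solely from the rank bound on the input.
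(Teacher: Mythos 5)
Your chain of inequalities $\FrN{\Upphi(\del)}\le\TrN{\Upphi(\del)}\le\TrN{\del}\le 2^{k/2}\FrN{\del}$ is exactly the paper's proof, which likewise passes to the trace norm, uses contractivity of channels, and pays the $2^{k/2}$ factor only on the $k$-qubit input side. Your additional justifications of the norm comparisons and of trace-norm contraction via the positive/negative-part decomposition are correct and simply make explicit what the paper leaves implicit.
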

	\begin{proof}
		Since quantum channels do not increase trace distance, we have
		\[\FrN{\Upphi(\rho)-\Upphi(\rho')}\leq \TrN{\Upphi(\rho)-\Upphi(\rho')} \leq \TrN{\rho-\rho'} \leq 2^{k/2}\FrN{\rho-\rho'}.
		\qedhere\]
	\end{proof}
	
	As we are frequently dealing with differences between quantum states, here we present some facts about the space of such differences. We start from the Bloch sphere presentation of a single-qubit state:
	\begin{equation}
	\rho=\frac{1}{2}(I+r_xX+r_yY+r_zZ),\qquad r_x,r_y,r_z\in\R,\quad r_x^2+r_y^2+r_z^2\leq 1
	\end{equation}
	where
	\begin{equation}
	I=\begin{bmatrix}1&0\\0&1\end{bmatrix},
	X=\begin{bmatrix}0&1\\1&0\end{bmatrix},
	Y=\begin{bmatrix}0&-i\\i&0\end{bmatrix},
	Z=\begin{bmatrix}1&0\\0&-1\end{bmatrix}.
	\end{equation}
	Then the difference $\rho-\rho'$ between two single-qubit states can be written in the Pauli basis
	\begin{equation}
	\rho-\rho'=r_xX+r_yY+r_zZ, \quad r_x^2+r_y^2+r_z^2=\frac{1}{2}\FrN{\rho-\rho'}^2\leq 1.
	\end{equation}
	Therefore, if we view $\rho-\rho'$ under the coordinate system $\sqrt{2}\cdot(r_x,r_y,r_z)$, then the set $\del_1$ of all possible single-qubit differences can be identified with a ball of radius $\sqrt{2}$ in $\R^3$. The Euclidean space $\R^3$ is equipped with the standard trace inner product $\langle A,B\rangle=\Tr[A\ct B]$, so that the norm coincides with the Frobenius norm.
	
	More generally, let $\del_k$ be the set of all possible differences between two $k$-qubit states $\rho$ and $\rho'$. The difference $\rho-\rho'$ can be written as a real linear combination over the Pauli basis
	\[\{I,X,Y,Z\}^{\otimes k}\setminus \{I^{\otimes k}\}\]
	where the identity is removed as $\Tr[\rho-\rho']=0$. Since the Pauli basis are orthonormal, we can think of $\del_k$ as a subset of the Euclidean space $\R^{4^k-1}$ (although the set is much more complicated than a ball for $k>1$). The Euclidean space is also equipped with the standard trace inner product and the Frobenius norm. As a result, a quantum channel $\Upphi$ with $k$-qubit input and $n$-qubit output induces a linear map from $\del_k$ to $\del_n$:
	\[\rho-\rho'\mapsto \Upphi(\rho)-\Upphi(\rho'),\]
	which is also a real linear map from $\R^{4^k-1}$ to $\R^{4^n-1}$.
	
	\section{Anti-concentration Bound}\label{sect:anticc}
	
	In this section we prove \Cref{thm:anticc}. For simplicity, we introduce the notion of \emph{semi-polynomials}: A function is a degree-$d$ semi-polynomial in complex variables $z_1,\ldots,z_n$, if it is a polynomial in $z_1,\ldots,z_n,\overline{z_1},\ldots,\overline{z_n}$ of degree at most $d$. Now \Cref{thm:anticc} is implied by the following more general form:
	\begin{theorem}\label{thm:anticc2}
		Let $m\leq n$, and $F:\C^{nm}\to\C$ be a degree-$d$ semi-polynomial. Suppose that $F$ takes as inputs the entries of the first $m$ columns of an $n\times n$ unitary matrix, and that the value of $F$ is always a non-negative real number over this domain. Then for $U\sim\haar(n)$,
		\[\Pr\left[F(U_{1,1},\ldots,U_{n,m})\leq\eps \E\left[F\right]\right]\leq C'(n,m,d)\cdot \eps^{C(n,m,d)}\]
		holds for every $\eps>0$, where $C(n,m,d)>0$ and $C'(n,m,d)>0$ are constants that depend only on $n,m$ and $d$.
	\end{theorem}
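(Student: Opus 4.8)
The plan is to prove \Cref{thm:anticc2} by induction on the number of columns $m$, exploiting the recursive structure of the Haar measure, with a secondary induction on the dimension $n$ handling the base case $m=1$. The case $m=0$ is trivial since then $F$ is constant. For the inductive step, I would condition on the first column $u_1$ of $U\sim\haar(n)$, which is uniformly distributed on the unit sphere of $\C^n$; conditioned on $u_1$, the remaining $n-1$ columns are the image, under any fixed isometry $V_{u_1}\colon\C^{n-1}\to u_1^\perp$, of a Haar-random $W\sim\haar(n-1)$. Substituting this parametrization into $F$ produces $\tilde F_{u_1}(W):=F(u_1,V_{u_1}w_1,\dots,V_{u_1}w_{m-1})$, a degree-$d$ semi-polynomial in the first $m-1$ columns of $W$ that is still non-negative (its arguments are literally the columns of an element of $\ugrp(n)$). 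The object one really has to control is the conditional mean $\mu(u_1):=\E_W\big[\tilde F_{u_1}(W)\big]$, and the key point is that \emph{$\mu$ is itself a non-negative degree-$\le d$ semi-polynomial in the single column $u_1$}, hence a legal instance of \Cref{thm:anticc2} with $m=1$. This holds because Weingarten calculus expresses $\E_W$ of any monomial in the entries of the $V_{u_1}w_j$'s entirely in terms of the projector $V_{u_1}V_{u_1}\ct=\id-u_1u_1\ct$ — a degree-$2$ function of the entries of $u_1$ (so, in particular, $\mu$ does not depend on the choice of $V_{u_1}$) — and a count of the Weingarten contractions shows the total $u_1$-degree never exceeds $d$.

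Granting this, the induction step is the usual threshold trick: for a parameter $\tau\in(0,1)$,
\[\Pr\big[F\le\eps\,\E F\big]\ \le\ \Pr_{u_1}\!\big[\mu(u_1)\le\tau\,\E\mu\big]\ +\ \E_{u_1}\Pr_W\!\big[\tilde F_{u_1}(W)\le(\eps/\tau)\,\mu(u_1)\big],\]
bounding the first term by the $m=1$ case applied to $\mu$, the second by the induction hypothesis applied to $\tilde F_{u_1}$ (dimension $n-1$, with $m-1$ columns), and then choosing $\tau=\eps^{C_2/(C_1+C_2)}$ to equalize the two estimates. This gives a recursion of the form $1/C(n,m,d)=1/C(n,1,d)+1/C(n-1,m-1,d)$ and $C'(n,m,d)=C'(n,1,d)+C'(n-1,m-1,d)$; \Cref{thm:anticc} and \Cref{col:anticc} then follow by taking $m=n$ and applying the result to $|F|^2$ and to $|F-z|^2$, exactly as indicated after their statements.

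For the base case $m=1$ — anti-concentration of a non-negative degree-$d$ semi-polynomial of a single Haar-random unit vector $v\in\C^n$ — I would induct on $n$, this time peeling off a single coordinate $v_n$. Conditioned on $v_n$, the vector $(v_1,\dots,v_{n-1})$ is $\sqrt{1-|v_n|^2}$ times a Haar-random unit vector in $\C^{n-1}$, while $|v_n|^2$ has the $\mathrm{Beta}(1,n-1)$ density (which is log-concave) and $\arg v_n$ is independent and uniform. By the same computation as above, the conditional mean $\nu(v_n)$ over the inner unit vector is a non-negative degree-$\le d$ semi-polynomial in the single complex variable $v_n$, and a threshold split reduces matters to (i) the induction hypothesis in dimension $n-1$ and (ii) anti-concentration of $\nu$ over $v_n$. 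Writing $v_n=\sqrt t\,e^{i\theta}$, part (ii) splits once more: for fixed $t$, $\nu(\sqrt t\,e^{i\theta})$ is a non-negative trigonometric polynomial of degree $\le d$ in $\theta$, so by Fej\'er--Riesz it equals $|p(e^{i\theta})|^2$ for a polynomial $p$ of degree $\le d$ whose roots may be chosen inside the closed unit disk, and factoring $p$ into linear factors yields $\Pr_\theta[\,|p(e^{i\theta})|^2\le\delta\,\E_\theta|p|^2\,]=O(d\,\delta^{1/(2d)})$; and the $t$-marginal $\E_\theta[\nu(\sqrt t\,e^{i\theta})]$ is a degree-$\le d$ polynomial in $t$, for which one-dimensional Carbery--Wright \cite{Carbery01} against the log-concave Beta density gives an $O(d\,\delta^{1/d})$ bound. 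Combining these with absolute constants and unwinding both inductions produces $C(n,m,d)=\Omega(1/(n^2d))$ and $C'(n,m,d)$ polynomial in $n$, in line with the Remark.

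The hard part, I expect, will be the structural bookkeeping around $\mu$ (and $\nu$): pinning down precisely why the conditional mean is a semi-polynomial of degree \emph{at most} $d$ is the crux, the subtlety being that $V_{u_1}$ is a genuinely non-polynomial (Householder or Gram--Schmidt) function of $u_1$ and the dependence only collapses to the polynomial $\id-u_1u_1\ct$ after averaging over $W$; one also has to check the degree count survives every Weingarten contraction. The other laborious point is the constant tracking: $C$ degrades at each of the $\Theta(n)$ recursion levels, so the $\eps$-budget must be split geometrically (via $\tau=\eps^{C_2/(C_1+C_2)}$) rather than as $\eps\mapsto\sqrt\eps$, which would be exponentially lossy, and one must verify that the additive accumulation of the $C'$'s stays polynomial in $n$. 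The remaining ingredients — disintegration of the Haar measure, the Weingarten expansion, Fej\'er--Riesz, and one-dimensional Carbery--Wright — are classical.
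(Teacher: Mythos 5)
Your proposal is essentially the paper's proof: the same double induction (on the number of columns, with an inner induction on the dimension for the one-column case), the same key lemma that the conditional expectation over part of the Haar-distributed data is again a non-negative semi-polynomial of degree at most $d$ in the remaining variables, and the same threshold split with geometrically balanced exponents yielding the harmonic recursion $1/C = 1/C_1 + 1/C_2$. The only differences are interchangeable technical choices --- you condition on the first column and certify the polynomial structure of the conditional mean via Weingarten calculus, where the paper averages out the last column and uses the Gaussian representation with complex Wick's theorem; and your Fej\'er--Riesz plus one-dimensional Carbery--Wright steps replace the paper's elementary root-location arguments on the unit circle and against the $\mathrm{Beta}(1,n-1)$ radial density.
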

	
	To see that \Cref{thm:anticc2} implies \Cref{thm:anticc}, it suffices to take $m=n$ and notice that $|F|^2=F\overline{F}$ is a degree-$2d$ semi-polynomial that is always non-negative. We prove \Cref{thm:anticc2} via induction, and the proof is divided into four stages. 
	
	\subsection{\texorpdfstring{$m=1,n=1$}{m=1,n=1}}
	We start with the simplest case when $m=n=1$. In this case $F:\C\to\C$ is a single-variable degree-$d$ semi-polynomial over the unit circle $\{z:|z|=1\}$. Since $\overline{z}=1/z$ over this domain, assuming $F\neq 0$ we can write $F$ as
	\[F(z)=G(z)/z^d\]
	where $G(z)=\alpha(z-z_1)\cdots(z-z_{2d})$ is a degree $2d$ polynomial in $z$. Without loss of generality we can assume that $|\alpha|=1$, and therefore
	\begin{equation}
	F(z)=|F(z)|=|G(z)|=|z-z_1|\cdots|z-z_{2d}|.
	\end{equation}
	Then we can bound the expectation of $F$ as
	\begin{equation}
	\E[F]\leq\sup_{|z|=1}\prod|z-z_i|\leq\prod(1+|z_i|).
	\end{equation}
	
	On the other hand, if $|z-z_i|>\delta\geq0$ for some $|z|\leq 1$, it is easy to show that
	\begin{equation}
	\frac{|z-z_i|}{1+|z_i|}>\frac{\delta}{\delta+2}.
	\end{equation}
	Thus if $|z-z_i|>\delta$ holds for all $i=1,\ldots,2d$, then
	\begin{equation}
	F(z)=\prod|z-z_i|>\left(\frac{\delta}{\delta+2}\right)^{2d}\prod(1+|z_i|)\geq\left(\frac{\delta}{\delta+2}\right)^{2d}\E[F].
	\end{equation}
	That means, if we take $\eps=\left(\frac{\delta}{\delta+2}\right)^{2d}$, then $F(z)\leq\eps\E[F]$ only happens when $z$ falls into one of the $\delta$-balls around some $z_i$. Each $\delta$-ball intersect with the unit circle as an arc of angle at most $4\delta$, and thus has measure at most $2\delta/\pi$ under the Haar measure over the unit circle. Therefore we conclude that
	\begin{align}
	\Pr_{|z|=1}\left[F(z)\leq\eps\E[F]\right] &\leq \min\{4d\delta/\pi,1\} \nonumber\\
	&=\min\left\{\frac{8d}{\pi}\cdot\frac{\eps^{1/(2d)}}{1-\eps^{1/(2d)}},1\right\}
	\leq\left(\frac{8d}{\pi}+1\right)\eps^{1/(2d)}, \label{eq:a1}
	\end{align}
	and we can take $C(1,1,d)=1/(2d)$ and $C'(1,1,d)=8d/\pi+1$.
	
	\subsection{\texorpdfstring{$m=1,n=1$}{m=1,n=1}, Alternative Distribution}
	
	For the sake of later use, we also need a version where $z=u_1$ follows the distribution of the first coordinate of a Haar-random unit vector $(u_1,\ldots,u_n)\in\C^n$, $n\geq 2$. In this case $\overline{z}=1/z$ no longer holds, and we need an alternative method.
	
	For every $r\in\R$, $0\leq r\leq 1$ we define
	\[P(r)=\E_{|u_1|=r}[F(u_1)]\]
	where the expectation is over a Haar-random $z\in\C$ with $|z|=r$. Notice that a monomial $u_1^k\overline{u_1}^{\ell}$ in $F$ has expectation $0$ unless $k=\ell$, and when $k=\ell$ we have $u_1^k\overline{u_1}^{\ell}=r^{2k}$. That means $P(r)$ is a degree-$d$ polynomial in $r^2$.
	
	Notice that $r^2$ follows the Beta distribution $\mathrm{Beta}(1,n-1)$, with the density function
	\[f(r;1,n-1)=(n-1)(1-r)^{n-2},\]
	and $\E[P(r)]$ under this distribution coincides with $\E[F]$.
	
	With the analysis in the previous section which also works on $P(r)$, we can show that if we take $\eps=\left(\frac{\delta}{\delta+2}\right)^d$, then $P(r)\leq\eps\E[P(r)]=\eps\E[F]$ only happens when $r^2$ falls into one of the $\delta$-balls around $d$ complex roots of $P$, which are intervals of length at most $2\delta$ on the real line. Since the density function of $r^2$ has a maximum of $n-1$, we have
	\begin{equation}
	\Pr[P(r)\leq\eps\E[F]]\leq\min\{2d(n-1)\delta,1\}\leq 4dn\eps^{1/d}.
	\end{equation}
	
	On the other hand, applying \cref{eq:a1} from the previous section on $F(rz)$ for every fixed $r$ directly provides
	\begin{equation}
	\Pr_{|z|=r}[F(z)\leq\eps P(r)]\leq \left(\frac{8d}{\pi}+1\right)\eps^{1/(2d)}.
	\end{equation}
	Thus by a union bound we have
	\begin{align}
	\Pr[F(u_1)\leq\eps\E[F]]
	&\leq \Pr\left[P(r)\leq\sqrt{\eps}\E[F]\right]+\Pr\left[F(z)\leq\sqrt{\eps} P(|z|)\right] \nonumber\\
	&\leq 4dn\eps^{1/(2d)}+\left(\frac{8d}{\pi}+1\right)\eps^{1/(4d)} \nonumber\\
	&\leq 4d(n+1)\eps^{1/(4d)}. \label{eq:a2}
	\end{align}
	We note that not only this result will be used in the next stage, the technique itself will also be reapplied multiple times in the later proofs.
	
	\subsection{\texorpdfstring{$m=1,n>1$}{m=1,n>1}}
	In this stage we consider $m=1$ with general $n$, and thus the inputs to $F$ is a Haar-random unit vector $(u_1,\ldots,u_n)\in\C^n$. The strategy is to use induction on $n$, and show that with high probability over the choice of $u_1$,
	\[P(u_1)=\E_{u_2,\ldots,u_n}[F(u_1,\ldots,u_n)]\]
	is not too small conditioned on the fixed $u_1$. To handle this, we need the following lemma.
	
	\begin{lemma}
		If $F:\C^n\to\C$ is a degree-$d$ semi-polynomial, and $(u_1,\ldots,u_n)\in\C^n$ is a Haar-random unit vector, then
		$P(u_1)=\E\nolimits_{u_2,\ldots,u_n}[F(u_1,\ldots,u_n)]$
		is a degree-$d$ semi-polynomial on $u_1$.
	\end{lemma}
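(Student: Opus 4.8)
The plan is to expand $F$ into monomials and compute the conditional expectation term by term. Write
\[F(u_1,\ldots,u_n)=\sum_{(\vec a,\vec b)} c_{\vec a,\vec b}\,\prod_{i=1}^n u_i^{a_i}\overline{u_i}^{b_i},\]
the (finite) sum ranging over pairs of exponent tuples with $\sum_{i=1}^n(a_i+b_i)\le d$. By linearity of expectation it suffices to handle a single monomial; I will show that its conditional expectation over $u_2,\ldots,u_n$ is a constant multiple of $u_1^{a_1}\overline{u_1}^{b_1}(1-u_1\overline{u_1})^{k}$ for a nonnegative integer $k$ with $a_1+b_1+2k\le d$, which is visibly a semi-polynomial in $u_1$ of degree at most $d$, and then sum back over the monomials.

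The first step is to pin down the conditional law of $v=(u_2,\ldots,u_n)$ given $u_1$. The Haar measure on the unit sphere of $\C^n$ is invariant under every unitary of the form $\diag(1,W)$ with $W\in\ugrp(n-1)$, and such a map fixes $u_1$; hence, conditioned on $u_1$, the vector $v$ is $\ugrp(n-1)$-invariant. Therefore $|v|=\sqrt{1-|u_1|^2}$ is determined by $|u_1|$, while $v/|v|$ is Haar-distributed on the unit sphere of $\C^{n-1}$ and is independent of $u_1$ altogether. Consequently, writing $v=\sqrt{1-|u_1|^2}\,w$ with $w$ Haar on the unit sphere of $\C^{n-1}$,
\[\E_{u_2,\ldots,u_n}\!\Big[\textstyle\prod_{i=2}^n u_i^{a_i}\overline{u_i}^{b_i}\;\Big|\;u_1\Big]=(1-|u_1|^2)^{\frac12\sum_{i=2}^n(a_i+b_i)}\cdot\E_w\!\Big[\textstyle\prod_{i=2}^n w_i^{a_i}\overline{w_i}^{b_i}\Big].\]

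Next I evaluate the sphere moment $\E_w[\prod_{i\ge2}w_i^{a_i}\overline{w_i}^{b_i}]$. Multiplying the $j$-th coordinate $w_j$ by a unit complex number $\omega$ is a unitary preserving the sphere, so the monomial's expectation must be invariant under multiplication by $\omega^{a_j-b_j}$; this forces the expectation to vanish unless $a_i=b_i$ for every $i\ge2$, and when $a_i=b_i$ for all $i\ge2$ it equals a nonnegative constant depending only on $n$ and the $a_i$. In the surviving case $\sum_{i\ge2}(a_i+b_i)=2k$ with $k=\sum_{i\ge2}a_i$, so the factor above is $(1-|u_1|^2)^{k}=(1-u_1\overline{u_1})^{k}$, a polynomial in $u_1\overline{u_1}$. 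Multiplying by the $u_1$-part $u_1^{a_1}\overline{u_1}^{b_1}$ and expanding $(1-u_1\overline{u_1})^k$, the monomial contributes to $P(u_1)$ a semi-polynomial in $u_1$ of degree $a_1+b_1+2k\le\sum_{i=1}^n(a_i+b_i)\le d$. Summing over monomials, $P(u_1)$ is a semi-polynomial in $u_1$ of degree at most $d$.

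I do not expect a genuine obstacle: the one point that needs care is the justification that, conditioned on $u_1$, the tail $(u_2,\ldots,u_n)$ is a scalar multiple of a lower-dimensional Haar unit vector with the scalar depending on $u_1$ only through $|u_1|^2$; the remainder is the standard phase-symmetry vanishing of sphere moments together with the bookkeeping of degrees.
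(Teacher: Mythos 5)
Your proof is correct and follows essentially the same route as the paper's: decompose $F$ into monomials, write the tail $(u_2,\ldots,u_n)$ as $\sqrt{1-|u_1|^2}$ times a Haar unit vector in $\C^{n-1}$, observe that the tail moment vanishes unless its degree is even (so no square roots survive), and in the surviving case obtain a constant times $(1-u_1\overline{u_1})^k$ with the degree bounded by $d$. Your phase-symmetry argument forcing $a_i=b_i$ for $i\ge 2$ is a slightly sharper version of the paper's $u'\to -u'$ symmetry, but the structure and conclusion are identical.
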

	\begin{proof}
		Consider each monomial in $F$, and let $G$ be the part of monomial over $u_2,\ldots,u_n$ and their conjugates. Since $(u_2,\ldots,u_n)=r\cdot u'$, where $r=(1-|u_1|^2)^{1/2}$ and $u'$ follows the Haar measure over the unit sphere in $\C^{n-1}$, we have:
		\begin{itemize}
			\item If $G$ has an odd degree then $\E[G]=0$, by the symmetry $u'\to-u'$;
			\item And if $G$ has an even degree $2\ell\leq d$ then
			\begin{equation}
			\E[G]=r^{2\ell}\E[G(u')]=(1-u_1\overline{u_1})^\ell\E[G(u')],
			\end{equation}
			where $\E[G(u')]$ is a constant irrelevant to the choice of $u_1$.
		\end{itemize}
		Either way $\E[G]$ is a degree-$d$ semi-polynomial in $u_1$, and thus so is $P(u_1)$.
	\end{proof}
	
	Since $P$ is an expectation over $F$, it is also always non-negative and has the same expectation as $\E[F]$. Thus \cref{eq:a2} from the previous section gives
	\begin{equation}
	\Pr\left[P(u_1)\leq\eps\E[F]\right] \leq 4d(n+1)\eps^{1/(4d)}.
	\end{equation}
	
	Now we fix some $u_1$, and consider the degree-$d$ semi-polynomial
	\[F_{u_1}(u_2,\ldots,u_n)=F(u_1,u_2,\ldots,u_n)\]
	which is always non-negative. Since $(u_2,\ldots,u_n)=ru'$ for some $r\in\R$, and $u'$ follows the Haar measure over the unit sphere in $\C^{n-1}$, we can apply the induction hypothesis for $n-1$ on $F_{u_1}(ru')$ to get
	\begin{equation}
	\Pr\left[F_{u_1}(ru')\leq\eps\E\nolimits_{u'}[F_{u_1}(ru')]\right]\leq C'(n-1,1,d)\cdot \eps^{C(n-1,1,d)}.
	\end{equation}
	Therefore we conclude that, for every $p\in(0,1)$,
	\begin{align}
	&\Pr\left[F(u_1,\ldots,u_n)\leq\eps\E[F]\right] \nonumber\\
	\leq\ & \Pr\left[P(u_1)\leq\eps^p\E[F]\right]+\Pr\left[F(u_1,\ldots,u_n)\leq\eps^{1-p} P(u_1)\right] \nonumber\\
	=\ & \Pr\left[P(u_1)\leq\eps^p\E[F]\right]+\Pr\left[F_{u_1}(u_2,\ldots,u_n)\leq\eps^{1-p}\E[F_{u_1}]\right] \nonumber\\
	\leq\ & 4d(n+1)\eps^{p/(4d)}+C'(n-1,1,d)\cdot \eps^{(1-p)C(n-1,1,d)}. \label{eq:a3}
	\end{align}
	We can take 
	\[C(n,1,d)=\max_p\min\left\{\frac{p}{4d}, (1-p)C(n-1,1,d)\right\}=\frac{1}{4d+C(n-1,1,d)^{-1}}=\frac{1}{4nd},\]
	and $C'(n,1,d)=4d(n+1)+C'(n-1,1,d)=O(n^2d)$.
	
	\subsection{\texorpdfstring{$m>1,n>1$}{m>1,n>1}}
	Now we handle the general case when the inputs to the semi-polynomial consist of $m$ columns of a Haar random unitary. The proof is similar to the last stage, using the fact that the input distribution can be viewed as a unitary-invariant distribution over $m$ orthonormal vectors in $\C^n$. In particular, let the vectors be $v_1,\ldots,v_m$, and we consider the function
	\[P(v_1,\ldots,v_{m-1})=\E_{v_m}[F(v_1,\ldots,v_m)].\]
	Here $v_m$ is a Haar-random vector over the unit sphere in the $(n-m)$-dimensional orthogonal subspace of $\mathrm{span}(v_1,\ldots,v_m)$. We would like to show that $P$ is a semi-polynomial in order to use induction, and we first show it with $v_m$ replaced with a Gaussian distribution.
	
	\begin{lemma}\label{lemma:gaussian}
		Let $v_1,\ldots,v_m$ be a set of $m<n$ orthonormal vectors in $\C^n$, and let $g=(g_1,\ldots,g_n)\in\C^n$ distributed as a standard $(n-m)$-dimensional complex Gaussian in the orthogonal subspace of $\mathrm{span}(v_1,\ldots,v_m)$.
		
		Let $G:\C^n\to\C$ be a degree-$d$ semi-polynomial in $(g_1,\ldots,g_n)$. Then
		\[Q(v_1,\ldots,v_m)=\E[G(g_1,\ldots,g_n)]\]
		is a degree-$d$ semi-polynomial in the entries of $v_1,\ldots,v_m$.
	\end{lemma}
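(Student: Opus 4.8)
The plan is to evaluate $Q$ directly via Wick's theorem (Isserlis' theorem for complex Gaussians), which expresses every Gaussian moment through the covariance of $g$; the covariance turns out to be exactly the orthogonal projector onto $\spn(v_1,\ldots,v_m)^\perp$, whose entries are degree-$2$ semi-polynomials in the entries of the $v_j$, and a degree-$d$ monomial in $g$ can involve at most $d/2$ covariance factors, which yields the degree-$d$ bound.

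\textbf{Step 1 (covariance).} Since $v_1,\ldots,v_m$ are orthonormal, the orthogonal projector onto the $(n-m)$-dimensional subspace $\spn(v_1,\ldots,v_m)^\perp$ is $\Pi=\id-\sum_{j=1}^{m}v_jv_j\ct$, so its entries $\Pi_{i,i'}=\delta_{i,i'}-\sum_{j=1}^{m}(v_j)_i\overline{(v_j)_{i'}}$ are semi-polynomials of degree at most $2$ in the entries of $v_1,\ldots,v_m$ and their conjugates. Writing $g=\sum_{\ell}\gamma_\ell w_\ell$ for any orthonormal basis $w_1,\ldots,w_{n-m}$ of this subspace and i.i.d.\ standard complex Gaussians $\gamma_\ell$, and using $\sum_\ell w_\ell w_\ell\ct=\Pi$, I would check that $g$ is a proper (circularly symmetric) complex Gaussian vector, i.e.\ $\E[g_i\overline{g_{i'}}]=\Pi_{i,i'}$ and $\E[g_ig_{i'}]=0$.

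\textbf{Step 2 (Wick) and Step 3 (degree bound).} Expand $G$ as a linear combination of monomials $g_{i_1}\cdots g_{i_p}\,\overline{g_{i'_1}}\cdots\overline{g_{i'_q}}$ with $p+q\le d$. By Wick's theorem for proper complex Gaussians, the expectation of such a monomial vanishes unless $p=q$, in which case it equals $\sum_{\sigma\in S_p}\prod_{a=1}^{p}\Pi_{i_a,\,i'_{\sigma(a)}}$, the permanent of the $p\times p$ submatrix of $\Pi$ indexed by the rows $i_1,\ldots,i_p$ and columns $i'_1,\ldots,i'_p$. In a surviving monomial $2p=p+q\le d$, so $p\le d/2$, and hence its expectation is a sum of products of at most $d/2$ entries of $\Pi$, each of degree $\le 2$, i.e.\ a semi-polynomial of degree $\le d$ in the entries of $v_1,\ldots,v_m$ and their conjugates. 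Summing over the monomials of $G$ weighted by their constant coefficients shows that $Q(v_1,\ldots,v_m)=\E[G(g)]$ is a semi-polynomial of degree $\le d$.

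The only delicate point is Step~1: one must correctly identify that $g$ is \emph{proper}, i.e.\ that $\E[g_ig_{i'}]=0$ and not merely that $\E[g_i\overline{g_{i'}}]=\Pi_{i,i'}$. Properness is what forces the unbalanced ($p\neq q$) moments to vanish and makes every surviving moment a polynomial purely in the entries of $\Pi$ (rather than also involving $\overline{\Pi}$), and it is the cleanest route to the sharp degree bound that the induction in the subsequent stage needs. Everything after that is routine bookkeeping.
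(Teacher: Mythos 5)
Your proposal is correct and follows essentially the same route as the paper: identify the covariance of $g$ as the projector $I_n-\sum_j v_jv_j\ct$ (the paper derives it via Gram--Schmidt applied to a standard Gaussian, you via an orthonormal basis of the subspace, which is equivalent), then apply the complex Wick theorem so that only balanced monomials survive and each contributes a product of at most $d/2$ covariance entries, each of degree $2$ in the $v_j$. Your explicit attention to properness ($\E[g_ig_{i'}]=0$) is a point the paper leaves implicit, but it is not a different argument.
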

	\begin{proof}
		Notice that $g=(g_1,\ldots,g_n)$ can be obtained by taking a standard $n$-dimensional complex Gaussian $g'\in\C^n$, and apply the Gram-Schmidt orthogonalization:
		\begin{equation}
		g=g'-\sum_{i=1}^m v_iv_i\ct g'.
		\end{equation}
		Therefore the covariance matrix of $g$ is $\left(I_n-\sum v_iv_i\ct\right)^2=I_n-\sum v_iv_i\ct$, where each entry is a degree-$2$ semi-polynomial in $v_1,\ldots,v_m$. By the complex Wick's theorem \cite{Fassino19}, the expectation of a monomial in $G(g)$ 
		\[\E\left[g_1^{\alpha_1}\overline{g_1}^{\beta_1}\cdots g_n^{\alpha_n}\overline{g_n}^{\beta_n}\right]\]
		is non-zero only when $\sum\alpha_i=\sum\beta_i$, in which case it is
		a polynomial function over the entries of the covariance matrix of $g$ with degree $\sum\alpha_i$. Therefore $Q=\E[G]$ is a degree-$d$ semi-polynomial in the entries of $v_1,\ldots,v_m$. 
	\end{proof}
	\begin{corollary}
		$P(v_1,\ldots,v_{m-1})=\E\nolimits_{v_m}[F(v_1,\ldots,v_m)]$ is a degree-$d$ semi-polynomial in the entries of $v_1,\ldots,v_{m-1}$.
	\end{corollary}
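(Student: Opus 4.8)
The plan is to realize the Haar-random vector $v_m$ as the normalization of a Gaussian, so that \Cref{lemma:gaussian} can be invoked. Set $W=\spn(v_1,\dots,v_{m-1})^\perp$ and $k=\dim W=n-m+1\geq 1$, and let $g\in\C^n$ be a standard complex Gaussian supported on $W$ (with covariance the orthogonal projector onto $W$, namely $I_n-\sum_{i=1}^{m-1}v_iv_i\ct$). Then $g=\|g\|\cdot v_m$ where $v_m=g/\|g\|$ is Haar-distributed on the unit sphere of $W$ and $\|g\|$ is independent of $v_m$; moreover $\|g\|^2$ is a sum of $k$ i.i.d.\ unit-mean exponentials, so $\E[\|g\|^{2a}]=k(k+1)\cdots(k+a-1)=:c_{k,a}>0$ for every integer $a\geq 0$. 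These are the standard facts about rotationally invariant distributions on a complex subspace.

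I would then expand $F$ into monomials. Since $v_1,\dots,v_{m-1}$ are held fixed while we average over $v_m$, every monomial of $F$ factors as $M'(v_1,\dots,v_{m-1})\cdot M(v_m)$, where $M'$ has degree $e$ and $M$ has degree $a$ in the entries of $v_m$ and degree $b$ in their conjugates, with $e+a+b\leq d$. By linearity of expectation, it suffices to show that $\E_{v_m}[M(v_m)]$ is a semi-polynomial in the entries of $v_1,\dots,v_{m-1}$ of degree at most $a+b$: multiplying by $M'$ and summing over monomials then gives that $P$ is a degree-$\leq d$ semi-polynomial, which is the claim.

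To evaluate $\E_{v_m}[M(v_m)]$, first note that if $a\neq b$ the phase symmetry $v_m\mapsto e^{i\theta}v_m$ (valid because $W$ is a complex subspace) forces $\E_{v_m}[M(v_m)]=0$, so there is nothing to prove. If $a=b$, then $g=\|g\|\,v_m$ together with the independence of $\|g\|$ and $v_m$ gives $\E[M(g)]=\E[\|g\|^{a+b}]\cdot\E_{v_m}[M(v_m)]=c_{k,a}\cdot\E_{v_m}[M(v_m)]$, hence $\E_{v_m}[M(v_m)]=c_{k,a}^{-1}\,\E[M(g)]$. Now $M$ is itself a degree-$(a+b)$ semi-polynomial in the entries of $g$, and $v_1,\dots,v_{m-1}$ is a set of $m-1<n$ orthonormal vectors, so \Cref{lemma:gaussian} (applied with $m-1$ in place of $m$) shows that $\E[M(g)]$ is a degree-$(a+b)$ semi-polynomial in the entries of $v_1,\dots,v_{m-1}$; dividing by the positive constant $c_{k,a}$ preserves this. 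The only step that needs genuine care is the Gaussian--sphere decomposition of the first paragraph---that the standard complex Gaussian on $W$ splits as an independent product of its length and a uniform direction, and that its even moments are nonzero so the division by $c_{k,a}$ is legitimate---while the rest is bookkeeping of degrees.
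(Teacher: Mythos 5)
Your proposal is correct and follows essentially the same route as the paper: realize $v_m$ as a normalized Gaussian supported on the orthogonal complement, reduce to monomials, and use the identity $\E[M(g)]=\E[\|g\|^{\ell}]\cdot\E[M(v_m)]$ together with \Cref{lemma:gaussian}. Your extra case distinction for $a\neq b$ is harmless but unnecessary, since all moments $\E[\|g\|^{\ell}]$ are strictly positive, which is the only fact the division actually requires.
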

	\begin{proof}
		Notice that $v_m$ is equidistributed as $g/\|g\|_2$, where $g=(g_1,\ldots,g_n)$ is the Gaussian in \Cref{lemma:gaussian}. In other words, $g=r\cdot v_m$ where $r\in\R$ follows a fixed $\chi$ distribution independent of $v_m$.
		
		Now consider each monomial in $F$, and let $G$ be the part of monomial over entries of $v_m$, then it suffices to show that $\E[G(v_m)]$ is a degree-$d$ semi-polynomial in $v_1,\ldots,v_{m-1}$. \Cref{lemma:gaussian} already showed this for $\E[G(g)]$, and since $G$ is a monomial of degree $\ell\leq d$, we have
		\begin{equation}
		\E[G(g)]=\E[G(r\cdot v_m)]=\E[r^\ell]\E[G(v_m)],
		\end{equation}
		where $\E[r^\ell]$ is a non-zero constant irrelevant to the choice of $v_1,\ldots,v_{m-1}$. Therefore $\E[G(v_m)]$ is also a degree-$d$ semi-polynomial in $v_1,\ldots,v_m$.
	\end{proof}
	
	Since $P$ is an expectation over $F$, it is also always non-negative and has the same expectation as $\E[F]$. Thus the induction hypothesis gives
	\begin{equation}
	\Pr\left[P(v_1,\ldots,v_{m-1})\leq\eps\E[F]\right]\leq C'(n,m-1,d)\cdot \eps^{C(n,m-1,d)}.
	\end{equation}
	
	Now we fix some $v_1,\ldots,v_{m-1}$ and consider the degree-$d$ semi-polynomial
	\[F_{v_1,\ldots,v_{m-1}}(v_m)=F(v_1,\ldots,v_{m-1},v_m),\]
	which is always non-negative. Since there exists a linear map $A:\C^{n-m+1}\to\C^n$ such that $v_m=Au$, where $u$ is a Haar-random unit vector in $\C^{n-m+1}$, we can apply \cref{eq:a3} from the previous stage for $m=1$ on $F_{v_1,\ldots,v_{m-1}}(Au)$ to get
	\begin{equation}
	\Pr\left[F_{v_1,\ldots,v_{m-1}}(Au)\leq\eps\E\nolimits_u\left[F_{v_1,\ldots,v_{m-1}}(Au)\right]\right]\leq C'(n,1,d)\cdot \eps^{C(n,1,d)}.
	\end{equation}
	Therefore we conclude that, for every $p\in(0,1)$,
	\begin{align}
	&\Pr\left[F(v_1,\ldots,v_m)\leq\eps\E[F]\right] \nonumber\\
	\leq\ & \Pr\left[P(v_1,\ldots,v_{m-1})\leq\eps^p\E[F]\right]
	+\Pr\left[F(v_1,\ldots,v_m)\leq\eps^{1-p}P(v_1,\ldots,v_{m-1})\right] \nonumber\\
	=\ & \Pr\left[P(v_1,\ldots,v_{m-1})\leq\eps^p\E[F]\right]
	+\Pr\left[F_{v_1,\ldots,v_{m-1}}(v_m)\leq\eps^{1-p}\E[F_{v_1,\ldots,v_{m-1}}]\right] \nonumber\\
	\leq\ & C'(n,m-1,d)\cdot \eps^{pC(n,m-1,d)}+C'(n,1,d)\cdot \eps^{(1-p)C(n,1,d)},
	\end{align}
	Similar to the previous stage, we can take 
	\[C(n,m,d)=\frac{1}{C(n,m-1,d)^{-1}+ C(n,1,d)^{-1}}=\frac{1}{4nmd},\]
	and $C'(n,m,d)=C'(n,m-1,d)+C'(n,1,d)=O(n^2md)$. This completes the proof of \Cref{thm:anticc2}.
	
	\section{Mixing Bound for Random Circuits}\label{sect:mixing}
	
	In this section we prove \Cref{thm:mixing}. Since the output qubit $\phic(\rho)$ is in the lightcone of the input qubit $\rho$, there exists gates $G_1,\ldots,G_D$ in the circuit $\mc C$ that connect the input and output qubits. That is, there are qubits $\rho_0,\rho_1,\ldots,\rho_D$ with $\rho_0=\rho$ and $\rho_D=\phic(\rho)$, such that the gate $G_i$ has $\rho_{i-1}$ as an input and $\rho_i$ as an output. Each gate $G_i$ is independently drawn from the Haar measure $\haar(2^k)$.
	
	Note that even when gate $G_i$ is given, we cannot directly claim any relationship between $\rho_{i-1}$ and $\rho_i$, as the other input qubits to $G_i$ are correlated and possibly entangled with $\rho_{i-1}$. To handle this, we let $\tau_{i-1}$ be the composite state of the $k$ input qubits to $G_i$, and by \Cref{lemma:frob} we have
	\begin{equation}\label{eq:m1}
	\FrN{\tau_{i-1}-\tau'_{i-1}}\geq 2^{-k/2}\FrN{\rho_{i-1}-\rho'_{i-1}},
	\end{equation}
	where $\rho'_i$ and $\tau'_i$ are the corresponding states when the input state is $\rho'_0=\rho'$. Therefore we only need to bridge the remaining gaps by proving
	\begin{equation}
	\FrN{\rho_i-\rho'_i}\geq \lambda_i\FrN{\tau_{i-1}-\tau'_{i-1}}
	\end{equation}
	where $\lambda_i$ is some function of $G_i$, and then bound the distribution of $\lambda_i$ by applying the anti-concentration bound from \Cref{thm:anticc}.
	
	At a first glance, this looks impossible as $\tau_{i-1}$ is a $k$-qubit state while $\rho_i$ is single-qubit, which means that as long as $k>1$, whatever $G_i$ is, there will be input states $\tau_{i-1}\neq\tau'_{i-1}$ to $G_i$ with the output qubits $\rho_i=\rho'_i$. The key observation is that the states $\tau_i$ cannot be arbitrary $k$-qubit states: Since all the input qubits to the circuit $\mc C$ are fixed except $\rho$, when the circuit $\mc C$ is given, each $\tau_i$ is the result of $\rho$ through a fixed quantum channel. As the difference $\rho-\rho'$ ranges within the 3-dimensional Euclidean space $\Updelta_1$, the difference $\tau_{i-1}-\tau'_{i-1}$ could also only range within a 3-dimensional subspace of $\Updelta_k$.
	
	This allows us to define and bound the distribution of $\lambda_i$ that uniformly holds for every such subspace as follows:
	
	\begin{lemma}\label{lemma:mixing}
		For every $k\in\mathbb{N}$, there exists a distribution $\Uplambda_k$ over $[0,\infty)$ that satisfies: 
		\begin{itemize}
			\item There exists $t>0$ such that $\E[\Uplambda_k^{-t}]<\infty$, where we use $\E[\Uplambda_k^{-t}]$ as shorthand for $\E\nolimits_{x\sim\Uplambda_k}[x^{-t}]$.
			\item For every fixing of the input qubits other than $\rho$, and layers $1,\ldots,i-1$ of the circuit $\mc C$, there exists a function $\lambda_i\colon\ugrp(2^k)\to[0,\infty)$ at layer $i$ such that
			\[\FrN{\rho_i-\rho'_i}\geq \lambda_i(G_i)\FrN{\tau_{i-1}-\tau'_{i-1}}\]
			holds for all input states $\rho$ and $\rho'$, and $\lambda_i(G_i)\sim\Uplambda_k$ when $G_i\sim\haar(2^k)$.
		\end{itemize}
	\end{lemma}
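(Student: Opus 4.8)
The plan is to isolate the fresh gate $G_i$ at layer $i$ and to realize the per-layer contraction factor as (a lower bound governed by) a fixed low-degree polynomial in the entries of $G_i$, to which \Cref{thm:anticc} can be applied. I work in the Pauli/Frobenius picture, identifying $\del_1$ with $\R^3$ and $\del_k$ with a subset of $\R^{4^k-1}$. Let $M_G\colon\del_k\to\del_1$ be the real linear map that conjugates a $k$-qubit operator by $G$ and then traces out all qubits except the one carrying $\rho_i$, so that $\rho_i-\rho_i'=M_{G_i}(\tau_{i-1}-\tau_{i-1}')$. In Pauli coordinates each entry of $M_G$ has the form $\tfrac12\Tr[(P\otimes I)\,G\,Q\,G\ct]$, a semi-polynomial of degree $2$ in the entries of $G$ and $G\ct$, and $\|M_G\|_{\mathrm{op}}\le 2^{(k-1)/2}$ uniformly in $G$, since conjugation is a Frobenius isometry and partial-tracing $k-1$ qubits inflates the Frobenius norm by at most $2^{(k-1)/2}$. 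Once the inputs other than $\rho$ and layers $1,\dots,i-1$ are fixed, $\rho\mapsto\tau_{i-1}$ is a fixed quantum channel, so every difference $\tau_{i-1}-\tau_{i-1}'$ lies in the image $V\subseteq\del_k$ of $\del_1$ under the traceless part of that channel, a subspace of dimension at most $3$ (for $k=1$ the map $M_G$ is a rotation of $\R^3$ and $\lambda_i\equiv 1$; so assume $k\ge2$, and after enlarging $V$ if necessary, $\dim V=3$).

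Since $\tau_{i-1}-\tau_{i-1}'\in V$ always, the ``true'' contraction factor is $\sigma_{\min}(M_{G_i}|_V)$: we have $\FrN{\rho_i-\rho_i'}\ge\sigma_{\min}(M_{G_i}|_V)\,\FrN{\tau_{i-1}-\tau_{i-1}'}$ for all $\rho,\rho'$. Writing $M_G|_V$ in an orthonormal basis of $V$ gives $\sigma_{\min}(M_G|_V)\ge|\det(M_G|_V)|/\|M_G\|_{\mathrm{op}}^2\ge 2^{-(k-1)}|F_V(G)|$, where $F_V(G):=\det(M_G|_V)$ is the determinant of a fixed $3\times3$ matrix with entries as above, hence a semi-polynomial of degree at most $6$ in the entries of $G$ and $G\ct$. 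I would then take $\lambda_i(G):=2^{-(k-1)}|F_V(G)|$, for which the claimed inequality holds by construction. Applying \Cref{thm:anticc} with $n=2^k$, $d=6$ to $F_V$ gives, for all $\eps>0$, $\Pr_{G_i\sim\haar(2^k)}[|F_V(G_i)|^2\le\eps\,\E[|F_V|^2]]\le C'(2^k,6)\,\eps^{C(2^k,6)}$, and integrating this tail yields $\E[\lambda_i(G_i)^{-t}]<\infty$ for every $0<t<2C(2^k,6)$, with a bound of the form $\poly(2^k)\cdot(\E[|F_V|^2])^{-t/2}$.

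Everything above depends on the fixing only through $V$ (and through $\E[|F_V|^2]$), so the final — and hardest — step is to make the bound, and hence the distribution $\Uplambda_k$ asserted in the lemma, uniform over all fixings. No symmetry shortcut is available: the adjoint action of $\ugrp(2^k)$ is not transitive on $3$-planes of $\del_k$, so the law of $\sigma_{\min}(M_{G_i}|_V)$ genuinely varies with $V$. The plan is to absorb the $V$-dependence using (i) that \Cref{thm:anticc} is uniform once the degree is fixed, and (ii) a compactness argument: $V\mapsto\E_{G_i}[|\det(M_{G_i}|_V)|^2]$ is continuous on the compact Grassmannian of $3$-planes, so it is bounded below by a positive constant as soon as it is strictly positive for each fixed $V$ — equivalently, as soon as $F_V\not\equiv 0$, i.e. $M_G|_V$ is injective for at least one unitary $G$. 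This last point I expect to establish by a ``routing'' argument: for any such $V$ one can choose a swap-type $G$ carrying a suitable complement of the kernel of $M_{\id}$ onto $V$. Then $\E[\lambda_i(G_i)^{-t}]\le M_k$ for a single constant $M_k$ independent of the fixing, and one takes $\Uplambda_k$ to be this common law (all that the subsequent proof of \Cref{thm:mixing} needs is the uniform negative-moment bound of the first bullet). The main obstacle is precisely this uniform non-degeneracy of $F_V$ over the relevant family of subspaces, including correctly handling the degenerate cases $\dim V<3$ and the subspaces $V$ inside the kernel of $M_{\id}$, where $\lambda_i$ may legitimately vanish on part of $\ugrp(2^k)$ yet must still have finite negative moments.
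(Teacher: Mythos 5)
Your proposal follows essentially the same route as the paper's proof: restrict the induced map $M_{G_i}$ to the (at most) $3$-dimensional image subspace of $\del_k$, lower-bound its smallest singular value by $|\det|$ divided by the product of the other singular values, note that the determinant is a degree-$6$ semi-polynomial in the entries of $G_i$, obtain a uniform positive lower bound on $\E\big[|\det(M_{G_i}|_V)|^2\big]$ via compactness of the Grassmannian together with a swap-gate non-degeneracy argument, and then apply \Cref{thm:anticc}. The only (cosmetic) divergence is at the very end: the paper turns the uniform tail bound into a single distribution $\Uplambda_k$ and couples $\lambda_i(G_i)$ to it through quantiles so that the lemma's statement $\lambda_i(G_i)\sim\Uplambda_k$ holds literally, whereas you keep $\lambda_i$ proportional to $|\det(M_{G_i}|_V)|$ and invoke only the uniform negative-moment bound, which indeed is all that the proof of \Cref{thm:mixing} uses.
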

	
	We will prove \Cref{lemma:mixing} in \Cref{sect:proof_mixing}. For now let us show how \Cref{lemma:mixing} would imply \Cref{thm:mixing}.
	
	\begin{proof}[Proof of \Cref{thm:mixing}]
		From \Cref{lemma:mixing} and \cref{eq:m1} we get
		\begin{equation}
		\FrN{\rho_i-\rho'_i}\geq \lambda_i(G_i)\FrN{\tau_{i-1}-\tau'_{i-1}}
		\geq 2^{-k/2}\lambda_i(G_i)\FrN{\rho_{i-1}-\rho'_{i-1}}
		\end{equation}
		for every $i=1,\ldots,D$. Here each function $\lambda_i(G_i)$ depends on the previous layers, but as random variables $\lambda_i=\lambda_i(G_i)$ they are independent, since $\lambda_i$ follows the same distribution $\Uplambda_k$ no matter how $\lambda_1,\ldots,\lambda_{i-1}$ are fixed.
		
		Since $\FrN{\rho_0-\rho'_0}=\FrN{\rho-\rho'}$ and $\FrN{\rho_D-\rho'_D}=\FrN{\phic(\rho)-\phic(\rho')}$, we have
		\begin{equation}
		\FrN{\phic(\rho)-\phic(\rho')}\geq 2^{-kD/2}\lambda_1\cdots\lambda_D\FrN{\rho-\rho'}.
		\end{equation}
		To bound the product of $\lambda_i$ we use Markov's inequality, which states that for every $\alpha>0$,
		\begin{align} 
		\Pr[\lambda_1\cdots\lambda_D\leq\alpha]
		&= \Pr[(\lambda_1\cdots\lambda_D)^{-t}\geq \alpha^{-t}] \nonumber\\
		&\leq \alpha^t\E[\lambda_1^{-t}\cdots\lambda_D^{-t}] = \alpha^t\E[\Uplambda_k^{-t}]^D
		\end{align}
		where we take $t>0$ to be the constant in \Cref{lemma:mixing}. Take $\alpha$ such that $\gamma=\alpha^t\E[\Uplambda_k^{-t}]^D$, we conclude that with probability at least $1-\gamma$,
		\begin{align*}
		\FrN{\phic(\rho)-\phic(\rho')} &\geq 2^{-kD/2}\alpha\FrN{\rho-\rho'} \\
		&= 2^{-kD/2}\gamma^{1/t} \E[\Uplambda_k^{-t}]^{-D/t}\FrN{\rho-\rho'} \\
		&= (2^{-D}\gamma)^{O_k(1)}\FrN{\rho-\rho'}. \qedhere
		\end{align*}
	\end{proof}
	
	\subsection{Proof of \texorpdfstring{\Cref{lemma:mixing}}{Lemma 5.1}}\label{sect:proof_mixing}
	
	The basic idea of the proof is to lower bound the ratio $\FrN{\rho_i-\rho'_i}/\FrN{\tau_{i-1}-\tau'_{i-1}}$ with a polynomial function on the entries of $G_i$ and apply \Cref{thm:anticc}. 
	
	The quantum channel defined by $G_i$ that maps $\tau_{i-1}$ to $\rho_i$ induces the linear map 
	\[M_i:\tau_{i-1}-\tau'_{i-1}\mapsto\rho_i-\rho_i'.\]
	The domain of $M_i$ is a 3-dimensional subspace of $\del_k$ which we denote as $\mc S_i$, while the range of $M_i$ is $\del_1$. Notice that the map $M_i$ is completely determined by the domain $\mc S_i$ and the gate $G_i$, while $\mc S_i$ depends only on the fixed inputs to the circuit $\mc C$ and the gates of $\mc C$ in layers $1,\ldots,i-1$.
	
	Since both the domain and the range are Euclidean spaces, the absolute determinant $\left|\det M_i\right|$ is independent of the choices of bases when writing $M_i$ as matrix in $\R^{3\times 3}$. We will show that $\left|\det M_i\right|$ is basically the lower bound that we seek for, via the following propositions.
	
	\begin{proposition}\label{prop:bound}
		It always holds that
		\[\FrN{\rho_i-\rho'_i}\geq 2^{-k}\left|\det M_i\right|\cdot \FrN{\tau_{i-1}-\tau'_{i-1}}.\]
	\end{proposition}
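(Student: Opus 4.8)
The plan is to read off the bound from the singular values of $M_i$. Fix orthonormal bases of the $3$-dimensional Euclidean spaces $\mc S_i$ and $\del_1$ and write $M_i$ as a real $3\times 3$ matrix; as already observed, $|\det M_i|$ does not depend on this choice. Let $\sigma_1\ge\sigma_2\ge\sigma_3\ge 0$ be its singular values, so that $|\det M_i|=\sigma_1\sigma_2\sigma_3$ and $\FrN{M_i v}\ge\sigma_3\FrN{v}$ for every $v\in\mc S_i$. Thus it is enough to prove $\sigma_1\sigma_2\le 2^{k}$, for which in turn it suffices to bound the operator norm $\sigma_1=\|M_i\|_{\mathrm{op}}$ by $2^{k/2}$.

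To get that bound I would invoke \Cref{lemma:frob}. The map that produces the single output qubit $\rho_i$ from the $k$ input qubits of $G_i$ is a quantum channel with $k$-qubit input, so \Cref{lemma:frob} gives $\FrN{\rho_i-\rho'_i}\le 2^{k/2}\FrN{\tau_{i-1}-\tau'_{i-1}}$, i.e.\ $\FrN{M_i v}\le 2^{k/2}\FrN{v}$, for every $v=\tau_{i-1}-\tau'_{i-1}$ that genuinely arises from a pair of single-qubit inputs $\rho,\rho'$. Now every $v\in\mc S_i$ is the image of some vector of $\R^3$ under the fixed linear map that sends $\rho-\rho'$ to $\tau_{i-1}-\tau'_{i-1}$; shrinking that preimage keeps it inside the radius-$\sqrt 2$ ball $\del_1$, so a small enough positive multiple of $v$ is a genuine reachable difference. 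By homogeneity of $M_i$, the inequality $\FrN{M_i v}\le 2^{k/2}\FrN{v}$ then holds for all $v\in\mc S_i$, giving $\sigma_1\le 2^{k/2}$ and hence $\sigma_1\sigma_2\le\sigma_1^2\le 2^{k}$.

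Combining, for every $v\in\mc S_i$ we get $\FrN{M_i v}\ge\sigma_3\FrN{v}=\tfrac{|\det M_i|}{\sigma_1\sigma_2}\FrN{v}\ge 2^{-k}|\det M_i|\FrN{v}$ (the degenerate case $\det M_i=0$ being trivial), and specializing to $v=\tau_{i-1}-\tau'_{i-1}$ yields \Cref{prop:bound}. The argument is essentially routine linear algebra; the only point needing care is the transfer step in the second paragraph, namely that the bound of \Cref{lemma:frob} — a statement about differences of actual quantum states — upgrades to a genuine operator-norm bound on $M_i$ over all of $\mc S_i$, which works because $\del_1$ contains a neighborhood of the origin. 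If $\mc S_i$ should turn out to have dimension smaller than $3$, the identical computation applies with $3$ replaced by $\dim\mc S_i$.
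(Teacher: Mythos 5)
Your proof is correct and takes essentially the same route as the paper: both arguments bound the smallest singular value of $M_i$ from below by $|\det M_i|$ divided by the product of the two largest singular values, and bound each of those by $2^{k/2}$ via \Cref{lemma:frob}. The paper invokes that last bound without the transfer step you spell out (upgrading the state-difference bound to an operator-norm bound on all of $\mc S_i$), but this is an elaboration of the same proof, not a different approach.
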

	\begin{proof}
		Let $\sigma_1\leq\sigma_2\leq\sigma_3$ be the singular values of $M_i$, then $\FrN{\rho_i-\rho'_i}\geq \sigma_1\FrN{\tau_{i-1}-\tau'_{i-1}}$ as the norms in both spaces coincide with the Frobenius norm. On the other hand, we know that $\sigma_2\leq\sigma_3\leq 2^{k/2}$ by \Cref{lemma:frob}. Therefore, $\left|\det M_i\right|=\sigma_1\sigma_2\sigma_3\leq 2^k \sigma_1$ and thus the claim holds.
	\end{proof}
	
	\begin{proposition}\label{prop:degree}
		For each fixed domain $\mc S_i$, $\det M_i$ is a degree-6 semi-polynomial in the entries of $G_i$.
	\end{proposition}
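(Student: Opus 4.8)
The plan is to write $M_i$ explicitly as a $3\times 3$ real matrix in bases that do not depend on $G_i$, and then track how each entry depends on $G_i$.

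First I would make the channel explicit. Let $\Upphi_i$ be the quantum channel induced by $G_i$ that sends the composite $k$-qubit input $\tau_{i-1}$ to the single output qubit $\rho_i$; concretely $\Upphi_i(\tau)=\Tr_{\bar\ell}\!\left(G_i\,\tau\,G_i\ct\right)$, where $\Tr_{\bar\ell}$ traces out every output wire of $G_i$ except the one carrying $\rho_i$. Expanding $\tau$ in the Pauli basis of $\del_k$ and using $(G_i P G_i\ct)_{ab}=\sum_{c,d}(G_i)_{ac}P_{cd}\overline{(G_i)_{bd}}$, every matrix entry of the operator $G_i\,\tau\,G_i\ct$, and hence — after the linear partial trace — every Pauli coefficient of $\Upphi_i(\tau)$, is a linear combination (with coefficients depending only on $\tau$) of products $(G_i)_{ac}\,\overline{(G_i)_{bd}}$. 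Thus the matrix of $\Upphi_i$ with respect to the Pauli bases of $\del_k$ and $\del_1$ has every entry a degree-$2$ semi-polynomial in the entries of $G_i$: one unconjugated factor and one conjugated factor.

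Next I would pass from $\Upphi_i$ to $M_i$. By definition $M_i$ is the restriction of the difference-map of $\Upphi_i$ to the $3$-dimensional subspace $\mc S_i\subseteq\del_k$, and $\mc S_i$ is determined entirely by the fixed inputs to $\mc C$ and the gates in layers $1,\dots,i-1$ — in particular it is independent of $G_i$. Fix once and for all an orthonormal basis $w_1,w_2,w_3$ of $\mc S_i$; these are fixed real vectors (fixed real combinations of Pauli strings). In this basis of the domain together with the normalized Pauli basis $\{X,Y,Z\}$ of $\del_1$, the matrix of $M_i$ is the $3\times 3$ real matrix whose $j$-th column lists the Pauli coordinates of $\Upphi_i(w_j)$. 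Each such entry is a fixed (i.e.\ $G_i$-independent) linear combination of the Pauli-basis entries of $\Upphi_i$ evaluated against the fixed vectors $w_j$, hence again a degree-$2$ semi-polynomial in the entries of $G_i$. Finally, $\det M_i$ is the determinant of a $3\times 3$ matrix, which is a homogeneous degree-$3$ polynomial in its nine entries; substituting entries that are degree-$2$ semi-polynomials in $G_i$ produces a semi-polynomial of degree at most $6$. (It is real-valued since $M_i$ is a real map between real inner-product spaces, and $|\det M_i|$ is basis-independent as already observed.) This is exactly the claim.

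I do not expect a serious obstacle here: the only points needing a little care are (i) confirming that the channel's Pauli-basis matrix entries are \emph{exactly} quadratic in $G_i$ — one factor of $G_i$ and one of $\overline{G_i}$ — so that the degree count is $3\cdot 2=6$ and no more; and (ii) observing that the basis vectors $w_1,w_2,w_3$ of $\mc S_i$ enter only through $G_i$-independent constants, which is the case precisely because $\mc S_i$ is fixed before layer $i$. Everything else is routine linear algebra.
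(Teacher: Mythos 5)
Your proof is correct and follows essentially the same route as the paper's: both fix $G_i$-independent orthonormal bases for $\mc S_i$ and $\del_1$, observe that each matrix entry of $M_i$ is of the form $\Tr[(I_A\otimes\sigma_r)\,G_i\xi_\ell G_i\ct]$ and hence a degree-2 semi-polynomial in the entries of $G_i$, and conclude that the $3\times 3$ determinant has degree at most $6$. No issues.
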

	\begin{proof}
		After fixing the orthonormal bases $\{\xi_1,\xi_2,\xi_3\}$ for $\mc S_i$ and $\{\sigma_1,\sigma_2,\sigma_3\}$ for $\del_1$, the $(\ell,r)$-th entry in the matrix representation of $M_i$ is
		\begin{equation}\label{eq:deg}
		\Tr[\sigma_r\cdot M_i(\xi_\ell)]=\Tr[(I_A\otimes \sigma_r)\cdot G_i\xi_\ell G_i\ct]
		\end{equation}
		where $A$ is the system that consists of the output qubits of $G_i$ other than $\rho_i$. This is a quadratic form in the entries of $G_i$ and thus a degree-2 semi-polynomial, and therefore $\det M_i$ is a degree-6 semi-polynomial.
	\end{proof}
	
	\begin{proposition}\label{prop:minimum}
		There exists a constant $\mu_k>0$ such that for every possible domain $\mc S_i$,
		\[\E_{G_i\sim\haar(2^k)}\left[\left|\det M_i\right|^2\right]\geq\mu_k.\]
	\end{proposition}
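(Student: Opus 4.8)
The plan is to lower bound $\E_{G_i}\big[|\det M_i|^2\big]$ by a positive constant that does not depend on the domain $\mc S_i$, using a compactness argument together with the fact that the map $\mc S_i \mapsto \det M_i$ behaves continuously. First I would observe that $\E_{G_i}\big[|\det M_i|^2\big]$ is itself a function only of the 3-dimensional subspace $\mc S_i \subseteq \del_k$, since once $\mc S_i$ is fixed the matrix entries of $M_i$ are the degree-2 semi-polynomials in $G_i$ from \eqref{eq:deg}, and hence $\det M_i$ is the degree-6 semi-polynomial from \Cref{prop:degree} whose coefficients depend continuously (indeed polynomially) on an orthonormal basis $\{\xi_1,\xi_2,\xi_3\}$ of $\mc S_i$. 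Taking the expectation over the compact group $\ugrp(2^k)$ under the Haar measure preserves this continuity, so $h(\mc S_i) := \E_{G_i}\big[|\det M_i|^2\big]$ is a continuous function on the Grassmannian $\mathrm{Gr}(3, 4^k-1)$ of 3-planes in $\del_k \cong \R^{4^k-1}$ (one checks it is well-defined, i.e.\ independent of the choice of basis, because $|\det M_i|$ is basis-independent as already noted before \Cref{prop:bound}). Since the Grassmannian is compact, $h$ attains its minimum $\mu_k$, and it remains only to show $\mu_k > 0$, i.e.\ that $h(\mc S) > 0$ for every 3-plane $\mc S$.

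To see $h(\mc S) > 0$, suppose for contradiction that $\E_{G}\big[|\det M|^2\big] = 0$ for some fixed domain $\mc S$. Then $\det M = 0$ for Haar-almost-every $G$, and since $\det M$ is a semi-polynomial in the entries of $G$ (a real-analytic function on $\ugrp(2^k)$), it must vanish identically on $\ugrp(2^k)$. So it suffices to exhibit, for any given 3-plane $\mc S$, a single unitary $G$ for which the linear map $M : \mc S \to \del_1$ is invertible. The natural candidate is to arrange that $G$ essentially implements a swap bringing one designated input wire to the output wire $\rho_i$: if $G$ routes the input qubit carrying the $\rho$-dependence directly to the output position $\rho_i$ (tensored with an arbitrary unitary on the other wires), then $M$ acts as an isometry from $\mc S$ onto a 3-plane in $\del_1 \cong \R^3$, hence is invertible. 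The subtlety is that $\mc S$ is an arbitrary 3-plane in $\del_k$, not necessarily aligned with any single wire; so I would instead argue as follows. Pick any nonzero $\xi \in \mc S$, write it in the Pauli basis of $\del_k$, and note that $\xi$ is a nonzero traceless Hermitian operator, so it has a nonzero component $\Tr[(P \otimes Q)\,\xi] \neq 0$ along some nontrivial Pauli string $P \otimes Q$ on (one wire)$\otimes$(the rest). Then a unitary $G$ that maps $P \otimes Q$ to $X \otimes I$, $Y \otimes Q$ to $Y \otimes I$, etc.\ — built from a Clifford routing plus a partial-trace-compatible choice — shows $M$ is nonzero on $\xi$; iterating this reasoning across a basis of $\mc S$ (or, more cleanly, a dimension count on the full map $G \mapsto M$) gives invertibility for generic $G$.

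The cleanest route, which I would actually write up, avoids the case analysis entirely: it suffices to show the single polynomial identity $\det M \equiv 0$ \emph{fails}, and for that one only needs \emph{one} good $(\mc S, G)$ pair, not a good $G$ for every $\mc S$. But we need it for every $\mc S$, so the compactness step is genuinely doing work and cannot be bypassed — the uniformity over $\mc S_i$ is exactly what \Cref{lemma:mixing} requires. Concretely: fix $\mc S$; since $\mc S$ is 3-dimensional and $\del_1$ is 3-dimensional, it is enough that the generic $G$ (in the Zariski/Haar sense) gives a rank-3 map, which follows once we find one $G$ with $\mathrm{rank}\,M = 3$. Take any orthonormal basis $\xi_1,\xi_2,\xi_3$ of $\mc S$; there is a unitary $V$ on the $k$ qubits such that $V \xi_1 V\ct, V\xi_2 V\ct, V\xi_3 V\ct$ each have nonzero $\{X\!\otimes\!I^{\otimes(k-1)},\,Y\!\otimes\!I^{\otimes(k-1)},\,Z\!\otimes\!I^{\otimes(k-1)}\}$-components respectively and those components are linearly independent (generic $V$ works, since the single-qubit-marginal map $\xi \mapsto \mathrm{Tr}_A[\xi]$ composed with a generic rotation has full rank on any fixed 3-plane — this itself is a non-vanishing polynomial condition on $V$, satisfiable because one can check it on, say, $\mc S = \mathrm{span}\{X\!\otimes\!I,\,Y\!\otimes\!I,\,Z\!\otimes\!I\}$ with $V = I$). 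Taking $G_i = V$ then makes $M$ the restriction of $\xi \mapsto \mathrm{Tr}_A[\xi]$, which on this $\mc S$ is invertible, completing the argument. I expect the main obstacle to be making precise the claim that "$\xi \mapsto \mathrm{Tr}_A[G\xi G\ct]$ restricted to a fixed 3-plane $\mc S$ is full-rank for generic $G$" uniformly in $\mc S$ — this is where one must be careful that the bad set of $G$ is a proper subvariety (so Haar-null) rather than everything, and the compactness/continuity packaging of $h(\mc S)$ is what converts this genericity into the clean constant $\mu_k > 0$.
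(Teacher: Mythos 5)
Your overall architecture coincides with the paper's: $\E_{G_i}\left[\left|\det M_i\right|^2\right]$ is a continuous, basis-independent function of the domain on a compact set of $3$-planes, so everything reduces to pointwise strict positivity, and for that (by your real-analyticity argument, or by the paper's continuity argument --- both fine) it suffices to exhibit a \emph{single} $G_i$ with $\det M_i\neq 0$. The gap is precisely in that existence claim, which is the entire substance of the proposition. Your ``generic $V$ works'' step is circular: the condition ``$\xi\mapsto\Tr_A[V\xi V\ct]$ has rank $3$ on $\mc S$'' is the non-vanishing of a polynomial in $V$ whose coefficients depend on $\mc S$, so verifying it at the pair $\big(\spn\{X\otimes I^{\otimes(k-1)},Y\otimes I^{\otimes(k-1)},Z\otimes I^{\otimes(k-1)}\},\,V=I\big)$ shows nothing about a different $\mc S$ --- producing one good $V$ for the given $\mc S$ is exactly the problem you set out to solve. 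And the difficulty is real for arbitrary $3$-planes: for $\mc S=\spn\{X\otimes X,\,Y\otimes Y,\,Z\otimes Z\}$ with $k=2$, \emph{every} product unitary $G=U_1\otimes U_2$ (in particular every qubit permutation) gives $M\equiv 0$ on $\mc S$, since conjugation by a product unitary keeps each element supported on Pauli strings that are non-identity on both wires, which are killed by the partial trace. So the swap/routing intuition genuinely fails, and an entangling $G$ must be constructed by hand; this is exactly the obstacle you flag at the end without resolving.

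The paper sidesteps this entirely by not working with arbitrary $3$-planes: the admissible domains $\mc S_i$ are images of the single-qubit difference space $\del_1$ under the map $\rho-\rho'\mapsto\tau_{i-1}-\tau'_{i-1}$ determined by the earlier layers, so the designated wire $\rho_{i-1}$ carries the input dependence, and the qubit permutation routing $\rho_{i-1}$ to the output position of $\rho_i$ already yields $\left|\det M_i\right|>0$; continuity in $G_i$ then gives $\E\left[\left|\det M_i\right|^2\right]>0$, and compactness of the closed set of admissible domains inside $\mathbf{Gr}_3(\R^{4^k-1})$ finishes the proof. If you insist on proving positivity over the whole Grassmannian (a stronger statement, which does appear to be true), you must actually construct, for each $\mc S$, a unitary making $\Tr_A[G(\cdot)G\ct]$ injective on $\mc S$; as written, your proposal names the right target but does not supply that construction.
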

	\begin{proof}
		When $\mc S_i$ is fixed, $\left|\det M_i\right|^2$ is a continuous function of $G_i\in\ugrp(2^k)$. For at least one $G_i$, which is a permutation over the $k$ qubits that swaps $\rho_{i-1}$ to $\rho_i$, we have $\left|\det M_i\right|>0$. This implies that $\E\left[\left|\det M_i\right|^2\right]>0$ always holds.
		
		Now we think of $\E\left[\left|\det M_i\right|^2\right]$ as a continuous function of $\mc S_i$, while the set of all possible $\mc S_i$ is a closed subset of the Grassmannian $\mathbf{Gr}_3(\R^{4^k-1})$ and thus is compact. That means the function admits a global minimum $\mu_k$, which depends only on $k$, and $\mu_k>0$.
	\end{proof}
	
	Now we are ready to prove \Cref{lemma:mixing}.
	\begin{proof}[Proof of \Cref{lemma:mixing}]
		Applying \Cref{thm:anticc} with \Cref{prop:degree,prop:minimum} gives
		\begin{align}
		\Pr\left[2^{-k}\left|\det M_i\right|\leq\eps\right]
		&\leq \Pr\left[\left|\det M_i\right|^2\leq 2^{2k}\mu_k^{-1}\eps^2\E\left[\left|\det M_i\right|^2\right]\right] \nonumber\\
		&\leq C'(2^k,6)\cdot (2^{2k}\mu_k^{-1}\eps^2)^{C(2^k,6)} \label{eq:m2}
		\end{align}
		which holds for every domain $\mc S_i$ and every $\eps\geq 0$. We define $\Uplambda_k$ as the distribution over $[0,\infty)$ with the following cumulative function:
		\[\Pr[\Uplambda_k\leq x]=\min\left\{C'(2^k,6)\cdot (2^{2k}\mu_k^{-1} x^2)^{C(2^k,6)},1\right\}.\]
		Take $t=C(2^k,6)>0$ and let $C=C'(2^k,6)\cdot (2^{2k}\mu_k^{-1})^{C(2^k,6)}>0$, then we have
		\begin{align}
		\E[\Uplambda_k^{-t}]&=\int_0^\infty x^{-t}\dd\Pr[\Uplambda_k\leq x] \nonumber\\
		&= \int_0^{C^{-1/(2t)}} x^{-t}\dd(Cx^{2t}) \nonumber\\
		&=\int_0^{C^{-1/(2t)}} 2tC\cdot x^{t-1}\dd x = 2tC^{1/2}<\infty.
		\end{align}
		
		Now suppose the input qubits other than $\rho$ are fixed, and the layers $1,\ldots,i-1$ of the circuit $\mc C$ is given. This fixes the domain $\mc S_i$ for $M_i$, and provides the cumulative function
		\[P(x)=\Pr\left[2^{-k}\left|\det M_i\right|\leq x\right].\]
        Notice that $P$ is continuous and non-decreasing, and by defining its inverse $P^{-1}(y)=\sup_{x} P(x)\leq y$ on $[0,1]$, we can show that for every $y\in[0,1]$,
        \[\Pr\left[P\big(2^{-k}\left|\det M_i\right|\big)\leq y\right]=\Pr\left[2^{-k}\left|\det M_i\right|\leq P^{-1}(y)\right]=P(P^{-1}(y))=y.\]
        
		We then define the function $\lambda_i:\ugrp(2^k)\to[0,\infty)$ as follows: For each $G_i\in\ugrp(2^k)$, let $\lambda_i(G_i)$ be the smallest $\lambda\geq 0$ such that
		\[P\big(2^{-k}\left|\det M_i\right|\big)=\Pr[\Uplambda_k\leq \lambda].\]
		When $G_i\sim\haar(2^k)$, we have $\lambda_i(G_i)\sim\Uplambda_k$ since
		\begin{equation}
		\Pr[\lambda_i(G_i)\leq x]=\Pr\left[P\big(2^{-k}\left|\det M_i\right|\big)\leq \Pr[\Uplambda_k\leq x]\right] =\Pr[\Uplambda_k\leq x].
		\end{equation}
		We also have $2^{-k}\left|\det M_i\right|\geq \lambda_i(G_i)$ since $P(x)\leq \Pr[\Uplambda_k\leq x]$ holds for all $x\geq 0$. Combined with \Cref{prop:bound} we get
		\[\FrN{\rho_i-\rho'_i}\geq 2^{-k}\left|\det M_i\right|\cdot \FrN{\tau_{i-1}-\tau'_{i-1}}
		\geq \lambda_i(G_i)\FrN{\tau_{i-1}-\tau'_{i-1}}. \qedhere\]
	\end{proof}
	
	\section{Applications}\label{sect:app}
	
	\subsection{Depth Lower Bound for Approximate Designs}\label{sect:depthlb}
	
	In this section we prove \Cref{thm:depthlb}. We first recall the definition of an approximate unitary design.
	
	\begin{definition}
		For a distribution $\mc D$ over $\ugrp(n)$ and $t\in\mathbb{N}_+$, define the moment superoperator as the following channel:
		\[\Upphi^{(t)}_{\mc D}:\rho\mapsto\int_{\ugrp(n)}U^{\otimes t}\rho (U\ct)^{\otimes t}\dd \mc D(U).\]
		The distribution $\mc D$ is an $\eps$-approximate unitary $t$-design if 
		\[\DmN{\Upphi^{(t)}_{\mc D} - \Upphi^{(t)}_{\haar(n)}}\leq\eps.\]
	\end{definition}
	
	Note that there are several other definitions of the approximate design (see e.g. \cite{Brandao12}), and we choose the weaker one so that our \Cref{thm:depthlb} is still compatible with the stronger definitions of designs.
	
	\begin{proof}[Proof of \Cref{thm:depthlb}]
		Without loss of generality, let us assume that the first input qubit $\rho$ and the first output qubit $\pi$ are depth $D$ apart. We fix all other input qubits to be maximally mixed, so that when $\rho$ is also maximally mixed, the entire output is maximally mixed regardless of the circuit $\mc C$, and thus $\pi=I/2$. On the other hand, when $\rho=\ket{0}\bra{0}$, we know the following about the output qubit $\pi$ via \Cref{thm:mixing} that with probability $1-2^{-D}$ over the circuit $\mc C$,
		\begin{equation}\label{eq:d1}
		\FrN{\pi-I/2}\geq \FrN{\ket{0}\bra{0}-I/2}\cdot (2^{-2D})^{c_k} = \frac{1}{\sqrt{2}}\cdot 2^{-2Dc_k}.
		\end{equation}
		We can expand the left hand side of \cref{eq:d1} as
		\begin{equation}
		\FrN{\pi-I/2}^2=\Tr[(\pi-I/2)^2]=\Tr[\pi^2]-1/2.
		\end{equation}
		Since $\Tr[\pi^2]\leq 1$ always holds, we get
		\begin{equation}
		\E\nolimits_{\mc C}[\Tr[\pi^2]]\geq \frac{1}{2} + \frac{1}{2}\cdot 2^{-4Dc_k} + \frac{1}{2}\cdot 2^{-D}.
		\end{equation}
		
		Now imagine feeding two copies of the input state $\ket{0}\bra{0}\otimes (I/2)^{\otimes (n-1)}$ to the superoperators $\Upphi^{(2)}_{\mc C}$ and $\Upphi^{(2)}_{\haar(n)}$, and apply a swap test on the first output qubits in the two copies. The output probability is determined by $\Tr[\pi^2]$. We know already from \cite{Emerson05} that when going through an $n$-qubit Haar random unitary, we have $\E\nolimits_{\mc U(n)}[\Tr[\pi^2]]=1/2$. Therefore, we conclude that the difference $2^{-4Dc_k}+2^{-D}\leq O(\eps)$ which means that $D\geq \Omega_k(\log \eps^{-1})$.
	\end{proof}
	
	\subsection{Depth Test}\label{sect:depthtest}
	
	In this section we prove \Cref{thm:depthtest}, where we learn the exact depth of a brickwork random circuit $\mc C$.
	
	\begin{algorithm}[ht]
		\DontPrintSemicolon
		Arbitrarily fix all input qubits except the first one $\rho$. \\
		\For {$D=0,1,\ldots$}{
			Let $\eps=(2^{-2D}\gamma)^{c_2}/4$; \\
			Apply $\mc C$ with $\rho=\ket{0}\bra{0}$ and let $\pi$ be the $(D+2)$-th output qubit; \\
			Apply $\mc C$ with $\rho=\ket{1}\bra{1}$ and let $\pi'$ be the $(D+2)$-th output qubit; \\
			Estimate $\pi$ and $\pi'$ up to $\eps$ error by state tomography; \\
			\lIf {$\FrN{\pi-\pi'} \leq 2\eps$}{\Return {$D$}.}
		}
		\caption{Algorithm for depth testing.}\label{alg:depth}
	\end{algorithm}
	
	The processed is described in \Cref{alg:depth}. Here $c_2$ is the constant $c_k$ in \Cref{thm:mixing} with $k=2$ for brickwork circuits, and $\gamma>0$ is the target error probability. Notice that in a brickwork circuit of depth $D$, the $(D+2)$-th output qubit lies outside the lightcone of the first input qubit. Therefore, when the algorithm iterates to the correct depth $D$, we have $\pi=\pi'$ and $D$ must be returned even when both $\pi$ and $\pi'$ are estimated with $\eps$ error.
	
	On the other hand, when $D$ is smaller than the actual depth, the $(D+2)$-th output qubit lies inside the lightcone of the first input qubit. By \Cref{thm:mixing}, with probability $1-2^{-D}\gamma$ over $\mc C$ we have
	\begin{equation}
	\FrN{\pi-\pi'}\geq \FrN{\ket{0}\bra{0}-\ket{1}\bra{1}}\cdot (2^{-2D}\gamma)^{c_2}>4\eps.
	\end{equation}
	This means when both $\pi$ and $\pi'$ are estimated with $\eps$ error, we still have $\FrN{\pi-\pi'}>2\eps$. By a union bound over $D$, with probability $1-\gamma$ all those $D$ smaller than the actual depth will be skipped, and thus the outputted depth is correct.
	
	Note that the efficiency of the algorithm depends on the single-qubit tomography process, which by \Cref{prop:tomo} is $\poly(1/\eps)$. As a conclusion, we obtain the following more general statement which implies \Cref{thm:depthtest}:
	
	\begin{theorem}\label{thm:depthtest2}
		Let $\mc C$ be a brickwork random quantum circuit of an unknown depth $D$ , where each gate is independently Haar random.. Given oracle access to $\mc C$, for any $\gamma\in(0,1)$, \Cref{alg:depth} outputs $D$ with probability at least $1-\gamma$ in time $\poly(2^D,\gamma^{-1})$.
	\end{theorem}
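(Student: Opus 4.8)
The plan is to verify that \Cref{alg:depth} is correct and runs in the stated time. Write $D_0$ for the unknown true depth of $\mc C$; the goal is that the loop returns $D_0$. I would begin with correctness at the true depth: when the loop variable reaches $D=D_0$, the $(D_0+2)$-th output qubit lies strictly outside the forward lightcone of the first input qubit — in a depth-$D_0$ brickwork circuit that lightcone reaches only qubits $1,\dots,D_0+1$ — so this output qubit does not depend on the first input, and running $\mc C$ with the first qubit set to $\ket{0}\bra{0}$ versus $\ket{1}\bra{1}$ gives $\pi=\pi'$ exactly. Since the tomography outputs $\tilde\pi,\tilde\pi'$ within Frobenius distance $\eps$ of $\pi,\pi'$, we get $\FrN{\tilde\pi-\tilde\pi'}\le 2\eps$, so the loop returns $D_0$. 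This is deterministic, so the algorithm always halts by iteration $D_0$.

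The second and main correctness claim is that the loop never returns a value $D<D_0$. Fix such a $D$; now the $(D+2)$-th output qubit lies inside the forward lightcone of the first input qubit, and by \Cref{def:depth} applied to the brickwork architecture the two are $O(D)$ depth apart (the lightcone boundary advances one qubit per layer). Applying \Cref{thm:mixing} with $k=2$, inputs $\rho=\ket{0}\bra{0}$ and $\rho'=\ket{1}\bra{1}$, and failure parameter $2^{-D}\gamma$, we get that except with probability $2^{-D}\gamma$ over $\mc C$,
\[
\FrN{\pi-\pi'}\;\ge\;\FrN{\ket{0}\bra{0}-\ket{1}\bra{1}}\cdot\big(2^{-2D}\gamma\big)^{c_2}\;=\;\sqrt 2\,\big(2^{-2D}\gamma\big)^{c_2}\;>\;4\eps ,
\]
for the choice $\eps=(2^{-2D}\gamma)^{c_2}/4$ in \Cref{alg:depth}; since tomography shifts each of $\pi,\pi'$ by at most $\eps$ it remains that $\FrN{\tilde\pi-\tilde\pi'}>2\eps$, so $D$ is not returned. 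A union bound over $D=0,1,\dots,D_0-1$, whose failure probabilities sum to at most $\sum_{D}2^{-D}\gamma<2\gamma$ (absorbed by invoking \Cref{thm:mixing} with $2^{-D-1}\gamma$ instead), then shows that with probability at least $1-\gamma$ no wrong value is ever returned, so the algorithm outputs exactly $D_0$.

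For the running time, each iteration runs $\mc C$ on fixed inputs to collect copies of a single-qubit output state and performs single-qubit state tomography to precision $\eps=(2^{-2D}\gamma)^{c_2}/4$, which by \Cref{prop:tomo} costs $\poly(1/\eps)=\poly(2^D,\gamma^{-1})$ time and oracle calls ($c_2$ being a fixed constant; a logarithmic overhead suffices to make the tomography failure probability below $2^{-D}\gamma$ and fold it into the union bound). Since the loop runs at most $D_0+1$ times, the total is $\poly(2^{D_0},\gamma^{-1})$, which is \Cref{thm:depthtest2}; \Cref{thm:depthtest} then follows by specializing to $D_0=O(\log n)$ and $\gamma=1/\poly(n)$.

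Given \Cref{thm:mixing}, the argument is short, and the only delicate point is the lightcone bookkeeping in the second step: one must probe exactly the first output qubit outside the depth-$D$ lightcone and confirm that its depth-distance from the first input qubit is a small enough multiple of $D$ that \Cref{thm:mixing}'s lower bound on $\FrN{\pi-\pi'}$ still exceeds the tomography error $\eps$. This forces $\eps$ to be exponentially small in $D$, while keeping it only polynomially small in $2^{-D}$ so that $\poly(1/\eps)$ stays $\poly(2^D)$ — which is precisely why the loop recomputes $\eps$ at every depth. Everything else — deterministic correctness at $D_0$, the union bound, the tomography cost — is routine.
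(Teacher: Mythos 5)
Your proposal tracks the paper's own proof of \Cref{thm:depthtest2} step for step: exactness of $\pi=\pi'$ at $D=D_0$, rejection of every $D<D_0$ via \Cref{thm:mixing} with failure parameter $2^{-D}\gamma$, the union bound over $D$, and the $\poly(1/\eps)$ tomography cost. Your handling of the union bound and of the tomography failure probability is, if anything, slightly more explicit than the paper's.

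However, the step you yourself single out as ``the only delicate point''---the lightcone bookkeeping---is where the argument does not go through as you state it. You claim that for $D<D_0$ the first input qubit and the $(D+2)$-th output qubit are ``$O(D)$ depth apart'' because the lightcone boundary advances one column per layer. That observation only identifies the layer at which column $D+2$ \emph{enters} the lightcone; the depth-distance of \Cref{def:depth}, which is what governs the exponent in \Cref{thm:mixing}, is the minimum length of a full path from the input wire to the \emph{output} wire, and any such path must still traverse a gate in essentially every remaining layer of the circuit (only the boundary columns are ever idle, and only in alternate layers). A short computation shows this distance is about $(D_0+D)/2$, i.e.\ $\Theta(D_0)$ rather than $O(D)$ when $D_0\gg D$. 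Consequently \Cref{thm:mixing} only guarantees $\FrN{\pi-\pi'}\geq\sqrt 2\,\big(2^{-\Theta(D_0)}\gamma\big)^{c_2}$, which for $D\ll D_0$ can lie far below the rejection threshold $4\eps=(2^{-2D}\gamma)^{c_2}$---and the matching upper bound quoted after \Cref{thm:mixing} indicates the influence really is that small, so the test could spuriously accept a too-small $D$. Even at $D=D_0-1$ the depth-distance is $D+1$, not $D$, costing a factor $2^{-c_2}$ that the stated $\eps$ does not absorb. To be fair, the paper's own proof makes the same silent identification of this depth-distance with $D$, so you have reproduced its argument faithfully; but as a standalone proof the soundness step needs either an a priori upper bound $D_{\max}$ on the depth (with $\eps$ set in terms of $D_{\max}$ throughout the loop) or a genuinely different argument for candidates $D$ well below the true depth.
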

	
	\begin{remark}
		The only property of the brickwork architecture we used here is that the set of qubits within the lightcone of an input qubit is strictly expanding when the depth grows, which allows us to distinguish between different depths. Therefore the algorithm can be easily modified, with the same efficiency, to work with higher dimensional brickwork circuits and other architectures.
	\end{remark}
	
	\subsection{Learning Brickwork Random Circuits}\label{sect:learn}
	
	\subsubsection{Learning the First Gate}
	
	Here we prove \Cref{thm:learngate}, where we learn the gate $G_{1,1}$, namely the gate in the first layer acting on the first and second qubit, in a brickwork circuit of depth $D=O(\log n)$ with Haar random gates. The same arguments also work for other gates in the first layer.
	
	To learn $G_{1,1}$, we try to uncompute $G_{1,1}$ by first apply some two-qubit unitary $G\ct\in\ugrp(4)$ and then apply the circuit $\C$. We distinguish whether $G$ is close to $G_{1,1}$ or not using the similar idea as in \Cref{sect:depthtest}. Specifically, if $G=G_{1,1}$ so that $G\ct$ perfectly uncomputes the gate $G_{1,1}$, then the $(D+1)$-th output qubit will lie outside the lightcone of the first input qubit. Note that this is also true when $G\ct$ cancels $G_{1,1}$ into unentangled single-qubit gates, that is when there exists $U_1,U_2\in\ugrp(2)$ that $G_{1,1}G\ct=U_1\otimes U_2$. In contrast, when $G\ct$ does not cancel $G_{1,1}$ into unentangled single-qubit gates, the first two qubits will be entangled after the first layer of gates, and thus the $(D+1)$-th output qubit will be affected when the first input qubit changes.
	
	To put the above intuition more formally, we first define the distance between two-qubit gates when taking the quotient over unentangled single-qubit gates:
	
	\begin{definition}
		For $G,G'\in\ugrp(4)$, we define the distance
		\[d_\otimes(G,G')=\min_{U_1,U_2\in\ugrp(2)} d_\diamond(G,(U_1\otimes U_2)\cdot G').\]
		We note that $d_\otimes$, like the diamond distance $d_\diamond$, is a pseudometric on $\ugrp(4)$. That is, a metric except that two distinct unitaries could have distance zero. However, $d_\otimes(G,G')=0$ if and only if $G'\cdot G^{-1}=U_1\otimes U_2$ for some $U_1,U_2\in\ugrp(2)$.
	\end{definition}
	
	The learning algorithm is described in \Cref{alg:learngate}, with $\delta,\gamma>0$ being the target error rate. Notice that within each loop of $\sigma_2$, the second input qubit $\rho_2$ is fixed, while the first one changes with a difference $\sigma_1$ that goes through the Pauli basis. The input qubits $\rho_1$ and $\rho_2$ are Pauli eigenstates except when $\sigma_2=\rho_2=I/2$, which can be obtained by randomly choosing $\ket{0}$ or $\ket{1}$.
	
	\begin{algorithm}[ht]
		\DontPrintSemicolon
		Arbitrarily fix all input qubits except the first and second ones $\rho_1,\rho_2$. \\
		Let $\eps = 10^{-4}\delta^2(2^{-D+1}\gamma)^{c_2}$; \\
		\ForEach {$G$ from an $\eps$-net of $\ugrp(4)$ under distance $d_\otimes$}{
			\For {$\sigma_1\in\{X,Y,Z\}$ and $\sigma_2\in\{I,X,Y,Z\}$}{
				Let $\rho_1 = (I+\sigma_1)/2, \rho_2 = (I+\sigma_2)/2$ and apply $G\ct$ on the first two qubits, \\
				\Indp then apply $\mc C$ and let $\pi$ be the $(D+1)$-th output qubit; \\
				\Indm Let $\rho_1 = (I-\sigma_1)/2, \rho_2 = (I+\sigma_2)/2$ and apply $G\ct$ on the first two qubits, \\
				\Indp then apply $\mc C$ and let $\pi'$ be the $(D+1)$-th output qubit; \\
				\Indm Estimate $\pi$ and $\pi'$ up to $\eps$ error by state tomography; \\
				\lIf {$\FrN{\pi-\pi'} \geq 5\eps$}{reject $G$.}
			}
			\Return {$G$ if not rejected.}
		}
		\caption{Algorithm for learning the gate $G_{1,1}$}\label{alg:learngate}
	\end{algorithm}

    To prove the correctness of the algorithm, we need the following lemmas that connects the distance $d_\otimes(U,I\otimes I)$ with the behavior of $U$ over the Pauli basis.
	
	\begin{lemma}\label{lemma:complete}
		For $U\in\ugrp(4)$, if $d_\otimes(U,I\otimes I)\leq\delta$ then for every $\sigma_1\in\{X,Y,Z\}$ and $\sigma_2\in\{I,X,Y,Z\}$, 
		\[\FrN{\Tr_A\left[U(\sigma_1\otimes\sigma_2)U\ct\right]}\leq 2\delta,\]
		where $\Tr_A$ is the partial trace that traces out the first qubit.
	\end{lemma}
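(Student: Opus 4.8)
The plan is to reduce the claim to the definition of $d_\otimes$ and then exploit that the partial trace $\Tr_A$, as a map on the traceless part of operators, does not increase the Frobenius norm. First I would use the hypothesis $d_\otimes(U, I\otimes I)\leq\delta$ to pick $U_1,U_2\in\ugrp(2)$ with $d_\diamond(U, U_1\otimes U_2)\leq\delta$, so that the channel $\Phi_U:\rho\mapsto U\rho U\ct$ is $\delta$-close in diamond norm to the channel $\Phi_{U_1\otimes U_2}:\rho\mapsto (U_1\otimes U_2)\rho(U_1\otimes U_2)\ct$. Since $\sigma_1\otimes\sigma_2$ is a difference of two (sub-normalized) density matrices with $\FrN{\sigma_1\otimes\sigma_2}\leq 2$, the diamond-distance bound gives $\TrN{U(\sigma_1\otimes\sigma_2)U\ct - (U_1\otimes U_2)(\sigma_1\otimes\sigma_2)(U_1\otimes U_2)\ct}\leq 2\delta$; here I would be slightly careful about normalization — write $\sigma_1\otimes\sigma_2 = \rho_+-\rho_-$ with $\rho_\pm$ genuine states (possible since $\sigma_1,\sigma_2$ are traceless Hermitian with spectrum $\{-1,1\}$ up to the $\sigma_2=I$ case, which only makes things easier), and apply the diamond bound twice, losing a factor of $2$ at most.

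The second ingredient is that $(U_1\otimes U_2)(\sigma_1\otimes\sigma_2)(U_1\otimes U_2)\ct = (U_1\sigma_1 U_1\ct)\otimes(U_2\sigma_2 U_2\ct)$, and $U_1\sigma_1 U_1\ct$ is traceless for $\sigma_1\in\{X,Y,Z\}$. Hence tracing out the first qubit annihilates this term: $\Tr_A[(U_1\sigma_1 U_1\ct)\otimes(U_2\sigma_2 U_2\ct)] = \Tr[U_1\sigma_1 U_1\ct]\cdot (U_2\sigma_2 U_2\ct) = 0$. Therefore
\[\Tr_A\left[U(\sigma_1\otimes\sigma_2)U\ct\right] = \Tr_A\left[U(\sigma_1\otimes\sigma_2)U\ct - (U_1\otimes U_2)(\sigma_1\otimes\sigma_2)(U_1\otimes U_2)\ct\right],\]
and now I would invoke that the partial trace is trace-nonincreasing together with $\FrN{\cdot}\leq\TrN{\cdot}$ to conclude $\FrN{\Tr_A[U(\sigma_1\otimes\sigma_2)U\ct]}\leq \TrN{U(\sigma_1\otimes\sigma_2)U\ct - (U_1\otimes U_2)(\sigma_1\otimes\sigma_2)(U_1\otimes U_2)\ct}\leq 2\delta$, which is exactly the desired bound.

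The only real subtlety — and the step I'd spend the most care on — is getting the constant right when passing from the diamond distance (a statement about states) to the statement about the Pauli operator $\sigma_1\otimes\sigma_2$, since $\sigma_1\otimes\sigma_2$ is not itself a state. The clean way is: $d_\diamond(U,U_1\otimes U_2)\leq\delta$ means for every input $\rho$ (including on an enlarged system, but we only need the no-ancilla version here), $\TrN{U\rho U\ct - (U_1\otimes U_2)\rho (U_1\otimes U_2)\ct}\leq\delta$; writing $\sigma_1\otimes\sigma_2=\rho_+-\rho_-$ as a difference of two states and applying the triangle inequality yields the factor $2$. One should double-check that $\sigma_2=I$ is handled: then $\sigma_1\otimes I$ is still a difference of two states (e.g. eigenprojector decomposition of $\sigma_1$ tensored with $I/1$ appropriately normalized), and $U_1\sigma_1 U_1\ct$ is still traceless, so the argument goes through verbatim. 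Everything else is routine norm monotonicity.
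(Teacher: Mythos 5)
Your proposal is correct and follows essentially the same route as the paper's proof: use the definition of $d_\otimes$ to obtain $U_1\otimes U_2$ within diamond distance $\delta$ of $U$, observe that $\Tr_A\left[(U_1\sigma_1U_1\ct)\otimes(U_2\sigma_2U_2\ct)\right]=0$ because $U_1\sigma_1U_1\ct$ is traceless, and finish with $\FrN{\cdot}\leq\TrN{\cdot}$ and monotonicity of the trace norm under partial trace. The one (shared) looseness is the constant in the diamond-norm step: the positive and negative eigenprojectors of a two-qubit Pauli have trace $2$, not $1$, so your ``difference of genuine states'' decomposition actually yields $4\delta$ rather than $2\delta$ by the naive accounting --- but the paper asserts $2\delta$ at the same step with no more detail, and the constant is immaterial to how the lemma is used.
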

	\begin{proof}
		By definition, $d_\otimes(U,I\otimes I)\leq\delta$ means that there exists $U_1,U_2\in\ugrp(2)$ that $d_\diamond(U,U_1\otimes U_2)\leq\delta$. Therefore,
		\begin{align*}
		\FrN{\Tr_A\left[U(\sigma_1\otimes\sigma_2)U\ct\right]} &\leq \TrN{\Tr_A\left[U(\sigma_1\otimes\sigma_2)U\ct\right]} \\
		& = \TrN{\Tr_A\left[U(\sigma_1\otimes\sigma_2)U\ct\right] - \Tr_A\left[U_1\sigma_1U_1\ct\otimes U_2\sigma_2U_2\ct\right]} \\
		& \leq \TrN{U(\sigma_1\otimes\sigma_2)U\ct - (U_1\otimes U_2)(\sigma_1\otimes\sigma_2)(U_1\otimes U_2)\ct} \\
		&\leq 2\delta. \qedhere
		\end{align*}
	\end{proof}
	
	\begin{lemma}\label{lemma:sound}
		For $U\in\ugrp(4)$, if for every $\sigma_1\in\{X,Y,Z\}$ and $\sigma_2\in\{I,X,Y,Z\}$ we have
		\[\FrN{\Tr_A\left[U(\sigma_1\otimes\sigma_2)U\ct\right]}\leq \delta,\]
		then $d_\otimes(U,I\otimes I)\leq 20\sqrt{\delta}$.
	\end{lemma}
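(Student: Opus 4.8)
The plan is to prove the contrapositive-flavored statement directly: assuming every partial trace $\Tr_A[U(\sigma_1\otimes\sigma_2)U\ct]$ is small in Frobenius norm, I will produce explicit single-qubit unitaries $U_1,U_2$ witnessing $d_\otimes(U,I\otimes I)\leq 20\sqrt\delta$. The natural way to extract candidates for $U_1,U_2$ is through the Pauli transfer matrix (PTM) of $U$: the channel $\rho\mapsto U\rho U\ct$ is represented by a $16\times 16$ orthogonal matrix $R$ with entries $R_{(\tau_1\tau_2),(\sigma_1\sigma_2)}=\frac14\Tr[(\tau_1\otimes\tau_2)U(\sigma_1\otimes\sigma_2)U\ct]$. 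The hypothesis says precisely that whenever $\sigma_1\in\{X,Y,Z\}$, the output Pauli $\tau_1\otimes\tau_2$ almost never has $\tau_1=I$; that is, $R$ almost preserves the ``first qubit is non-identity'' grading. First I would make this quantitative: $\sum_{\tau_2}R_{(I\tau_2),(\sigma_1\sigma_2)}^2 = \frac1{16}\FrN{\Tr_A[U(\sigma_1\otimes\sigma_2)U\ct]}^2\leq \delta^2/16$ for each such pair, so the total ``leakage'' mass of $R$ into the rows indexed by $I\otimes(\text{anything})$, coming from columns $\sigma_1\neq I$, is at most $O(\delta^2)$.

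Next, I would observe that because $R$ is orthogonal, the near-block-diagonal structure forces $R$ to be $O(\delta)$-close in operator or Frobenius norm to a genuine block matrix that preserves the first-qubit grading; and a grading-preserving orthogonal Pauli transfer matrix on two qubits that also fixes $I\otimes I$ must factor as $R_1\otimes R_2$ for single-qubit orthogonal PTMs $R_1,R_2$ (on the non-identity Pauli block this uses that the only subgroups/representations involved force a tensor structure — concretely, the ``$I\otimes\{X,Y,Z\}$'' rows must be hit only by ``$I\otimes\{X,Y,Z\}$'' columns, which pins down a $3\times 3$ orthogonal block $R_2$, and symmetrically one reads off $R_1$ from the ``$\{X,Y,Z\}\otimes I$'' part). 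Each single-qubit orthogonal PTM in $SO(3)$ lifts to a unitary $U_1,U_2\in\ugrp(2)$ (if one of the $3\times3$ blocks has determinant $-1$ one can absorb it, but for $U$ near a tensor product this will not occur; I would handle it by noting $d_\otimes$ quotients out single-qubit unitaries anyway). Then $d_\diamond(U,U_1\otimes U_2)$ is controlled by the PTM distance $\FrN{R-R_1\otimes R_2}$ up to universal constants, giving $d_\otimes(U,I\otimes I)\leq O(\sqrt\delta)$, and I would track constants to land at $20\sqrt\delta$.

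The main obstacle is the ``rounding'' step: going from ``$R$ is $O(\delta)$-close to \emph{some} grading-preserving orthogonal matrix'' to ``$R$ is $O(\delta)$-close to an \emph{honest tensor product} $R_1\otimes R_2$.'' The $\sqrt\delta$ loss (rather than $\delta$) must come in here, and the cleanest route is probably: use the small leakage to show the $3\times 3$ matrix $A$ with $A_{\tau_1\sigma_1}=\frac14\sum_{\sigma_2}(\text{something})$ — more precisely the compression of $R$ to the $\{X,Y,Z\}\otimes I$ block — is within $O(\delta)$ of an orthogonal matrix, then its polar factor is the candidate $R_1$, and similarly $R_2$; finally bound $\FrN{R-R_1\otimes R_2}$ by splitting $R$'s columns according to the grading and using orthonormality of columns to show the off-pattern entries and the within-block deviation from the product are each $O(\sqrt\delta)$. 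I would double-check that the constant $20$ survives, which is the only genuinely fiddly part; everything else is linear algebra plus the standard equivalence between diamond distance and Frobenius distance of Pauli transfer matrices for unitary channels on a fixed number of qubits.
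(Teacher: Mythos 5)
Your first step is sound and is essentially the paper's first step in different clothing: the hypothesis bounds the mass of the Pauli transfer matrix $R$ in the rows $I\otimes\{X,Y,Z\}$ coming from columns with $\sigma_1\neq I$, and orthogonality of $R$ then forces the subspace $I\otimes\spn\{X,Y,Z\}$ to be approximately invariant. (The paper phrases this as the unital channel $\Upphi(\rho)=\frac12\Tr_A[U\ct(I\otimes\rho)U]$ being approximately norm-preserving.) The gap is in the step you yourself flag as the obstacle, and it is not merely fiddly --- the key claim is false as stated. A $16\times16$ orthogonal matrix that fixes $I\otimes I$ and exactly preserves the first-qubit grading need \emph{not} factor as $R_1\otimes R_2$: take the identity on $I\otimes\{X,Y,Z\}$ and on $\{X,Y,Z\}\otimes I$, and an arbitrary $SO(9)$ rotation on the $\{X,Y,Z\}\otimes\{X,Y,Z\}$ block. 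This has zero leakage and is grading-preserving, yet has no tensor structure. Factorization is true only for PTMs of \emph{unitary channels}, and your rounding step ("close to a genuine block matrix") discards exactly that property. Relatedly, the hypothesis gives you no direct control over the $\{X,Y,Z\}\otimes I$ rows or the $9\times9$ block where most of the mass of $R$ lives, so "symmetrically one reads off $R_1$" has nothing to read from; and your treatment of the $\det=-1$ case is circular (a $3\times3$ orthogonal block of determinant $-1$ lifts to an anti-unitary, which cannot be absorbed into the $d_\otimes$ quotient --- it must be \emph{excluded}, which requires complete positivity).

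What is missing is precisely how the paper converts "the induced map on the second qubit is approximately a unitary $U_2$" into "$U$ is approximately $U_1\otimes U_2$". The paper does this in two moves that have no counterpart in your sketch: first, it invokes the canonical form of unital qubit channels, where the constraint that $(d_x,d_y,d_z)$ lies in the tetrahedron $\mathrm{conv}\{(1,1,1),(1,-1,-1),(-1,1,-1),(-1,-1,1)\}$ together with $d_x^2,d_y^2,d_z^2\geq 1-\delta$ forces the map to be close to conjugation by a genuine unitary $U_2$ (this is where CP is used and the $\det=-1$ branch dies); second, it passes to the matrix entries of $U'=U(I\otimes U_2\ct)$ and evaluates the near-commutation of $U'$ with $I\otimes\rho$ at $\rho=\ket{0}\bra{0},\ket{1}\bra{1},\ket{+}\bra{+}$ to show entrywise that $U'\approx M\otimes I$, then rounds $M$ to a unitary via its SVD --- which is where the degradation from $\delta$ to $\sqrt{\delta}$ actually occurs. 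Without some substitute for this second move (i.e., an argument using unitarity of $U$ itself, not just orthogonality of $R$), the proposal does not close.
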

	
	The proof of \Cref{lemma:sound} is rather technical and thus is deferred to the end of this section. For now, let us show how \Cref{lemma:complete,lemma:sound} would imply the completeness and the soundness of \Cref{alg:learngate}.
	
	We apply \Cref{thm:mixing} to the circuit $\mc C$ minus its first layer, which is a circuit of depth $D-1$. Notice that among the input qubits to the second layer of $\mc C$, the third to $n$-th qubits only depends on the gates $G_{1,3},\ldots,G_{1,n}$ and thus can be viewed as fixed. The first qubit does not affect the $(D+1)$-th output qubit and thus its entire lightcone can be removed from the picture. That leaves us the second qubit, which is
	\[\Tr_A\left[G_{1,1}G\ct(\rho_1\otimes\rho_2)G G_{1,1}\ct\right].\]
	Therefore, with $\rho_1$ changed with a difference $\sigma_1$ and $\rho_2=(I+\sigma_2)/2$ unchanged, \Cref{thm:mixing} implies that with probability $1-\gamma$, the following holds for every $G$.
	\begin{equation}\label{eq:sound}
	\FrN{\pi-\pi'}\geq \FrN{\Tr_A\left[G_{1,1}G\ct(\sigma_1\otimes\rho_2)G G_{1,1}\ct\right]}\cdot (2^{-D+1}\gamma)^{c_2}.
	\end{equation}
	Meanwhile, \Cref{lemma:frob} implies that
	\begin{equation}\label{eq:complete}
	\FrN{\pi-\pi'}\leq \FrN{\Tr_A\left[G_{1,1}G\ct(\sigma_1\otimes\rho_2)G G_{1,1}\ct\right]}\cdot \sqrt{2}.
	\end{equation}
	
	For completeness, since we take $G$ from an $\eps$-net, the algorithm must have tested some $G$ with $d_\otimes(G_{1,1},G)=d_\otimes(G_{1,1}G\ct,I\otimes I)\leq\eps\leq\delta$. Hence by \Cref{lemma:complete} and \cref{eq:complete}, for such $G$ it always holds that $\FrN{\pi-\pi'}\leq2\sqrt{2}\eps$, and thus $G$ will not be rejected even with $\eps$ tomography errors in $\pi$ and $\pi'$.
	
	For soundness, assume that $d_\otimes(G_{1,1},G)>\delta$, then by \Cref{lemma:sound} we know that for some $\sigma_1\in\{X,Y,Z\}$ and $\sigma_2\in\{I,X,Y,Z\}$,
	\begin{equation}
	\FrN{\Tr_A\left[G_{1,1}G\ct(\sigma_1\otimes\sigma_2)G G_{1,1}\ct\right]}>\frac{1}{400}\delta^2.
	\end{equation}
	As $\sigma_2 = 2\rho_2-I$, that means there exists some $\rho_2$ such that
	\begin{equation}
	\FrN{\Tr_A\left[G_{1,1}G\ct(\sigma_1\otimes\rho_2)G G_{1,1}\ct\right]}>\frac{1}{1200}\delta^2.
	\end{equation}
	Thus by \cref{eq:sound} we have $\FrN{\pi-\pi'}>\delta^2(2^{-D+1}\gamma)^{c_2}/1200 > 7\eps$, and $G$ will be rejected when $\pi$ and $\pi'$ are estimated with $\eps$ error.
	
	The efficiency of the algorithm depends on the size of the $\eps$-net and the state tomography process, which are both $\poly(1/\eps)=\poly(1/\delta,1/\gamma)$. Notice that the algorithm similarly works for every gate in the first layer, and as a result, we obtained the following formal statement of \Cref{thm:learngate}:
	
	\begin{theorem}\label{thm:learngate2}
		Let $\mc C$ be a brickwork random quantum circuit of depth $D$, where each gate is independently Haar random. Let $G_{1,1}$ be the gate in the first layer of $\mc C$ that acts on the first and second qubit. Given oracle access to $\mc C$, for any $\delta,\gamma\in(0,1)$, with probability at least $1-\gamma$ over $\mc C$, \Cref{alg:learngate} outputs some $G\in\ugrp(4)$ such that $d_\otimes(G,G_{1,1})\leq \delta$ in time $\poly(2^D,1/\delta,1/\gamma)$.
	\end{theorem}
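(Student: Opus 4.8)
The plan is to convert the completeness and soundness discussion that precedes the statement into a precise argument about what \Cref{alg:learngate} returns, and then bound its running time. The only place where randomness over $\mc C$ is consumed is a single invocation of \Cref{thm:mixing}, applied to $\mc C$ with its first layer deleted: this is a brickwork circuit of depth $D-1$ whose distinguished input wire is the second qubit feeding the second layer, which — after $G\ct$ is applied and the first layer acts — holds the state $\Tr_A[G_{1,1}G\ct(\rho_1\otimes\rho_2)GG_{1,1}\ct]$. Because \Cref{thm:mixing} gives a bound on $\FrN{\phic(\rho)-\phic(\rho')}$ that is uniform over \emph{all} single-qubit inputs $\rho,\rho'$, this one invocation — failing with probability at most $\gamma$ — establishes \cref{eq:sound} simultaneously for every candidate $G$ in the net, so no union bound over net elements is needed. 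I condition henceforth on the resulting good event, of probability at least $1-\gamma$ over $\mc C$.

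On this event, completeness and soundness pin down the output. For completeness, the $\eps$-net contains a $G$ with $d_\otimes(G_{1,1},G)=d_\otimes(G_{1,1}G\ct,I\otimes I)\le\eps$; \Cref{lemma:complete} then bounds each partial-trace Pauli image by $2\eps$, and combining the cases $\sigma_2\in\{X,Y,Z\}$ with $\sigma_2=I$ via the triangle inequality and passing through \cref{eq:complete} yields a true value $\FrN{\pi-\pi'}\le 2\sqrt2\,\eps$, so even with $\eps$-accurate tomography on $\pi$ and $\pi'$ the estimate stays below the $5\eps$ rejection threshold and $G$ survives. For soundness, suppose $d_\otimes(G_{1,1},G)>\delta$; the contrapositive of \Cref{lemma:sound} (applied with its ``$\delta$'' set to $\delta^2/400$) produces a pair $\sigma_1,\sigma_2$ whose Pauli image has Frobenius norm larger than $\delta^2/400$, and a short case analysis over the tested inputs $\rho_2\in\{(I+\sigma_2)/2,\,I/2\}$ — using $\sigma_2=2\rho_2-I$ — exhibits one tested $(\sigma_1,\rho_2)$ for which \cref{eq:sound} forces $\FrN{\pi-\pi'}$ above $7\eps$, hence above $5\eps$ even after $\eps$-accurate tomography, so $G$ is rejected. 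Consequently the loop halts, and any $G$ it returns (necessarily one that was not rejected) satisfies $d_\otimes(G_{1,1},G)\le\delta$, as required. (Randomly choosing $\ket{0}$ or $\ket{1}$ to realize the $\rho_2=I/2$ input is harmless, since tomography simply estimates whatever state is produced.)

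For the running time: an $\eps$-net of $\ugrp(4)$ under the pseudometric $d_\otimes$ has size $\poly(1/\eps)$, the inner loop ranges over the $3\cdot4=12$ Pauli pairs, and each pair costs single-qubit state tomography to accuracy $\eps$ on $\pi$ and $\pi'$, which by \Cref{prop:tomo} needs $\poly(1/\eps)$ executions of $\mc C$ preceded by $G\ct$; since $\eps=10^{-4}\delta^2(2^{-D}\gamma)^{c_2}$ we have $1/\eps=\poly(2^D,1/\delta,1/\gamma)$, so the total is $\poly(2^D,1/\delta,1/\gamma)$. The main obstacle is not in this wrap-up — most of it already appears before the statement — but in the ingredient it rests on, the deferred \Cref{lemma:sound}: the ``inverse'' direction, showing that smallness of all single-qubit Pauli images forces $U$ to within $O(\sqrt\delta)$ in diamond distance of a product $U_1\otimes U_2$. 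The quadratic loss there dictates the scale of $\eps$ relative to $\delta$, and it must be tracked carefully so that the completeness window (roughly $2\sqrt2\,\eps+2\eps$) stays strictly below the threshold $5\eps$ while the soundness window (roughly $7\eps-2\eps$) stays at or above it. A secondary point to get right is that ``learning $G_{1,1}$'' should mean closeness up to left multiplication by single-qubit unitaries — the reason the $\eps$-net is taken on the $d_\otimes$ quotient — because uncomputing $G_{1,1}$ up to any $U_1\otimes U_2$ already disentangles the first two wires, which is all the lightcone test can detect.
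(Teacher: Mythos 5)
Your proposal is correct and follows essentially the same route as the paper: a single invocation of \Cref{thm:mixing} on the circuit minus its first layer (uniform over all inputs, hence over all net elements $G$), completeness via \Cref{lemma:complete} and \cref{eq:complete} giving $\FrN{\pi-\pi'}\leq 2\sqrt{2}\eps$, soundness via the contrapositive of \Cref{lemma:sound} with the $\delta^2/400\to\delta^2/1200\to 7\eps$ chain through \cref{eq:sound}, and the same $\poly(1/\eps)$ accounting for the net size and tomography. The arithmetic around the $5\eps$ threshold and the choice $\eps=10^{-4}\delta^2(2^{-D}\gamma)^{c_2}$ match the paper's exactly.
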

	
	The rest of this section is devoted to prove \Cref{lemma:sound}.
	
	\begin{proof}[Proof of \Cref{lemma:sound}]
		In the proof we assume that $\delta\leq 1/2$, as otherwise the claim is trivial.
		
		For any Hermitian operator $\sigma\in\C^{4\times 4}$ with trace zero, the value $\FrN{\Tr_A[\sigma]}/\sqrt{2}$ is exactly the length of the projection of $\sigma$ on to the subspace $I\otimes\spn\{X,Y,Z\}$. Therefore, the assumptions would imply that for every $\sigma\in\spn \{X,Y,Z\}\otimes \{I,X,Y,Z\}$ and $\sigma'\in I\otimes\spn\{X,Y,Z\}$, we have
		\begin{equation}\label{eq:s1}
		\Tr[U\sigma U\ct\sigma']\leq\frac{\delta}{\sqrt{2}}\cdot \FrN{\sigma}\FrN{\sigma'}.
		\end{equation}
		Now consider any single-qubit state $\rho$, and let $\sigma$ be the projection of $U\ct(I\otimes\rho)U$ onto the subspace $\spn \{X,Y,Z\}\otimes \{I,X,Y,Z\}$. Inequality \cref{eq:s1} gives
		\begin{align*}
		\Tr[U\ct(I\otimes\rho)U\sigma] &=\Tr[U\sigma U\ct(I\otimes\rho)] \nonumber\\
		& = \Tr[U\sigma U\ct(I\otimes(\rho-I/2))] \nonumber\\
		&\leq \frac{\delta}{\sqrt{2}}\cdot \FrN{\sigma}\cdot\sqrt{2}\FrN{\rho-I/2} \ \leq \delta.
		\end{align*}
		That means the projection of $U\ct(I\otimes\rho)U$ onto the orthogonal subspace $I\otimes\spn\{I,X,Y,Z\}$ must be large. Since this projection is exactly $\frac{1}{2}I\otimes\Tr_A[U\ct(I\otimes\rho)U]$, we have
		\begin{equation}\label{eq:s2}
		\frac{1}{2}\FrN{\Tr_A\left[U\ct(I\otimes\rho)U\right]}^2\geq \FrN{U\ct(I\otimes\rho)U}^2-\delta = 2\FrN{\rho}^2-\delta.
		\end{equation}
		
		If we define the following channel
		\[\Upphi(\rho) = \frac{1}{2}\Tr_A\left[U\ct(I\otimes\rho)U\right]\]
		then inequality \cref{eq:s2} translates to 
		\begin{equation}
		\FrN{\Upphi(\rho)}^2\geq \FrN{\rho}^2-\delta/2.
		\end{equation}
		That means $\Upphi$ is almost norm preserving and thus almost a unitary channel. In fact, as $\Upphi(I)=I$ by definition, $\Upphi$ is a unital channel and has a canonical form \cite{Choi23} $\Upphi'(\rho)=W\Upphi(V\rho V\ct)W\ct$ for some $V,W\in\ugrp(2)$ such that
		\[\Upphi'(X)=d_x X,\quad \Upphi'(Y)=d_y Y,\quad \Upphi'(Z)=d_z Z,\quad d_x,d_y,d_z\in[-1,1].\]
		By taking $\rho=(I+X)/2$ we get $d_x^2\geq 1-\delta$, and the same also holds for $d_y$ and $d_z$. The canonical form has an additional property that $(d_x,d_y,d_z)$ is a convex combination of vectors $(1,1,1),(1,-1,-1),(-1,1,-1),(-1,-1,1)$, and as $\delta\leq1/2$, $\Upphi'$ must be close to a Pauli rotation in $\{I,X,Y,Z\}$. That means there exists $U_2\in\{VW,VXW,VYW,VZW\}$ such that for every single-qubit state $\rho$,
		\begin{equation}
		\FrN{\Upphi(\rho)-U_2\ct\rho U_2}^2\leq \frac{1}{2}\left(1-\sqrt{1-\delta}\right)^2\leq \frac{1}{2}\delta^2.
		\end{equation}
		
		Therefore, for any two single-qubit states $\rho$ and $\rho'$ with $\Tr[\rho\rho']=0$ we have
		\begin{align}
		\Tr\left[U_2\Upphi(\rho)U_2\ct\rho'\right] &= \Tr\left[\left(U_2\Upphi(\rho)U_2\ct-\rho\right)\rho'\right] \nonumber\\
		&\leq \FrN{U_2\Upphi(\rho)U_2\ct-\rho}\FrN{\rho'} \nonumber\\
		&= \FrN{\Upphi(\rho)-U_2\ct\rho U_2}\FrN{\rho'} \leq \frac{1}{\sqrt{2}}\delta. \label{eq:s}
		\end{align}
		Now consider the unitary $U'=U(I\otimes U_2\ct)\in\ugrp(4)$, and we denote the entries of $U'$ as $u_{ij}$ for $i,j=1,\ldots,4$. Notice that
		\[U_2\Upphi(\rho)U_2\ct = \frac{1}{2}\Tr_A\left[(I\otimes U_2)U\ct(I\otimes\rho)U(I\otimes U_2\ct)\right] = \frac{1}{2}\Tr_A\left[{U'}\ct(I\otimes\rho)U'\right],\]
		and this allows us to write out $\Tr\left[U_2\Upphi(\rho)U_2\ct\rho'\right]$ exactly. In particular, when $\rho=\ket{0}\bra{0}$ and $\rho'=\ket{1}\bra{1}$, we get from \cref{eq:s} that
		\begin{equation}\label{eq:s3}
		|u_{12}|^2+|u_{14}|^2+|u_{32}|^2+|u_{34}|^2 \leq \sqrt{2}\delta.
		\end{equation}
		When $\rho=\ket{1}\bra{1}$ and $\rho'=\ket{0}\bra{0}$, we get
		\begin{equation}\label{eq:s4}
		|u_{21}|^2+|u_{23}|^2+|u_{41}|^2+|u_{43}|^2 \leq \sqrt{2}\delta.
		\end{equation}
		And when $\rho=\ket{+}\bra{+}$ and $\rho'=\ket{-}\bra{-}$, we get
		\begin{align}
		& |u_{11}+u_{21}-u_{12}-u_{22}|^2+|u_{13}+u_{23}-u_{14}-u_{24}|^2 \nonumber\\
		+\ & |u_{31}+u_{41}-u_{32}-u_{42}|^2+|u_{33}+u_{43}-u_{34}-u_{44}|^2 \leq 4\sqrt{2}\delta. \label{eq:s5}
		\end{align}
		Combine \cref{eq:s3,eq:s4,eq:s5} together we also get
		\begin{equation}\label{eq:s6}
		|u_{11}-u_{22}|^2+|u_{13}-u_{24}|^2+|u_{31}-u_{42}|^2+|u_{33}-u_{44}|^2 \leq 16\sqrt{2}\delta.
		\end{equation}
		Inequalities \cref{eq:s3,eq:s4,eq:s6} imply that there exists a matrix $M\in\C^{2\times 2}$ such that
		\begin{equation}
		\FrN{U'-M\otimes I}^2\leq 10\sqrt{2}\delta.
		\end{equation}
		
		Let $M=V'\Sigma W'$ be the singular value decomposition of $M$, then we have
		\begin{align}
		\FrN{U'-V'W'\otimes I} &\leq \FrN{U'-M\otimes I}+\FrN{V'W'\otimes I-M\otimes I} \nonumber\\
		&=\FrN{U'-M\otimes I} +\min_{U''\in\ugrp(4)}\FrN{U''-\Sigma\otimes I}\nonumber\\
		&\leq 2\FrN{U'-M\otimes I} \ \leq 10\sqrt{\delta}.
		\end{align}
		Therefore, if we let $U_1=V'W'$, then with the bound of diamond norm \cite{Haah23} we get
		\begin{align*}
		d_\otimes(U,I\otimes I)&\leq d_\diamond(U,U_1\otimes U_2) \nonumber \\
		&\leq 2\FrN{U - U_1\otimes U_2} = 2\FrN{U' - U_1\otimes I} \leq 20\sqrt{\delta}. \qedhere
		\end{align*}
	\end{proof}
	
	\subsubsection{Learning the Circuit with Discretized Distribution}\label{sec:disc}
	
	It is tempting to use \Cref{thm:learngate2} to learn the entire circuit $\mc C$. Indeed, if the statement is errorless that $d_\otimes(G,G_{1,1})=0$, we could view the single-qubit gates $G_{1,1}G\ct=U_1\otimes U_2$ as part of the second layer. That means after learning the first layer, we can perfectly uncompute it and hence use \Cref{alg:learngate} to learn the second layer, and proceed until we are only left with single qubit gates, which are easily learnable.
	
	However, when there are learning errors, which are inevitable for a continuous gate distribution like $\haar(4)$, the above framework runs into a problem. Since we cannot perfectly uncompute the first layer, the inputs to the rest of the circuit are not clean enough: In particular, the error in $\sigma_1\otimes\rho_2$ could have much larger influence on the output difference $\pi-\pi'$ than $\sigma_1$ itself. To control the influence of the errors, we need to reduce the learn error in the previous layer to be polynomially smaller than the target error in the current layer, and therefore a circuit of depth $D=\Uptheta(\log n)$ would require error as small as $2^{-\Omega(\log^2 n)}$ and incur quasi-polynomial running time.
	
	Here we present a bypass to the problem which allows us to prove an errorless version of \Cref{thm:learngate2} and thus make the proposed framework work. The idea is to change the distribution from $\ugrp(4)$ into a discrete one that approximates $\ugrp(4)$. Intuitively, any $\eps$-net where the elements are distributed according to $\haar(4)$ would be a good approximation. We formalize this intuition as the following:
	
	\begin{definition}\label{def:eps}
		An $\eps$-net of a distribution $\mc D$ over a pseudometric space $(S,d)$ is a distribution $\mc D_\eps$ over $S$ with a finite support, such that $\mc D_\eps=f(\mc D)$ for some (possibly randomized) map $f:S\to S$, with the following properties:
		\begin{itemize}
			\item For every $x\in S$, $d(x,f(x))\leq\eps$.
			\item For every $x_1,x_2\in S$, either $f(x_1)=f(x_2)$ or $d(f(x_1),f(x_2))\geq \eps$.
		\end{itemize}
	\end{definition}
	
	Notice that under \Cref{def:eps}, the set $\supp\mc D_\eps$ is indeed an $\eps$-net in the normal sense. Actually, the definition is general enough so that we can first choose any $\eps$-net as the support, remove the redundant elements with zero distances, and then take $f$ to be an arbitrary rounding scheme into the support. We show that $\mc D_\eps$ approximates $\mc D$ via the following lemma.
	
	\begin{lemma}\label{lemma:lip}
		If $F:S\to\R$ is $L$-Lipschitz, that is for all $x_1,x_2\in S$,
		\[|F(x_1)-F(x_2)|\leq L\cdot d(x_1,x_2),\]
		then for every $\delta\in\R$ we have
		\[\Pr_{x\sim\mc D_\eps}[F(x)\leq\delta]\leq \Pr_{x\sim\mc D}[F(x)\leq \delta+\eps L].\]
	\end{lemma}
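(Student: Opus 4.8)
The plan is to use the coupling between $\mc D$ and $\mc D_\eps$ that is built into \Cref{def:eps}, namely the map $f$ with $\mc D_\eps=f(\mc D)$. First I would draw $x\sim\mc D$ and set $y=f(x)$, so that $y$ is distributed as $\mc D_\eps$; if $f$ is randomized, I would work in the joint probability space of $x$ together with the internal coins of $f$, observing that the defining guarantee $d(x,f(x))\leq\eps$ holds pointwise for every realization, so nothing is lost by this.

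Next I would invoke the Lipschitz hypothesis: for every coupled pair, $|F(y)-F(x)|\leq L\cdot d(x,y)\leq L\eps$. Consequently $F(y)\leq\delta$ implies $F(x)\leq F(y)+L\eps\leq\delta+L\eps$, i.e.\ in the coupled space the event $\{F(y)\leq\delta\}$ is contained in the event $\{F(x)\leq\delta+\eps L\}$. Taking probabilities of these nested events gives $\Pr[F(y)\leq\delta]\leq\Pr[F(x)\leq\delta+\eps L]$, and identifying the left side with $\Pr_{x\sim\mc D_\eps}[F(x)\leq\delta]$ (as $y\sim\mc D_\eps$) and the right side with $\Pr_{x\sim\mc D}[F(x)\leq\delta+\eps L]$ yields the claim.

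There is essentially no genuine obstacle; the only point requiring a little care is that $f$ may be randomized, so one must phrase the containment of events in the joint space of $x$ and the randomness of $f$ rather than pretending $f$ is a deterministic function. (Note the second bullet of \Cref{def:eps}, about distances within $\supp\mc D_\eps$, is not needed for this lemma and can be ignored here.)
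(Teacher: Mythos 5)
Your proposal is correct and matches the paper's proof: both use the coupling $y=f(x)$ with $x\sim\mc D$ from \Cref{def:eps}, apply the Lipschitz bound together with $d(x,f(x))\leq\eps$, and conclude via the resulting containment of events. The paper writes this as a one-line chain of inequalities, but the argument is identical.
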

	\begin{proof}
		Let $f:S\to S$ be the map in \Cref{def:eps}, then
		\begin{align*}
		\Pr_{x\sim\mc D_\eps}[F(x)\leq\delta] &= \Pr_{x\sim\mc D}[F(f(x))\leq\delta] \\
		&\leq \Pr_{x\sim\mc D}[F(x)\leq\delta +|F(x)-F(f(x))|] \\
		&\leq \Pr_{x\sim\mc D}[F(x)\leq\delta + L\cdot d(x,f(x))]\\ & \leq \Pr_{x\sim\mc D}[F(x)\leq \delta+\eps L]. \qedhere
		\end{align*}
	\end{proof}
	
	From now on, for each $\eps>0$ we fix some $\eps$-net of the Haar measure $\haar(4)$ under the $d_\otimes$ distance, and denote it by $\haar_\eps(4)$. We will show that when the gates in the brickwork circuit are drawn the net, we can actually use the framework at the start of this section to learn the circuit. To do so, we first prove an errorless version of \Cref{thm:learngate2} as follows.
	
	\begin{theorem}\label{thm:learngate3}
		Let $\mc C$ be a brickwork random quantum circuit of depth $D$, where each gate is independently drawn from $\haar_\eps(4)$. Let $G_{1,1}$ be the gate in the first layer of $\mc C$ that acts on the first and second qubit. Given oracle access to $\mc C$, for every $\gamma\in(0,1)$ there is an algorithm that with probability at least $1-\gamma$ over $\mc C$ outputs $G_{1,1}$ in time $\poly(2^D,1/\gamma)$.
	\end{theorem}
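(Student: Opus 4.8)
The plan is to reuse the algorithm and analysis of \Cref{sect:learn} with a single change: rather than sweeping over an externally built $\eps$-net of $\ugrp(4)$, the algorithm sweeps over the finite support $\supp\haar_\eps(4)$ of the gate distribution itself. For each candidate $G\in\supp\haar_\eps(4)$ we run the sensitivity test of \Cref{alg:learngate}: fix all inputs except the first two qubits, apply $G\ct$ to the first two qubits and then $\mc C$, vary the first input qubit over $(I\pm\sigma_1)/2$ for $\sigma_1\in\{X,Y,Z\}$ (with $\sigma_2\in\{I,X,Y,Z\}$ on the second), estimate the $(D+1)$-th output qubit by single-qubit tomography to error $\eps_{\mathrm{tom}}$, and reject $G$ if any estimate pair differs by more than $5\eps_{\mathrm{tom}}$; output the first surviving $G$. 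The point of using $\supp\haar_\eps(4)$ as the candidate set is that, by the second property of \Cref{def:eps}, any two of its elements are either $d_\otimes$-identical or at $d_\otimes$-distance at least $\eps$; since $G_{1,1}$ also lies in $\supp\haar_\eps(4)$, it will suffice to certify $d_\otimes(G,G_{1,1})<\eps$ for the output $G$, whence $d_\otimes(G,G_{1,1})=0$ and the output is errorless.

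So we must show (i) the test never rejects $G_{1,1}$, and (ii) it rejects every candidate $G$ with $d_\otimes(G,G_{1,1})\geq\eps$ — there being no candidates strictly in between. Part (i) is immediate: when $G=G_{1,1}$ the first input qubit lies outside the lightcone of the $(D+1)$-th output qubit, so $\FrN{\pi-\pi'}=0$ exactly and the two estimates differ by at most $2\eps_{\mathrm{tom}}<5\eps_{\mathrm{tom}}$. For (ii), if $d_\otimes(G,G_{1,1})=d_\otimes(G_{1,1}G\ct,I\otimes I)\geq\eps$ then \Cref{lemma:sound} (contrapositive, with parameter $(\eps/21)^2$) furnishes $\sigma_1\in\{X,Y,Z\}$, $\sigma_2\in\{I,X,Y,Z\}$, and hence a single-qubit state $\rho_2$, with $\FrN{\Tr_A[G_{1,1}G\ct(\sigma_1\otimes\rho_2)GG_{1,1}\ct]}=\Omega(\eps^2)$. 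Feeding this into a mixing bound for the depth-$(D-1)$ circuit $\mc C$ minus its first layer — analyzed exactly as in the soundness proof of \Cref{alg:learngate} around \cref{eq:sound} — gives, with probability $1-\gamma$ over $\mc C$ and simultaneously for all such $G$, that $\FrN{\pi-\pi'}\geq\Omega(\eps^2)\cdot(2^{-D}\gamma)^{O_k(1)}$, which exceeds $7\eps_{\mathrm{tom}}$ once $\eps_{\mathrm{tom}}$ is a fixed polynomial factor below $\eps^2(2^{-D}\gamma)^{O_k(1)}$, so $G$ is rejected. Since $\eps_{\mathrm{tom}}^{-1}$, $|\supp\haar_\eps(4)|$ and the number of Pauli settings are all $\poly(2^D,1/\eps,1/\gamma)$, the running time is as claimed.

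The one ingredient not yet in hand is the mixing bound in (ii): \Cref{thm:mixing} is proved for Haar gates, while the gates of $\mc C$ minus its first layer come from $\haar_\eps(4)$. I would obtain a $\haar_\eps$-version by revisiting the proof of \Cref{lemma:mixing}. The estimate $\FrN{\rho_i-\rho'_i}\geq 2^{-k}|\det M_i|\cdot\FrN{\tau_{i-1}-\tau'_{i-1}}$ of \Cref{prop:bound} holds for any gate, and two observations let us transfer the anti-concentration of $2^{-k}|\det M_i|$ across the discretization. First, for a fixed domain $\mc S_i$ the value $|\det M_i|$ is unchanged when $G_i$ is replaced by a $d_\otimes$-equivalent gate, since composing $G_i$ on its output side with local unitaries conjugates the induced single-qubit map $M_i$ by a Bloch-sphere rotation, of determinant $1$. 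Second, $|\det M_i|$ is a degree-$6$ semi-polynomial on $\ugrp(2^k)$ (\Cref{prop:degree}), and being $d_\otimes$-invariant it descends to a Lipschitz function on the compact quotient by the $d_\otimes$-equivalence; hence $2^{-k}|\det M_i|$ is $L$-Lipschitz in $G_i$ under $d_\otimes$, uniformly over $\mc S_i$, for some $L=L(k)$. Applying \Cref{lemma:lip} to \cref{eq:m2} then gives, for $G_i\sim\haar_\eps(2^k)$,
\[\Pr\bigl[2^{-k}|\det M_i|\leq x\bigr]\ \leq\ C'(2^k,6)\cdot\bigl(2^{2k}\mu_k^{-1}(x+\eps L)^2\bigr)^{C(2^k,6)},\]
which, fed into the argument of \Cref{sect:mixing} in place of \cref{eq:m2}, should give \Cref{thm:mixing} for $\haar_\eps$ gates with mixing factor $(2^{-D}\gamma)^{O_k(1)}$.

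I expect this transfer to be the main obstacle. The shift $\eps L$ spoils the bound near $x=0$: in the worst case the discretization could place a little mass on gates where $M_i$ is singular, so that the negative moment $\E[\Uplambda_k^{-t}]$ — the quantity that drives the proof of \Cref{thm:mixing} via Markov's inequality on $\lambda_1\cdots\lambda_D$ — need not be finite, and a crude union bound over the $D-1$ layers would yield only a $(2^{-D}\gamma)^{\Omega(D)}$ mixing factor and hence quasi-polynomial time. Carrying through the polynomial bound therefore requires showing that, for $\eps$ below a suitable $\poly(2^{-D}\gamma)$ threshold (which one may assume, as this is the regime in which the claimed running time is meaningful), the mass the discretization puts near the degenerate set is negligible, so the $\eps L$ shift is absorbed into the leading constant without destroying the power law. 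This is precisely where working with a discretization of the Haar measure — rather than the Haar measure itself, or an arbitrary continuous gate distribution — is what makes the argument go through.
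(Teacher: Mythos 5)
Your proposal follows the paper's proof essentially step for step: iterate $G$ over $\supp\haar_\eps(4)$, reuse the completeness/soundness analysis of \Cref{alg:learngate} verbatim (completeness is exact since $\pi=\pi'$ when $G=G_{1,1}$, and soundness only needs to handle $d_\otimes(G,G_{1,1})\geq\eps$ by \Cref{def:eps}), and transfer \Cref{lemma:mixing} to $\haar_\eps$ gates via the $d_\otimes$-invariance and Lipschitzness of $|\det M_i|$ combined with \Cref{lemma:lip}. The one step you leave as an ``obstacle''---the additive $\eps L$ shift destroying the power law at $x=0$ and hence possibly the finiteness of $\E[\Uplambda_k^{-t}]$---is closed in the paper exactly along the line you gesture at: plugging $x=\eps$ into the transferred bound already shows the discretization places only $\poly(\eps)$ mass on gates with $|\det M_i|\leq\eps$, so a union bound over the gates on the path discards these events at a cost of $\poly(\eps)D$ added to the failure probability $\gamma$ (not to the mixing exponent), and conditioned on the complement one has $\Pr[|\det M_i|\leq x]\leq C'(145x)^C$ for all $x\geq 0$, recovering the same power-law exponent, a finite negative moment, and the $(2^{-D}\gamma)^{O_k(1)}$ mixing factor; in particular no separate argument about the mass near the degenerate set, and no quasi-polynomial loss, is needed.
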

	\begin{proof}
		The algorithm is basically the same as \Cref{alg:learngate}, except that we now iterate $G$ through the support of $\haar_\eps(4)$. The proof is also mostly the same: When $G=G_{1,1}$, we have $\pi=\pi'$ and thus $G$ will not be rejected; Otherwise $d_\otimes (G,G_{1,1})\geq \eps$, and the proof goes through as long as we have the corresponding version of \Cref{thm:mixing}.
		
		Since \Cref{thm:mixing} is proved via \Cref{lemma:mixing}, it suffices to prove \Cref{lemma:mixing} where $\haar(4)$ is replaced with $\haar_\eps(4)$. The crux is to prove the inequality \cref{eq:m2}, that is for some $C,C'>0$, it holds for all $x\geq 0$ that
		\begin{equation}\label{eq:l1}
		\Pr[\left|\det M\right|\leq x]\leq C'x^C,
		\end{equation}
		where $M$ is the matrix form of the linear map $M:\tau-\tau'\mapsto \Tr_A[G(\tau-\tau')G\ct]$, for $\tau-\tau'$ in a certain fixed 3-dimensional subspace of $\Delta_2$.
		
		Note that by \Cref{prop:bound}, each entry of $M$ is in $[-2,2]$, while by \cref{eq:deg}, each entry of $M$ is a Lipschitz function of $G$ under distance $d_\diamond$. This is because when $G,G'\in\ugrp(4)$ that $d_\diamond(G,G')\leq\delta$ corresponds to matrices $M$ and $M'$, for any $\tau-\tau'\in\Delta_2$ and $\sigma\in\Delta_1$ with $\FrN{\tau-\tau'}=\FrN{\sigma}=1$ we have
		\begin{align}
		\left|\Tr[\sigma M(\tau-\tau')]-\Tr[\sigma M'(\tau-\tau')]\right|
		&=\left|\Tr\left[(I\otimes \sigma)\left(G(\tau-\tau') G\ct - G'(\tau-\tau') {G'}\ct\right)\right]\right| \nonumber\\
		&\leq\TrN{G(\tau-\tau') G\ct - G'(\tau-\tau') {G'}\ct} \nonumber\\
		&\leq \TrN{G\tau G\ct - G'\tau {G'}\ct} + \TrN{G\tau' G\ct - G'\tau' {G'}\ct} \nonumber\\
		&\leq 2\delta.
		\end{align}
		As $\det M$ consists of $6$ monomials of degree $3$ in the entries of $M$, we conclude that $\left|\det M\right|$ is $2^3\cdot 3\cdot 6=144$-Lipschitz in $G$ under $d_\diamond$. But when $G'=(U_1\otimes U_2)G$ we have $M'=U_2MU_2\ct$ which means $\left|\det M\right|=\left|\det M'\right|$, and thus $\left|\det M\right|$ is also 144-Lipschitz in $G$ under $d_\otimes$.
		
		As a result, since we already know that \cref{eq:l1} holds when $G\sim \haar(4)$, by \Cref{lemma:lip} we have
		\begin{equation}
		\Pr_{G\sim \haar_\eps(4)}[\left|\det M\right|\leq x] \leq \Pr_{G\sim \haar(4)}[\left|\det M\right|\leq x+144\eps] \leq C'(x+144\eps)^C.
		\end{equation}
		However, this will not yield \Cref{lemma:mixing} as the bound is even non-zero when $x=0$. Fortunately, we can simply use the union bound to ignore the cases when $\left|\det M\right|\leq \eps$ for any gate $G$ on the path, which will only add $\poly(\eps)D$ to the error probability $\gamma$. And conditioned on $\left|\det M\right|>\eps$, for every $x\geq 0$ we have
		\begin{equation}
		\Pr_{G\sim \haar_\eps(4)}[\left|\det M\right|\leq x] \leq C'(145 x)^C,
		\end{equation}
		which allows us to prove \Cref{lemma:mixing} on $\haar_\eps(4)$, albeit with different constants.
	\end{proof}
	
	As a result, we can use \Cref{thm:learngate3} to exactly learn the circuit $\mc C$ layer by layer, and obtain \Cref{thm:learncirc} formally as follows.
	
	\begin{theorem}\label{thm:learncirc2}
		Let $\mc C$ be a brickwork random quantum circuit on $n$ qubits of depth $D$, where each gate is independently drawn from $\haar_\eps(4)$. Given oracle access to $\mc C$, for every $\gamma\in(0,1)$ there is an algorithm that with probability at least $1-\gamma$ outputs $\mc C$ in time $\poly(n,2^D,1/\gamma)$.
	\end{theorem}
		

\section*{Acknowledgments}
 B.F., S.G., and W.Z. ~acknowledge support from AFOSR (FA9550-21-1-0008). The authors thank Anurag Anshu, Adam Bouland, Lijie Chen,  Jonas Haferkamp, Robert Huang, Issac Kim, Yunchao Liu, Tony Metger, Marcus Michelen, Tommy Schuster and Kewen Wu for helpful comments and discussions, and thank Jacob Watkins for pointing out a mistake in our previous version of \Cref{sec:disc}. This material is based upon work partially supported by the National Science Foundation under Grant CCF-2044923 (CAREER), by the U.S. Department of Energy, Office of Science, National Quantum Information Science Research Centers (Q-NEXT) and by the DOE QuantISED grant DE-SC0020360.  
	\printbibliography
	
\end{document}